\documentclass[reprint,prd,onecolumn,longbibliography,eqsecnum,floats,aps,amsmath,%
amssymb,nofootinbib,showpacs,notitlepage,superscriptaddress]{revtex4-1}

\usepackage{amsmath,amssymb,amsfonts,amsthm}
\usepackage{slashed}
\usepackage{graphicx}
\usepackage{enumerate}
\usepackage{psfrag}   
\usepackage{calc}

\input cyracc.def
\newfam\cyrfam
\font\tencyr=wncyr10
\font\sevencyr=wncyr7
\font\fivecyr=wncyr5
\def\cyr{\fam\cyrfam\tencyr\cyracc}
\textfont\cyrfam=\tencyr \scriptfont\cyrfam=\sevencyr
\scriptscriptfont\cyrfam=\fivecyr
\newcommand{\ppi}{{\cyr p}}


\usepackage{IHsymbols}

\begin{document}
\allowdisplaybreaks

\title{Neighbourhoods of Isolated Horizons and their stationarity}

\author{Jerzy \surname{Lewandowski}}
	\email{Jerzy.Lewandowski@fuw.edu.pl}
	\affiliation{Instytut Fizyki Teoretycznej, Wydzia{\l} Fizyki Uniwersytetu 
		Warszawskiego, Ho\.za 69, 00-681 Warszawa, Poland.}
\author{Tomasz \surname{Paw{\l}owski}}
	\email{tomasz.pawlowski@unab.cl}
	\affiliation{Departamento de Ciencias F\'isicas, Facultad de Ciencias Exactas,
    Universidad Andres Bello, Av.~Rep\'ublica 220,  Santiago 8370134, Chile.}
	\affiliation{Instytut Fizyki Teoretycznej, Wydzia{\l} Fizyki Uniwersytetu 
		Warszawskiego, Ho\.za 69, 00-681 Warszawa, Poland.}

\begin{abstract}
	A distinguished (invariant) Bondi-like coordinate system is defined in the 
	spacetime neighbourhood of a non-expanding horizon of arbitrary dimension via 
	geometry invariants of the horizon. With its use, the radial expansion of a 
	spacetime metric about the horizon is provided and the free data needed to 
	specify it up to given order are determined in spacetime dimension $4$. 
	For the case of an electro-vacuum 
	horizon in $4$-dimensional spacetime the necessary and sufficient conditions for the 
	existence of a Killing field at its neighbourhood are identified as differential 
	conditions on the horizon data and data on null surface transversal to the horizon.
\end{abstract}

\pacs{04.70.Bw, 04.50.Gh}

\maketitle

\section{Introduction}
\label{sec:intro}

Systematic studies of black holes in various approaches to quantum gravity as well 
as accurate description of the dynamical evolution of these exotic objects require 
a quasi-local description formalism -- where a black hole can be treated as 
``object in the lab'' and the global spacetime structure of the universe far away 
from it can be neglected. Among several approaches to construct such formalism 
\cite{pn-bh,h-1993} one of considerable success is the theory of Isolated Horizons 
\cite{abdfklw,ak-idh-rev,gj-4d}. This approach was originally inspired by ideas 
of Pejerski and Newman \cite{pn-bh}, next shaped into a solid formalism by Ashtekar 
\cite{ack-g} and subsequently developed by many researchers. Its main feature is 
the representation of a black hole in equilibrium through its surface -- the 
non-expanding (or isolated) horizon -- a null cylinder of codimension $1$ and of 
compact spatial slices embedded in a Lorentzian spacetime. The black hole is 
characterized by the geometry (and possibly matter fields) data on this surface only. 
Both geometry aspects \cite{ack-g,abl-g} and the mechanics \cite{abf-m,abl-m} have 
been systematically studied in spacetime dimension $4$ and then extended to general 
dimension \cite{aw-3d,lp-g,klp-m}, the latter including in particular asymptotically 
Anti-deSitter spacetimes \cite{apv-ads}. Also various matter content at the horizon 
has been considered \cite{ac-dil,*acs-hair,*acs-scalar} as well as the properties 
of symmetric horizons \cite{lp-symm} and their relation with standard black hole 
solutions \cite{lp-kerr,lp-extr}. The formalism has been further extended to the 
non-equilibrium situations through the \emph{Dynamical Horizons} 
\cite{ak-dh-prl,*ak-dh-det} (see also \cite{ak-idh-rev}). It is vastly applied in 
numerical relativity (see for example \cite{s-dh-intro,*bshh-bbh,*jvg-inter,*ch-bump}) 
as well as in black hole description in loop quantum gravity \cite{al-status} 
-- especially as the basis for entropy calculations (see for example 
\cite{aev-entropy,*cdf-entropy,*abfdv-entropy}). The extension to this formalism has 
found applications also in supergravity \cite{lb-super,*bl-super} and string theory 
inspired gravity \cite{l-string,*l-sext1,*lb-sext,*l-sext2}.

The quasi-locality of the theory is a great advantage, as only the geometry objects 
at the horizon are relevant in the description, however for this very reason one 
misses the information about the black hole neighbourhood. However, the success of 
the formalism of \emph{near horizon geometries} \cite{kl-near1,*kl-near2} shows clearly, 
that there is a strong demand for any black hole description 
method to be able to also ``handle'' its neighbourhood, a feature particularly 
relevant for the studies of black hole spacetimes in context of AdS/CFT correspondence 
\cite{p-ads-cft} and in numerical probing of the late stages of black hole mergers. 

This article is dedicated to supplying such extension within the isolated horizon 
formalism. Its main ideas have been originally published in \cite{p-phd}. 
The principal part of presented work is providing the convenient way of 
describing the spacetime geometry in the neighbourhood of the horizon through the 
\emph{Bondi-like} coordinate system originally introduced in \cite{l-sem}. Here, 
this construction is extended to arbitrary spacetime dimension and horizon spatial 
slice topology and its properties are studied. It is shown to be relatively 
convenient in use, providing for example a well defined invariant radial spacetime 
metric expansion about the horizon. It provides a solid frame for addressing the 
questions like the conditions for the existence of a Killing vector fields in the 
horizon neighbourhood, which problem is studied here in detail in context of the 
electro-vacuum black hole in $4$-dimensional spacetime.
\\

The paper is organized as follows: We start in section~\ref{sec:intro-geom} with 
short introduction of the geometric structure of general non-expanding horizons
in arbitrary dimension and discuss the definition and those properties of the horizon
symmetries, which will be needed in the further studies of Killing horizons. Next, 
in section~\ref{sec:Bondi}, still keeping the general dimension, we construct the 
Bondi-like coordinate system which later will provide the basis to describe the
structures at the neighbourhood. In the same section we study the properties and form
of Killing fields possibly present there. The material of these two sections is then used
in section~\ref{sec:4d-neigh}, where we restrict the studies to the horizons in 
$4$-dimensional electrovac spacetime, introducing in particular convenient Newman-Penrose
null frame (consistent with Bondi-like coordinate system) for any matter type and studying
the initial value problem for the ``radial'' metric expansion. The interest is then 
restricted just to horizons embeddable in the evectrovac spacetime. There the Maxwell
evolution equations change the mentioned initial value problem, making it stronger 
(less initial data needed). This structure is then applied in section~\ref{sec:evac-kvf},
where we formulate the conditions necessary and sufficient for the non-expanding horizon
to be Killing horizon. Analogous conditions for several classes of Killing fields are
derived in section~\ref{sec:evac-kvf-axihel}. We summarize the results in the concluding
section~\ref{sec:concl}. In order to not disrupt the reasoning flow some more complicated 
proofs as well as the definition of Newman-Penrose frame have been moved to 
appendices~\ref{app:NP} through~\ref{app:4d-kill}.

\section{Geometry of a non-expanding horizon}
  \label{sec:intro-geom}

\subsection{Non-expanding horizons}
  \label{sec:neh}

In this section we introduce the notation, recall the definition and
outline the properties of non-expanding horizons in $n$ dimensions 
\cite{abdfklw,abl-g,lp-g,lp-symm}. 
We often use the abstract index notation
and the following convention: the $n$-spacetime indexes are denoted
by $\alpha,...,\nu$,  every $n-1$ dimensional vector spaces indexes
are denoted by $a,...,d$ and each $n-2$ vector space indexes are
denoted by $A,...,D$.

\subsubsection{Definition, the induced degenerate metric, 2-volume and Hodge $*$}
  \label{ssec:neh-def}

We start our consideration with an $(n-1)$-dimensional null surface $\ih$
embedded in an $n$-dimensional, time oriented spacetime $\M$. In Section
\ref{sec:4d-neigh} and further
$n=4$. The spacetime metric tensor $g_{\mu\nu}$ of the signature
$(-,+,\cdots,+)$ is assumed to satisfy the Einstein field equations
(possibly with matter and cosmological constant). 
Throughout the paper, we assume that all the matter fields possibly
present at the surface $\ih$  satisfy the following:
\begin{cond}\label{c:energy}(Stronger Energy Condition)
  At every point of $\ih$, for every future oriented null vector
  $\ell$ tangent to $\ih$, the vector $-T^{\mu}{}_{\nu}\ell^{\nu}$
  is causal and future oriented, where $T_{\mu\nu}$ is the energy-momentum 
  tensor.
\end{cond}
We  denote the degenerate metric tensor induced at $\ih$ by
$q_{ab}$. The sub-bundle of the tangent bundle $T(\ih)$ defined by
the null vectors is denoted by $L$ and referred to as the null
direction bundle. Given a vector bundle $P$, the set of sections
will be denoted by $\Gamma(P)$.

To recall the definition of a non-expanding horizon consider the
metric tensor $q_{AB}$ induced by the tensor $q_{ab}$ in the fibers
of the quotient bundle $T(\ih)\, / \,L $. Denote the inverse metric
tensor defined in the fibers of the dual bundle $(T(\ih) / L)^*$ by
$q^{AB}(x)$ (that is $q_{AB}q^{BC} =\delta_A^C$).

We will be assuming that $\ih$ is a non-expanding horizon, in the
following sense:
\begin{defn}\label{def:neh}
  A null submanifold $\ih$ of codimension 1 embedded in spacetime
  satisfying the
  Einstein field equations is called a {\it non-expanding horizon}
  (NEH) if:
  \begin{enumerate}
    \item for every point $x\in\ih$ for every
      null vector $\ell^a$ tangent to $\ih$ at $x$,
      \begin{equation}
        q^{AB}\lie_\ell q_{AB}\ =\ 0 \ ,
      \end{equation} 
      and
    \item $\ih$ has the product structure $\bas\times\I$,
     that is there
      is an embedding
      \begin{equation}
        \bas\times\I\ \rightarrow\ \M
      \end{equation}
      such that:
      \begin{enumerate}[ (i)]
        \litem $\ih$ is the image,
        \litem $\bas$ is an $n-2$ dimensional compact and
          connected \footnote{In the case $\bas$ is not connected all
            the otherwise global constants (like surface gravity) remain
            constant only at maximal connected components of the
            horizon.} manifold (referred to as the horizon base space),
        \litem $\I$ is an open interval,
        \litem for every maximal null curve in $\ih$ there is
          $\hat{x}\in\bas$ such that the curve is the image of
          $\{\hat{x}\}\times \I$.
      \end{enumerate}
  \end{enumerate}
\end{defn}
Each (non-vanishing  in a generic set) null vector field $\ell$
defined in $\ih$ (a section of $L$) determines a function
$\sgr{\ell}$ referred to as surface gravity\footnote{We use
dimensionless coordinates in spacetime, therefore our surface
gravity is also dimensionless.} of $\ell$, such that
\begin{equation}\label{eq:geod}
  \ell^{\mu}\nabla_{\mu}\ell^{\nu} = \sgr{\ell}\ell^{\nu} \ .
\end{equation}
In particular, given a NEH there always exists a nowhere vanishing null
vector field $\ell_o^a$ of the identically vanishing surface
gravity, $\sgr{\ell_o}=0$. One can also choose a null vector field
$\ell^a$ of $\sgr{\ell}$ being an arbitrary constant,\footnote{The
	first one, $\ell_o$  can be defined by fixing appropriately affine
	parameter $v$ at each null curve in
  $\ih$. Then, the second vector field is just $\ell=\sgr{\ell}v\ell_o$.
}
\begin{equation}\label{eq:kappaconst}
  \sgr{\ell}\ =\ \const \ .
\end{equation}
That vector field $\ell^a$ can vanish in a harmless (for our
purposes) way on an $(n-2)$-dimensional section of $\ih$ only.

From Stronger Energy Condition \ref{c:energy} it follows in particular that
$T_{\ell\ell}\geq 0$. 
This in turn implies via the generalized
Raychaudhuri equation that the flow $[\ell]$ preserves the degenerate
metric $q$
\begin{equation}\label{eq:lie_q}
  \lie_{\ell}q_{ab}\ =\ 0 \ ,
\end{equation}
and using again the Stronger Energy Condition it can be shown that at every point  
$p$ of $\ih$ and for every $X\in T_p\ih$
\begin{equation}
	T_{a\beta}X^a\ell^\beta\ =\ \Ricn{n}_{a\beta}X^a\ell^\beta =\ 0 \ .\label{RicclX}
\end{equation}

The property \eqref{eq:lie_q} above combined with  $\ell^aq_{ab}=0$
means that $q_{ab}$ is the pullback of a certain metric tensor field
$\hq_{AB}$ defined on $\bas$. The horizon base space $\bas$ can be identified
with the space of null curves tangent to $\ih$.  Its manifold structure is unique. 
The pullback map is defined by the natural (and also unique) projection,
\begin{subequations}\label{eq:bas}\begin{align}
  \ppi:\ih\ &\rightarrow\ \bas \ , &
  q_{ab}\ &=\ (\ppi^*\hq)_{ab} \ . \tag{\ref{eq:bas}}
\end{align}\end{subequations}

The pullback of the base space $\bas$ $2$-volume form $\hat{\eta}_{AB}$,
\begin{equation} 
\ppi^*\hat{\eta}_{ab}\ =: \eta_{ab} 
\end{equation}
defines the canonical area 2-form on $\ih$ and its restriction $\eta_{AB}$
to a two form in $T(\ih)/L$. $\eta_{AB}$ is used to define
the horizon Hodge dualization $\star_\ih: (T(\ih)/L)^*\rightarrow (T(\ih)/L)^*$, 
\begin{equation}
  \star_{\ih}v_A\ :=\ \eta_{AB}q^{BC}v_C \ .  
\end{equation}

\subsubsection{The induced covariant derivative and the rotation $1$-form}
  \label{ssec:rot}

It can be shown by using \eqref{eq:lie_q}, that the  space time covariant
derivative $\nabla_\alpha$ determined by the metric tensor $g_{\alpha\beta}$  
preserves the tangent bundle $T(\ih)$. Indeed, for every pair of vector fields 
$X,Y\in \Gamma(T(\ih))$, 
\begin{equation} 
	\nabla_XY\in \Gamma(T(\ih)) \ .	
\end{equation}
Therefore, there exists in $T(\ih)$ an induced covariant derivative $D_a$, such 
that for every pair of vector fields $X,Y\in \Gamma(T(\ih))$ the following holds
\begin{equation}
  D_XY^a\ :=\ \nabla_XY^a \ .
\end{equation}
Its action on covectors, sections of the dual bundle $T^*\ih$ is
determined by the Leibniz rule. Together with the induced metric
the covariant derivative  constitutes the {\it geometry of a NEH}
$(q_{ab},D_a)$.

The connection $D_a$ preserves in particular the null direction
bundle $L$, thus for every $\ell\in\Gamma(L)$ the derivative $D_a\ell^b$ is 
proportional to $\ell^b$ itself,
\begin{equation}\label{eq:omega_def}
  D_a\ell^b\ =\  \w{\ell}{}_a\ell^b \ ,
\end{equation}
where  $\w{\ell}{}_a$ is a $1$-form defined uniquely on this subset
of $\ih$ on which $\ell\neq0$ is defined. We call $\w{\ell}{}_a$
the {\it rotation 1-form potential} (see \cite{abl-g,lp-g}).

The evolution of $\w{\ell}{}_a$ along the null flow on $\ih$ is responsible 
for the $0$th Law of the non-expanding horizon thermodynamics:
the rotation 1-form potential $\w{\ell}{}$ and surface gravity of
$\ell$, related to $\w{\ell}_a$ via
\begin{equation}
  \sgr{\ell}\ =\ \ell^a\w{\ell}_a
\end{equation}
satisfy the following constraint:
\begin{equation}\label{eq:0th}
  \lie_{\ell}\w{\ell}{}_a\ =\ D_a\sgr{\ell} \
\end{equation}
implied by (\ref{RicclX}).
This tells us in particular, that there is always a choice of the
section $\ell$ of the null direction bundle $L$ such that
$\w{\ell}{}$ is Lie dragged by $\ell$. Indeed, we can always find a
non-trivial section $\ell$ of $L$ such that $\sgr{\ell}$ is
constant ($0$ for $\ell$ defined by an affine parametrization of the null 
geodesics tangent to $\ih$). Throughout  the remaining part of the article 
we will restrict our consideration to fields $\ell\in \Gamma(L)$ of such 
class, or equivalently satisfying
\begin{equation}\label{eq:lieellomega}
  \lie_\ell \w{\ell}\ =\ 0 \ .
\end{equation}

Upon rescalings $\ell\mapsto\ell'=f\ell$ (where $f$ is a real
function defined at $\ih$) of a section $\ell^a$ of $L$ the rotation
$1$-form changes as follows
\begin{equation}\label{eq:omega'}
  \w{\ell'}{}_a\ =\ \w{\ell}{}_a + D_a \ln f \ .
\end{equation}
The requirement that both $\sgr{\ell}$ and $\sgr{\ell'}$ are constants
restricts the form of $f$ to the following one
\begin{equation}\label{eq:f}
  f\ =\ \begin{cases}
          B e^{-\sgr{\bsl}u} + \fracs{\sgr{\bsl'}}{\sgr{\bsl}}
            & \sgr{\bsl}\neq 0 \\
          \sgr{\bsl'} u - B  &  \sgr{\bsl}= 0 \\
        \end{cases} \ ,
\end{equation}
where $u$ is any function defined on $\ih$ such that
\begin{equation}\label{lu}
  \ell^aD_au\ =\ 1 \ ,
\end{equation}
and $B$ is an arbitrary function constant along null geodesics of $\ih$.

\subsubsection{The constraints}
  \label{ssec:degrees}
  
The non-expanding horizon geometry $(q_{ab},D_a)$  is constrained by
the Einstein equations. We have already used some of them above.
A \emph{complete} set of the constraints on horizon geometry
$(q_{ab},\,D_a)$ is encoded in the following identity
\begin{equation}\label{eq:[l,D]1}
  [\lie_\ell,D_a]X^b\ =\ \ell^b X^c
  \left[ D_{(a}\w{\ell}{}_{c)}\ +\ \w{\ell}{}_a\w{\ell}{}_c
    + \frac{1}{2}\left(\Ricn{n}_{ac}-\ppi^*\Ricn{n-2}_{ac}\right)
  \right] \ =:\ \ell^bN^{(\ell)}_{ac}X^c  \ ,
\end{equation}
which holds for every $\ell\in\Gamma(L)$ and $X\in \Gamma(T(\ih))$,
where $\Ricn{n-2}{}_{AB}$  is the Ricci tensor of the metric tensor
$\hat{q}_{AB}$ induced in base space $\bas$ whereas $\Ricn{n}_{ac}$ is the 
pullback to $\ih$ (by the embedding map $\ih\rightarrow {\cal M}$) of the spacetime 
Ricci tensor. The constraints are given by replacing $\Ricn{n}_{ac}$ with 
$8\pi G (T_{ac} - \frac{1}{2}Tq_{ac})$, where the pullback onto $\ih$ of the 
stress energy tensor $T_{ac}$ satisfies on $\ih$ the condition 
\begin{equation}\label{eq:lie_T-1}
 \ell^aT_{ab}\ = 0
\end{equation}
(see \ref{RicclX}). 
In particular, the $0$th law \eqref{eq:0th} is given by contracting
(\ref{eq:[l,D]1}) with a null vector $\ell$. The remaining constraints determine
the evolution of some other components of $D_a$ along the null geodesics tangent 
to $\ell$ provided the energy momentum tensor is given. In the Einstein vacuum or 
Einstein Maxwell vacuum cases the constraints are solved explicitly \cite{lp-g}. 

Throughout this paper we are making a stronger assumption, namely that on $\ih$ 
in addition to \eqref{eq:lie_T-1} the pull-back onto $\ih$ (by the embedding 
map $\ih\rightarrow {\cal M}$) of the energy-momentum tensor is Lie dragged by 
(any)\footnote{In fact, if this condition is satisfied by any given non-vanishing 
	$\ell^a$ then it is satisfied by every $\ell^a$ null and tangent to $\ih$.
} null vector field $\ell$ tangent to $\ih$
\begin{equation}\label{eq:lie_T-2}
	\lie_\ell T_{ab}\ =\ 0.
\end{equation} 
For the electromagnetic field considered later in this paper this condition is 
a consequence of the Einstein-Maxwell equations, therefore it is satisfied automatically. 
For the time being, however, we just assume it is true.

\subsubsection{The invariants} 
  \label{ssec:invariants}

For every non-expanding horizon $\ih$ which satisfies the assumptions made in
the previous subsection, the geometry $(q,D)$ is analytic along the null
geodesics in any affine coordinate. A given non-expanding horizon $\ih$  can be
incomplete in that coordinate, however one can consider  its (non-embedded)
maximal analytic extension $\bar{\ih}$ endowed with the analytic extension
$(\bar{q},\bar{D})$ of the geometry.  This is what we do in this section. We
will be using the same notation as above however we will mark all the symbols
referring to $\bar{\ih}$ by the extra $\bar{\cdot}$. Finally,  all the invariant
structures introduced on $\bar{\ih}$ determine  unique restriction to an
original, given unextended $\ih$.

Thus far we reduced the freedom in choice of a null vector field $\bar{\ell}$ 
tangent to $\bar{\ih}$ to vector fields which satisfy \eqref{eq:kappaconst} defined
up to the transformations given by \eqref{eq:f}. The freedom can be
further reduced by imposing some condition on the tensor
${N}^{(\bar{\ell})}_{ab}$ (\ref{eq:[l,D]1}). We also notice, that from (\ref{eq:0th},
\ref{eq:lieellomega}) it follows that
\begin{equation}
  \bar{\ell}^a{N}^{(\bar{\ell})}_{ab}\ =\ 0 \ ,
\end{equation}
hence this tensor defines a unique tensor ${N}^{(\bar{\ell})}_{AB}$ in the fibers of
$T(\bar{\ih})\,/\,\bar{L}$.

Let us now introduce a specific class of $\bar{\ell}$, namely:
\begin{defn}\label{def:natural}
  {\it A natural  vector field} $\bar{\ell}$ in $\bar{\ih}$, is a
  tangent null vector field (that is a section of $\bar{L}$) non-vanishing
  on  a generic (open and dense) subset of $\bar{\ih}$ which satisfies the following 
  conditions
  \begin{subequations}\label{invvect}\begin{align}
      \sgr{\bar{\ell}}\ &=\ 1 \ ,  &  \bar{q}^{AB}{N}^{(\bar{\ell})}_{AB}\ &=\ 0 \ .
  \tag{\ref{invvect}}\end{align}\end{subequations} 
\end{defn}
The generic existence and uniqueness of the natural vector field in
a given $\bar{\ih}$, was shown in \cite{lp-g}  (see Eq.~(6.22) 
therein and the following paragraph). We review the argumentation proving these 
properties further below. 
In the case when the natural vector field is unique we will call it 
{\it the invariant vector field of the $\bar{\ih}$ geometry}.

Given a non-zero null vector field $\bar{\ell}$ in $\bar{\ih}$ such that
$\sgr{\bar{\ell}}=\const\not=0$, a unique foliation by spacelike, $n-2$
dimensional sections of $\ih$ can be fixed, by using the rotation
1-form potential $\w{\bar{\ell}}$. In particular, one can choose a section
$\bar{\sigma}:\hat{\ih}\rightarrow {\bar{\ih}}$ of
$\bar{\pi}:{\bar{\ih}}\rightarrow\hat{\ih}$ such that the pullback
$\bar{\sigma}^*\w{\bar{\ell}}$ is divergence free,
\begin{equation}
  d\star\bar{\sigma}^*\w{{\bar{\ell}}}\ 
  =\ \bar{q}^{AB}\bar{D}_A\bar{\sigma}^*\w{{\bar{\ell}}}_B \ 
  =\ 0 \ .
\end{equation}
Those sections of $\bar{\ih}$ are called {\it good cuts} \cite{abl-g} and are defined
uniquely modulo the action of the flow of the vector field ${\bar{\ell}}$.
They set a foliation of ${\bar{\ih}}$. If $\bar{\ell}$ is the invariant vector
field of the $\bar{\ih}$ geometry, then the corresponding good cut
foliation will be called the {\it invariant foliation of the $\bar{\ih}$
geometry}. One has to be aware though, that whereas the slices of the invariant 
foliation of $\bar{\ih}$ are diffeomorphic to the base space $\bas$, the restriction 
of a slice of $\bar{\ih}$ to $\ih$ may be a proper subset of that slice. In other 
words, the slices of $\ih$ may be non-global sections of $\ppi:\ih\rightarrow \hat{\ih}$. 
Given a null vector field $\bar{\ell}$ and a foliation there is a unique
differential 1-form $\bar{n}_a\in \Gamma(T^*(\bar{\ih}))$ orthogonal to the
leaves of the foliation and normalized by
\begin{equation}
  \bar{n}_a\bar{\ell}^a\ =\ -1 \ .
\end{equation}
We will call $\bar{n}_a$  the {\it invariant co-vector of the $\bar{\ih}$
geometry} if $\bar{\ell}$ and the foliation are, respectively, the invariant
vector field of the $\bar{\ih}$ geometry and the invariant foliation of
$\bar{\ih}$.  Finally, the invariant vector field, foliation and covector field
of $(\bar{\ih},\bar{q}_{ab},\bar{D}_a)$ are restricted to a given NEH
$(\ih,q_{ab},D_a)$ the geometry of which was the starting point of an
extension $(\bar{\ih},\bar{q}_{ab},\bar{D}_a)$. The uniqueness of this
extension guaranties the uniqueness of the resulting invariant structures
defined on $(\ih,q_{ab},D_a)$.   

The natural vector field defined above exists and is unique for 
a certain class of NEH geometries, denoted as \emph{generic} and defined as follows:
Let $(\ih,q_{ab},D_a)$ be a NEH. Choose any null vector field $\ell'$ such that
$\sgr{\ell'}=const$ and any global section $\bar{\sigma}:\bas\rightarrow\bar{\ih}$. 
The existence of a natural vector field depends on the invertability of a certain 
operator introduced in \cite{lp-g}. It involves the following  ingredients defined 
on $\bas$: the induced metric tensor $\hq_{AB}$ \eqref{eq:bas}, the corresponding
covariant derivative $\hat{D}_A$ and the Ricci scalar $\hRicn{n-2}$, the
pullback $\hat{\omega}_A$ of the rotation 1-form potential $\w{\ell'}{}_a$ by
the section $\bar{\sigma}$, the trace $\hat{T}$ of the pullback $\hat{T}_{AB}$
of the energy-momentum tensor $T_{\alpha\beta}$, and $T:=T^{\alpha}{}_\alpha$. The
operator is
\begin{equation} \label{eq:Mdef}
  \hat{M}\ =\ \hat{D}^A\hat{D}_A + 2\hat{\omega}^A\hat{D}_A + \hat{\omega}_A\hat{\omega}^A 
  + \hat{D}_A\hat{\omega}^A +4\pi G\left(\hat{T} - T-\hRicn{n-2}\right) \ .   
\end{equation}
If the kernel of this operator is trivial, then there exists exactly one natural
vector field $\ell$. Suppose, for given data the operator has a nontrivial
kernel. Then, the operator corresponding to a new, gently perturbed data, say
$T'_{\alpha\beta}\ =\ T_{\alpha\beta} + \delta \Lambda g_{\alpha\beta}$, has 
trivial kernel for a non-zero perturbation $|\delta\Lambda|\in (0,\epsilon)$.
That shows the genericity of the existence and uniqueness of the natural vector.
It is important to point out, that the operator $\hat{M}$ defined above was 
constructed with use of more data, than $(q_{ab},D_a)$, since the objects
depending on the choice of a null vector field $\ell'$ and section of $\ih$ are 
present on the right hand side of \eqref{eq:Mdef}. However, the dimension of the 
kernel of this operator is independent of those choices. 

If the kernel is nontrivial, on the other hand, then $\ih$ either admits no natural 
vector field or admits more than one. The latter happens for example if $(\ih,q_{ab},D_a)$ 
has $2$-dimensional group of null symmetries \cite{lp-symm}.   
\medskip

In conclusion:
\begin{defn}\label{invgen}
  A NEH  $({\ih},q_{ab}, D_a)$ is invariant-generic if it defines the invariant 
  vector field.
\end{defn}
 
In an invariant-generic case, we have defined  the following unique structures of $({\ih},q_{ab}, D_a)$: 
\begin{itemize}
  \item an invariant tangent null vector field $\ell$, 
  \item an invariant foliation by good cuts
    --sections of $\pi:\ih\rightarrow\hat{\ih}$-- preserved (locally) by the flow 
    of $\ell$, 
  \item a function $v:\ih\rightarrow \mathbb{R}$, constant on the leaves of the 
		foliation and such that
    \begin{equation} 
      \ell^aD_av\ =\ 1 \ ,
    \end{equation}
    invariant up to $v\mapsto v+v_o$, $v_o\in\mathbb{R}$,
  \item an invariant covector field 
    \begin{equation} 
      n\ =\ -\rd v
    \end{equation}
    orthogonal to the leaves of the invariant foliation.
\end{itemize}

The reason for the name `invariant' is that given two NEHs $(\ih,q_{ab},D_a)$ and 
$(\ih',q'_{ab},D'_a)$ related by an isomorphism $\phi:\ih\rightarrow\ih'$, the 
corresponding invariants are mapped to each other by $\phi_*$, $\phi$, $\phi^*$ 
and $\phi^*$ respectively.  
One has to remember though, that the slices of $\ih$ may be non-global sections 
of $\ppi:\ih\rightarrow \hat{\ih}$.

\subsection{Symmetric NEH}
  \label{sec:sym-gen}

\subsubsection{Definitions, known results}\label{sec:sym-known}

Given a non-expanding horizon $\ih$, an {\it infinitesimal symmetry} of it is 
a non-trivial vector field, $X\in\Gamma(T(\ih))$ such that
\begin{equation}\label{eq:def-symm}
  \lie_X q_{ab}\ =\ 0, \quad {\rm and}\quad [\lie_X,\,D_a]\ =\ 0 \ .
\end{equation}
Each Killing field defined in a spacetime neighbourhood of a NEH $\ih$
and tangent to $\ih$ induces an infinitesimal symmetry of $\ih$. Therefore,
recalling the properties of symmetric NEHs (studied in \cite{lp-symm}) is a 
natural starting point for the current paper. Below we will briefly list 
those of their properties which are relevant for our studies. 

Every infinitesimal symmetry $X$ preserves the null direction, that is
for every null vector field $\ell\in\Gamma(T(\ih))$, 
\begin{equation} 
  [X,\ell]\ =\ f\ell, \qquad  f:\ih \rightarrow \mathbb{R}.
\end{equation} 
Due to this property the projection $\pi:\ih\rightarrow \bas$ pushes $X$
forward to a uniquely defined vector field on $\bas$, that is there is a vector 
field $\hat{X}\in\Gamma(T(\bas))$ such that  
\begin{equation}
  \ppi_*X = \hat{X}. 
\end{equation}
The vector field $\hat{X}$ is a Killing vector of the geometry $(\bas,\hat{q}_{ab})$.

Every infinitesimal symmetry $X$ defines a unique analytic extension
$\bar{X}$ to the maximal analytic extension $\bar{\ih}$ of $\ih$. The vector
field $\bar{X}$ defines  a global flow on $\bar{\ih}$, and the flow
preserves the geometry $(\bar{q},\bar{D})$ \cite{lp-symm}.
Using this property, in this subsection we will consider 
symmetric  maximal analytic extensions of the NEHs.  
  
We distinguish several classes of infinitesimal symmetries. 
One of them is  \emph{null infinitesimal symmetry}, corresponding to $X^a$ being 
a null vector field. 
\begin{cor}\label{nullsym} 
  Let $X$ be a null infinitesimal symmetry of $\ih$. Then
  \begin{itemize}
    \item $\rd\sgr{X}\ =\ 0$
    \item if $\sgr{X}\not= 0$, then there is a function $v:{\ih}\rightarrow \mathbb{R}$ 
      such that $dv\not=0$ at every point of $\ih$, and 
      \begin{equation}
				X^aD_av \ =\ \sgr{X}v
      \end{equation}
    \item if $\sgr{X}= 0$, then there is a function $v:{\ih}\rightarrow \mathbb{R}$
      such that
      \begin{equation}
				X^bD_b(X^aD_av)\ =\ 0
      \end{equation}
      on $\ih$, and 
      \begin{equation}
				\rd(X^aD_av)\ \neq\ 0
      \end{equation}
      at each point such that $(X^aD_av)=0$.
  \end{itemize}
\end{cor}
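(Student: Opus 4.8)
The plan is to handle the three assertions one after another; the first carries the essential idea, and the other two are explicit constructions built on the rescaling freedom \eqref{eq:f}.

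\emph{Constancy of $\sgr{X}$.} First I would feed the vector field $X^b$ itself into the second symmetry condition of \eqref{eq:def-symm}, $[\lie_X,D_a]=0$. Because $\lie_XX^b=[X,X]^b=0$ and, by \eqref{eq:omega_def}, $D_aX^b=\w{X}{}_aX^b$, the Leibniz rule for the Lie derivative gives
\begin{equation}
  0\ =\ [\lie_X,D_a]X^b\ =\ \lie_X\big(\w{X}{}_aX^b\big)\ =\ \big(\lie_X\w{X}{}_a\big)X^b \ ,
\end{equation}
whence $\lie_X\w{X}{}_a=0$ on the open dense set where $X\neq0$. Comparing this with the $0$th-law identity \eqref{eq:0th}, namely $\lie_X\w{X}{}_a=D_a\sgr{X}$, one gets $D_a\sgr{X}=0$ there, i.e.\ $\rd\sgr{X}=0$ (and $\sgr{X}$ is a genuine constant, the zero set of the non-trivial $X$ being nowhere dense).

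\emph{Case $\sgr{X}\neq0$.} I would fix the canonical affinely parametrised null field $\ell_o$ ($\sgr{\ell_o}=0$) and a function $u$ on $\ih$ with $\ell_o^aD_au=1$, and write $X=f\ell_o$. Since both surface gravities are constant, the branch $\sgr{\bsl}=0$ of \eqref{eq:f} forces $f=\sgr{X}\,u-B$ with $B$ constant along the null generators. Then $v:=f/\sgr{X}=u-B/\sgr{X}$ satisfies $X^aD_av=f\,\ell_o^aD_av=f=\sgr{X}\,v$ (using $\ell_o^aD_av=1$), and $(\rd v)_a\ell_o^a=\ell_o^aD_au-\sgr{X}^{-1}\ell_o^aD_aB=1\neq0$, so $\rd v$ is nowhere zero on $\ih$, as needed.

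\emph{Case $\sgr{X}=0$.} Here $X$ and $\ell_o$ are both affinely parametrised, so the branch $\sgr{\bsl}=\sgr{\bsl'}=0$ of \eqref{eq:f} gives $X=f\ell_o$ with $f$ constant along the null generators, i.e.\ $f=\ppi^*\hat f$ for a function $\hat f$ on $\bas$. Taking $v:=u$ gives $X^aD_av=f$, hence $X^bD_b(X^aD_av)=f\,\ell_o^bD_bf=f\,\partial_uf=0$ on $\ih$ — the required second-order condition — while $X^aD_av$ vanishes precisely where $f=0$. The one non-routine point is then the final inequality $\rd(X^aD_av)=\rd f\neq0$ at those points, i.e.\ the assertion that a non-trivial null symmetry with vanishing surface gravity vanishes only transversally, on a codimension-one submanifold of $\bas$ carrying a nowhere-vanishing differential. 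I would take this from the structure theory of symmetric non-expanding horizons of \cite{lp-symm}; failing such an appeal, one would have to substitute transverse test fields $Y$ into the full symmetry constraint $[\lie_X,D_a]Y^b=0$, extract the resulting equation for $\hat f$, and analyse its solutions near $\{\hat f=0\}$ — and that is where I expect the real work to lie, the earlier steps being short and essentially algebraic.
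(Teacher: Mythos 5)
Your proposal is correct. Note, however, that the paper does not actually supply a proof of this corollary: it is presented as a recollection of known properties of symmetric NEHs, and the only genuinely nontrivial claim --- the final inequality $\rd(X^aD_av)\neq 0$ on the zero set in the $\sgr{X}=0$ case --- is explicitly delegated to \cite{lp-symm}. Your reconstruction of the routine parts is sound: the identity $[\lie_X,D_a]X^b=(\lie_X\w{X}{}_a)X^b$ combined with the $0$th law \eqref{eq:0th} yields $\rd\sgr{X}=0$ wherever $X\neq 0$, and the explicit functions $v$ built from the rescaling formula \eqref{eq:f} satisfy the stated first- and second-order equations by direct computation. You correctly isolate the one hard step (transversality of the vanishing of $\hat f$ on the base space) and defer it to the same reference the paper cites, so nothing is missing relative to the paper's own treatment. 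One small caveat: your parenthetical remark that the zero set of a non-trivial null symmetry is nowhere dense is essentially part of what the corollary itself asserts, so it should not be invoked as an input; fortunately the first item only requires $D_a\sgr{X}=0$ pointwise on the open set where $\sgr{X}$ is defined, which your argument delivers without that remark.
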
 
That means that an infinitesimal null symmetry of non-zero surface gravity
can vanish only on a 2-dimensional slice of $\ih$, whereas in the case of 
the zero surface gravity it may vanish along a finite set of the geodesics. 
The last item is proved in \cite{lp-symm}. 

Furthermore we distinguish the \emph{cyclic} and \emph{helical} infinitesimal 
symmetries, whose definitions are longer, therefore we spell them out more carefully.   

\begin{defn}\label{def:cyc-def}
  Given a NEH $(\ih, q_{ab}, D_a)$, a vector field $\Phi^a\in \Gamma(T(\ih))$ is cyclic
  infinitesimal symmetry whenever the following holds:
  \begin{itemize}
    \litem $\Phi^a$ is an infinitesimal symmetry of $\ih$ (satisfies
      the equations~\eqref{eq:def-symm}),
    \litem the symmetry group of the maximal analytic extension $\bar{\ih}$ it 
			generates  is diffeomorphic to $SO(2)$,
    \litem $\Phi^a$ is spacelike at the points it doesn't
      vanish.
  \end{itemize}
\end{defn}

\begin{defn}\label{def:hel-def}
  An infinitesimal symmetry $X^a$ of a NEH $(\ih,q_{ab},D_a)$ is called helical if
  \begin{itemize}
    \litem The symmetry group generated by the projection $\hat{X}^A$
      of $X^a$ onto the base space $\bas$ is diffeomorphic to $SO(2)$,
    \litem in the maximal analytic extension $\bar{\ih}$ there exists an orbit of 
      the symmetry group generated by the extension $\bar{X}^a$ which is not closed 
      (i.e. it is diffeomorphic to a line).
  \end{itemize}
  A NEH admitting a helical infinitesimal symmetry will be called helical.
\end{defn}

An important property of the latter is that by the local rigidity theorem 
it induces on $\ih$ also a null and cyclic symmetry, that is
\begin{theorem}\label{thm:hel-decomp}
  Suppose the energy-momentum tensor $T_{ab}$ satisfies  the condition
  \eqref{eq:lie_T-1} for a non-vanishing null vector field $\ell^a$ tangent to a 
  non-expanding horizon $\ih$. 
  If  $\ih$ admits a helical infinitesimal symmetry $X^a$,
  then it also admits a cyclic infinitesimal symmetry $\Phi^a$
  and a null infinitesimal symmetry $\bsl^a$ such that
  \begin{equation}
    X^a\ =\ \Phi^a + \bsl^a \ .
  \end{equation}
\end{theorem}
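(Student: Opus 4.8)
The plan is to decompose the helical symmetry $X^a$ by exploiting the rigidity of the structure it induces on the base space $\bas$. By Definition~\ref{def:hel-def}, the projection $\hat X^A = \ppi_*X^a$ generates an $SO(2)$ action on $(\bas,\hat q_{AB})$; since $\hat X$ is a Killing field of a Riemannian metric on a compact manifold, its flow is periodic with some period $t_0$. First I would set $\Phi^a$ to be the (essentially unique) lift of $\hat X^A$ to $\ih$ that is itself an infinitesimal symmetry of $(\ih,q_{ab},D_a)$: because every infinitesimal symmetry preserves the null direction and projects to a Killing field of $\hat q$, and conversely the vertical freedom in lifting is a rescaling of $\ell$, one shows using the constraint identity \eqref{eq:[l,D]1} together with \eqref{eq:0th}--\eqref{eq:lieellomega} that among all lifts of $\hat X^A$ there is exactly one whose flow on $\bar\ih$ is periodic with period $t_0$ and preserves the full geometry $(\bar q,\bar D)$; this is the cyclic symmetry $\Phi^a$, spacelike wherever $\hat X^A\neq 0$. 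Then I would define $\bsl^a := X^a - \Phi^a$. By construction $\ppi_*\bsl^a = 0$, so $\bsl^a$ is vertical, i.e. $\bsl^a = h\,\ell^a$ for some function $h$ on $\ih$ and a fixed null field $\ell$.

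Next I would verify that $\bsl^a$ is itself an infinitesimal symmetry. Since the space of infinitesimal symmetries is a Lie algebra (the defining conditions \eqref{eq:def-symm} are linear) and both $X^a$ and $\Phi^a$ satisfy them, so does their difference $\bsl^a = X^a - \Phi^a$. This already gives $\lie_{\bsl}q_{ab}=0$ and $[\lie_{\bsl},D_a]=0$. Being vertical and a symmetry, $\bsl^a$ is a null infinitesimal symmetry in the sense preceding Corollary~\ref{nullsym}, so $\rd\sgr{\bsl}=0$ and the remaining structure of that corollary applies.

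The substantive point — and the main obstacle — is to show that $\bsl^a$ and $\Phi^a$ are individually \emph{well-defined global} vector fields on $\ih$ (equivalently on $\bar\ih$) and not merely defined on the open dense set where the relevant fields are nonvanishing; this is exactly where the local rigidity theorem enters. The idea is that the helical orbit is not closed, so the closure of the one-parameter group generated by $\bar X^a$ in the isometry-like group acting on $\bar\ih$ contains, in addition to the $SO(2)$ factor coming from the base, a one-parameter subgroup acting \emph{vertically} — i.e. by the flow of a genuine vertical Killing generator. Concretely, write the flow $\exp(tX)$ on $\bar\ih$; its projection is $t_0$-periodic, so $\exp(t_0 X)$ is a vertical symmetry, and one argues (using that $\ih$ has compact slices $\bas$ and that the vertical flow preserves $(\bar q,\bar D)$) that $\exp(t_0 X) = \exp(s_0\,\xi)$ for a fixed vertical symmetry generator $\xi$; setting $\bsl := (s_0/t_0)\,\xi$ and $\Phi := X - \bsl$ then makes both globally defined, with $\Phi$ having $t_0$-periodic flow. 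The case analysis of Corollary~\ref{nullsym} (whether $\sgr{\bsl}$ vanishes or not) controls where $\bsl^a$ can vanish and hence confirms $\Phi^a$ is everywhere smooth and spacelike off its zero set. I would treat the detailed group-closure argument as the technical core, likely deferred to the appendix; the algebraic decomposition itself is immediate once global existence is secured.
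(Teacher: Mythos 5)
The paper does not actually prove this theorem in-text: it is recalled from \cite{lp-symm} as a known result, justified only by the phrase ``by the local rigidity theorem'', and your sketch --- periodicity of the projected $SO(2)$ action on $\bas$, the return map $\exp(t_0X)$ as a vertical (null-direction-preserving) symmetry of $(\bar{q},\bar{D})$, extraction of its null infinitesimal generator $\bsl$, and $\Phi:=X-\bsl$ --- is precisely that rigidity argument. So your approach matches the cited proof; the one step you explicitly defer (that the vertical return map is the time-$t_0$ flow of a genuine null infinitesimal symmetry, which also forces $\Phi$ to vanish on the rotation axis since a nontrivial periodic flow along a null generator $\cong\mathbb{R}$ is impossible) is exactly the content of the local rigidity theorem that the paper likewise takes as given.
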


Any cyclic symmetry admits in particular the choice of $\ih$ foliation preserved 
by it, that is
\begin{cor}\label{cor:axi-form}
  Suppose a non-expanding horizon $\ih$ admits a cyclic infinitesimal
  symmetry $\Phi^a$. Then there exists $\ell\in\Gamma(T(\ih))$ such that
  \begin{subequations}\label{eq:c-l}\begin{align}
    \sgr{\ell}\ &=\ \const\ ,  &
    [\Phi,\ell]\ &=\ 0\ .  \tag{\ref{eq:c-l}}
  \end{align}\end{subequations}
  Moreover, for the maximal analytic extension $\bar\ih$ of $\ih$ there exists 
  a diffeomorphism 
  \begin{equation}
    h:\bar\ih\ \rightarrow\ \bas\times\re
  \end{equation}
  such that
  \begin{equation}
    h_*\Phi\ =\ (\hat{\Phi},0) \ ,\quad
    h_*\ell\ =\ (0,\partial_v) \ ,
  \end{equation}
  where $\hat{\Phi}=\ppi_*\Phi$ and $v$ is the coordinate on $\mathbb{R}$.
\end{cor}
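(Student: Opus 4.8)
The plan is to establish Corollary~\ref{cor:axi-form} in two stages: first produce a null field $\ell$ commuting with $\Phi$ and having constant surface gravity, then build the trivializing diffeomorphism $h$. For the first stage I would start from an arbitrary null $\ell'\in\Gamma(L)$ with $\sgr{\ell'}=\const$ (which exists by the discussion following \eqref{eq:kappaconst}). Since $\Phi$ is an infinitesimal symmetry, it preserves the null direction, so $[\Phi,\ell']=g\,\ell'$ for some function $g$ on $\ih$. Because $\Phi$ generates an $SO(2)$ action on $\bar\ih$ (Definition~\ref{def:cyc-def}), I would average $\ell'$ over the compact group orbit: set $\ell := \bigl(\int_{SO(2)} \theta_t^{*}\ell'\,dt\bigr)$ with suitable normalization, where $\theta_t$ is the flow of $\Phi$. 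The averaged field is still null (null direction is preserved by each $\theta_t$ and the average of positive coefficients is positive) and by construction $\lie_\Phi\ell=0$, i.e.\ $[\Phi,\ell]=0$. One then checks $\sgr{\ell}$ is still constant: surface gravity transforms in a controlled way under the rescaling $\ell=f\ell'$, and \eqref{eq:f} together with $\Phi$-invariance of $\ell$ and of the geometry forces $\sgr{\ell}=\const$ (alternatively, $\sgr{\ell}=\ell^aD_a v$ is $\Phi$-invariant since $v$ can be chosen $\Phi$-invariant, and being constant along the compact orbits it descends to $\bas$; constancy on $\bas$ follows from $\lie_\Phi\sgr{\ell}=0$ combined with Corollary~\ref{nullsym}-type arguments). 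This verifies \eqref{eq:c-l}.

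For the second stage, work on the maximal analytic extension $\bar\ih$, where by \cite{lp-symm} both $\bar\Phi$ and $\bar\ell$ generate global flows preserving $(\bar q,\bar D)$. The key point is that $[\bar\Phi,\bar\ell]=0$, so their flows commute and together define an action of the product group; since $\bar\Phi$ integrates to an $SO(2)$-action and $\bar\ell$ to an $\mathbb{R}$-action along the null geodesics, one gets a locally free action whose orbits foliate $\bar\ih$. I would pick a good-cut section $\bar\sigma:\bas\to\bar\ih$ transverse to $\bar\ell$ (it exists by the good-cut construction recalled in Section~\ref{ssec:invariants}); since $\bar\sigma(\bas)$ is compact and $\bar\ell$ is complete and transverse, the map $\bas\times\re\to\bar\ih$, $(\hat x,v)\mapsto \phi^{\bar\ell}_v(\bar\sigma(\hat x))$ is a diffeomorphism ($\phi^{\bar\ell}$ the flow of $\bar\ell$). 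Call its inverse $h$. By construction $h_*\bar\ell=(0,\partial_v)$. To get $h_*\bar\Phi=(\hat\Phi,0)$ I need the section $\bar\sigma$ to be $\Phi$-invariant; this is arranged by choosing the good cut compatibly with $\Phi$ — one averages the section over the $SO(2)$-action, or invokes that the divergence-free condition defining good cuts is $\Phi$-invariant (the geometry and $\w{\bar\ell}$ are $\Phi$-invariant once $\bar\ell$ is), so the good-cut foliation is automatically $\Phi$-invariant. Then $\bar\Phi$ is tangent to the leaves $\{v=\const\}$ (because it commutes with $\bar\ell$ and preserves $\bar\sigma(\bas)$), so it has no $\partial_v$-component, and its projection is $\hat\Phi=\ppi_*\Phi$ by definition of $\ppi$.

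The main obstacle I expect is the compatibility of the good-cut foliation with the cyclic symmetry — i.e.\ ensuring one can choose $\bar\sigma$ simultaneously a good cut \emph{and} $\Phi$-invariant, and that the resulting $v$ can be taken $\Phi$-invariant. This is where one must use that $\bar\ell$ was already chosen $\Phi$-invariant, so all the ingredients entering the divergence-free defining equation for good cuts ($\bar q^{AB}$, $\bar D$, $\w{\bar\ell}$) are $\Phi$-invariant; hence the $SO(2)$-action permutes good cuts within the same foliation, and since good cuts within a foliation differ only by the $\bar\ell$-flow, $\bar\Phi$ preserves each leaf. A secondary subtlety is that, as the excerpt repeatedly warns, the slices of $\ih$ may fail to be global sections of $\ppi:\ih\to\hat\ih$; this is why the clean product picture is stated on $\bar\ih$ rather than on $\ih$ itself, and I would be careful to phrase the diffeomorphism claim only for $\bar\ih$.
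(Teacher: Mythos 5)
The paper does not actually prove Corollary~\ref{cor:axi-form}: it appears in the subsection ``Definitions, known results'' and is recalled from \cite{lp-symm}, so there is no in-paper argument to compare against. Judged on its own, your two-stage strategy (group-average over the compact $SO(2)$ flow to get $[\Phi,\ell]=0$, then use the commuting flows and an invariant section to trivialize $\bar\ih$) is the natural one and is essentially sound. For stage one, the point you leave vague --- why averaging preserves $\sgr{\ell}=\const$ --- is in fact clean: writing any section of $L$ as $g\,\ell_0$ with $\ell_0$ affine, one has $\sgr{g\ell_0}=\ell_0(g)$, so $\ell\mapsto\sgr{\ell}$ is \emph{linear} in the section and its level set $\{\sgr{\cdot}\equiv\kappa_0\}$ is an affine subspace; since each $(\theta_{-t})_*\ell'$ lies in it (the flow preserves $(q,D)$ and $\kappa_0$ is constant), so does the average. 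You should say this rather than gesture at \eqref{eq:f}.

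The one genuine gap is an internal tension in stage two. The good-cut construction you invoke for the transversal section is only defined for $\sgr{\bar\ell}=\const\neq 0$; but a null field with nonzero constant surface gravity necessarily \emph{vanishes} on an $(n-2)$-section of the maximal analytic extension (it is $\kappa(u-u_0)\ell_0$ in an affine parameter $u$), so its flow does not act simply transitively on the generators of $\bar\ih$ and your map $(\hat x,v)\mapsto\phi^{\bar\ell}_v(\bar\sigma(\hat x))$ covers only one side of that zero-section --- it is not a diffeomorphism onto $\bar\ih\cong\bas\times\re$. For the product statement $h_*\ell=(0,\partial_v)$ you must take the affine representative, $\sgr{\ell}=0$ (which the corollary permits), for which $\ell$ is nowhere vanishing and complete on $\bar\ih$; the good cuts are then unavailable but also unnecessary, since a $\Phi$-invariant global section can be produced directly by averaging the $v$-coordinate of the image of any section under the $SO(2)$ flow (using $[\Phi,\ell]=0$ so that the displacement along generators descends to $\bas$). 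With that repair, and with the small completion of your leaf-preservation argument (a continuous $SO(2)$-action on the leaf space $\re$ is trivial, hence $\theta_t$ fixes each leaf rather than merely permuting leaves), the proof closes.
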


\subsubsection{The symmetries and invariants}

An element new with respect to the situations studied in \cite{lp-symm} is the presence 
and uniqueness of the invariants of NEHs introduced in section~\ref{ssec:invariants}.
It leads to the following results:  

\begin{theorem}\label{cor:symm-prop}
  Suppose that $(\ih,q_{ab},D_a)$ is an invariant-generic NEH (see Definition \ref{invgen}) 
  and $X$ is its infinitesimal symmetry. Let $\ell^a$ and $n_a$ are, respectively, the 
  invariant vector field and the invariant covector field. 
  Then
  \begin{enumerate}[a)]
    \item \label{it:cor-symm-prop-comm} $[X,\ell]\ =\ 0$, and $\lie_Xn_a\ =\ 0$.
    \item \label{it:cor-symm-prop-dec} there exists a constant $a\in\mathbb{R}$ and
      a Killing vector field $\hat{X}^A$ of the metric tensor $\hat{q}_{AB}$ induced 
      on the horizon base space $\bas$ such that
      \begin{equation}\label{eq:cor-symm-form}
				X^a\ =\ a\ell^a +\tilde{X}^a
      \end{equation}
      where $\tilde{X}^a$ is tangent to the leafs of the invariant foliation of $\ih$, 
      and
      \begin{equation}
				(\ppi_*\tilde{X})^A\ =\ \hat{X}^A \ .
      \end{equation}
  \end{enumerate}
\end{theorem}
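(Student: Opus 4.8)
The plan is to exploit the rigidity of the invariant structures under isomorphisms. First I would observe that the flow $\phi_t$ of the infinitesimal symmetry $X$ extends (by the extension property quoted from \cite{lp-symm}) to a one-parameter group of isomorphisms of the maximal analytic extension $(\bar\ih,\bar q,\bar D)$ onto itself. Since the invariant vector field $\ell^a$, the invariant foliation, and the invariant covector $n_a$ are \emph{unique} objects built canonically out of $(\bar q,\bar D)$ — this is exactly the content of the discussion in section~\ref{ssec:invariants} culminating in Definition~\ref{invgen} — they must be carried to themselves by every automorphism, i.e.\ $(\phi_t)_*\ell = \ell$ and $(\phi_t)^* n = n$. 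Differentiating in $t$ at $t=0$ gives item~\ref{it:cor-symm-prop-comm}: $\lie_X\ell = [X,\ell] = 0$ and $\lie_X n_a = 0$. (One has to phrase this on $\bar\ih$ and then restrict, using that the restriction of the invariants to the original $\ih$ is the one we are studying; this is the only mildly delicate point of part a).)

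For part~\ref{it:cor-symm-prop-dec} I would first decompose $X$ using the frame provided by $\ell$ and the invariant foliation. Write $a := -n_a X^a$. Because $n_a = -\rd v$ with $\ell^a D_a v = 1$, this is $a = X^a D_a v = \lie_X v$. Now $\lie_X n = 0$ means $\lie_X \rd v = \rd(\lie_X v) = \rd a = 0$, so $a$ is a constant — this is where the commutation relations from part a) do the work. Set $\tilde X^a := X^a - a\,\ell^a$. Then $n_a\tilde X^a = -a + a\,n_a\ell^a = -a + a = 0$, using the normalization $n_a\ell^a = -1$; hence $\tilde X$ is tangent to the leaves of the invariant foliation, which are the level sets of $v$. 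It remains to identify $\ppi_*\tilde X$ with a Killing field of $\hat q_{AB}$. Since $X$ is an infinitesimal symmetry, $\ppi_*X =: \hat X$ is a Killing field of $(\bas,\hat q)$ by the general property recalled in section~\ref{sec:sym-known}; and $\ppi_*\ell$ vanishes because $\ell$ is tangent to the null fibres of $\ppi$, so $\ppi_*\tilde X = \ppi_*X - a\,\ppi_*\ell = \hat X$. Taking $\hat X^A := (\ppi_*\tilde X)^A$ finishes it.

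The main obstacle I anticipate is purely a matter of care rather than of substance: making the invariance argument in part a) rigorous on the maximal analytic extension and then transferring it back to the possibly-incomplete $\ih$, keeping track of the fact (emphasized twice in the excerpt) that slices of $\ih$ may fail to be global sections of $\ppi$. One must check that the automorphism group of $\bar\ih$ is genuinely what carries the invariants — i.e.\ that an infinitesimal symmetry of $\ih$ really does extend to a \emph{global} flow on $\bar\ih$ preserving $(\bar q,\bar D)$, which is precisely the extension lemma from \cite{lp-symm} quoted above, so this can be cited. Once part a) is in hand, part b) is the short algebraic computation sketched above and presents no real difficulty.
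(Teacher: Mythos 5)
Your proposal is correct and follows essentially the same route the paper intends: the paper offers no explicit proof, presenting the theorem as a direct consequence of the naturality of the invariants under isomorphisms (the remark that invariants of isomorphic NEHs are mapped to each other by $\phi_*$, $\phi^*$), which is exactly the rigidity argument you use for part a), and part b) is the same short computation with $a=-n_aX^a$ constant by $\lie_Xn=0$ and $\ppi_*\tilde X=\ppi_*X=\hat X$ Killing by the property recalled in Section~\ref{sec:sym-known}.
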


\section{Spacetime neighborhoods of NEHs: induced structures}
  \label{sec:Bondi}

In the previous section we have defined  invariant structures  of 
generic, non-expanding horizons:  the invariant vector $\ell^a$, the
invariant foliation and the invariant covector $n_a$. This intricate structure
provides a natural extension to the spacetime neighbourhood of the horizon,
defining in particular the coordinate system analogous to the Bondi one defined
near null SCRI. This construction has been presented in \cite{l-sem} (and subsequently 
in \cite{k-spc}) in context of horizons in $4$-dimensional spacetime 
and in \cite{p-phd} in general dimension.
Here (Sec.~\ref{sec:neigh-invars}) we perform the $1$st step of 
this construction, extending the horizon invariants to the analogous invariant 
structures of its neighborhood. This extension will be next used in 
Sec.~\ref{sec:neigh-kill} to define a natural way of describing the (possibly 
present) Killing fields on the neighbourhood of $\ih$.

\subsection{The Bondi-like extension of a structure of a NEH.}
  \label{sec:neigh-invars} 

The construction of the coordinate system is graphically presented in fig.~\ref{fig:bondi}. 
The detailed specification of the construction is as follows.
Given a non-expanding horizon $\ih$ let us fix a null,  nowhere vanishing vector
field $\ell^a\in\Gamma(T(\ih))$ (not necessarily the invariant one), a foliation 
transversal to $\ell^a$ and preserved by its flow, and the corresponding covector 
$n_a$ orthogonal to the foliation and normalized by the condition
\begin{equation}
  \ell^a n_a\ =\ -1
\end{equation}
as in the previous section. The covector determines uniquely at
$\ih$ a null vector field $\bn^\mu$, tangent to the spacetime $\M$,
such that the pullback of $\bn_\mu$ onto $\ih$ equals $n_a$. At each
point of $\ih$, the vector $\bn^\mu$ is transversal. We extend the
vector  field $\bn^{\mu}$  to a null vector field defined in some
spacetime neighbourhood of the horizon by the  parallel transport
along the null geodesics tangent to $\bn^\mu$ at $\ih$. In other
words we extend the vector field $n^\mu$ from $\ih$ to a vector
field defined in a neighborhood of $\ih$ such that
\begin{equation}\label{eq:nM}
  \nabla_{\bn}\bn \ =\ 0.
\end{equation}
Assuming the vector  field $\ell^a$ at the horizon is future (past) oriented
the vector $\bn^\mu$ is also future (past) oriented. Due to the finiteness of
the transversal expansion  of the vector field $\bn^\mu$ at the
horizon, there exists some region $\M': \M\supset\M'\supset\ih$ of
the spacetime such that the geodesics generated by $\bn^\mu$ define the
foliation of $\M'$. \footnote{The range of a region $\M'$ strongly
  depends on the choice of the foliation the field $\bn^\mu$ is orthogonal
  to.}
We will denote the maximal (for given foliation of $\ih$) set of that
property as the {\it domain of transversal null
  foliation}.
In this region the field $\bn^\mu$ is defined uniquely.

Via the flow of this vector field, the vector field $\ell^a$ defined
on the horizon is extended to a vector field $\zeta^\mu$ defined on
$\M'$ such that
\begin{subequations}\label{eq:zeta}\begin{align}
  \lie_{\bn}\zeta^{\mu} &= 0  \ ,  &
  \zeta^{\mu}|_{\ih}=\ell^{\mu} \ . \tag{\ref{eq:zeta}}
\end{align}\end{subequations}
We will refer to this field as to the {\it Bondi-like extension} of
the  vector field $\ell$.

Generically, away from the horizon the vector field $\zeta^\mu$ is no longer null. 
Indeed, the Lie derivative of $\zeta^{\mu}\zeta_\mu$ along $\bn^{\mu}$ is at the 
horizon equal to
\begin{equation}\label{eq:zeta-norm}
  \lie_{\bn}\zeta^{\mu}\zeta_{\mu}|_{\ih} = 2\sgr{\bsl} \ .
\end{equation}
If $\sgr{\bsl}>0$ ($<0$) the vector $\zeta^\mu$ becomes timelike  near $\ih$, 
on the side from (into) which the geodesics defined by the vector $\bn^\mu$ are 
incoming (outgoing).  
Hence, it can be treated as a 'time evolution' vector co-rotating
with the horizon.\footnote{Exterior/interior of $\ih$ is undefined at this point.
One can define exterior to be that side of $\ih$ at which near to $\ih$ the vector
field is timelike.}

Finally,  the  foliation of the horizon $\ih$ is mapped by the flow of $\bn^\mu$ 
into a foliation with $n-2$ surfaces diffeomorphic to the corresponding slices of 
$\ih$. The resulting  foliation of the spacetime neighborhood of $\ih$ is preserved 
by the flow of the vector field $\zeta^\mu$ as well.

Also, there exists a uniquely defined function $r$ in the neighborhood of
$\ih$, such that
\begin{subequations}\label{eq:r-M}\begin{align}
  \bn^{\mu}r_{,\mu} &= -1 \ ,  &
  r|_{\ih} &= 0 \ , \tag{\ref{eq:r-M}}
\end{align}\end{subequations}
called  'radial' coordinate on $\M'$. Given a value of $r$ the flow
of $\bn$ maps the horizon $\ih$ into a cylinder $\ih_{(r)}$.
The radial coordinate $r$ provides affine parametrization  of the
null geodesics tangent to $\bn$. Each cylinder $\ih_{(r)}$ is formed by the integral 
curves of the vector field $\zeta$.

A parametrization of the integral curves of $\zeta$ can be fixed
uniquely up to a constant, as a function $v$ determined by its restriction to
$\ih$ and by
\begin{equation}\label{eq:v-M}
  \zeta^{\mu} v_{,\mu}\ =\ 1 \ , \qquad \bn^{\mu} v_{,\mu}\ =\ 0 \ .
\end{equation}
We are assuming that $v$ is constant on the leaves of the foliation
fixed on $\ih$.  
The condition $v=v_o$ defines an $n-1$ dimensional surface $\N_{v_o}$
in the neighborhood of $\ih$ consisting of the null geodesics
tangent to $\bn^\mu$.

Remaining $n-2$ coordinates  $(x^A)$ can be defined in the
neighborhood of $\ih$ as the extension of any properly defined
coordinate system $\hat{x}^A$ given on the base space $\bas$ of the
horizon
\begin{subequations}\label{eq:xA-M}\begin{align}
  \forall_{p\in\ih}\quad x^A(p) &:= \hat{x}^A(\ppi^{*} p) \ ,  &
  \zeta^{\mu}x^A{}_{,\mu} = \bn^{\mu}x^A{}_{\mu} &= 0 \ , \tag{\ref{eq:xA-M}}
\end{align}\end{subequations}
where $\ppi$ is a projection onto base space defined via
\eqref{eq:bas}. 

The set of $n$ functions $(x^A,v,r)$ defined above forms at $\M'$ a well defined 
coordinate system. The coordinates $v$ and $r$ are defined globally on $\M'$. On 
the other hand, the coordinates $\hat{x}^A$ are defined locally on elements of 
an open covering of $\bas$, next pulled back to $\ih$ and finally extended to 
$\M'$. Due to the similarity with the coordinate system defined by Bondi at the 
null SCRI it will be referred to as the {\it Bondi-like coordinate system} of the 
horizon spacetime neighbourhood. 
In these coordinates
\begin{equation}
  \zeta^\mu\ =\ (\partial_v)^\mu \ , \qquad 
  \bn^\mu\ =\ -(\partial_r)^\mu \ , \qquad 
  \bn_\mu\ =\ -(\rd v)_\mu \  .
\end{equation}

\begin{figure}
  \begin{center}
		\psfrag{IH}{$\ih$}
		\psfrag{Slc}{$\slc$}
		\psfrag{M}{$\M$}
		\psfrag{Trans}{$\N$}
		\psfrag{l}{$\ell$}
		\psfrag{n}{$n$}
		\psfrag{zeta}{$\zeta$}
    \includegraphics[width = .5\linewidth, angle = 0]{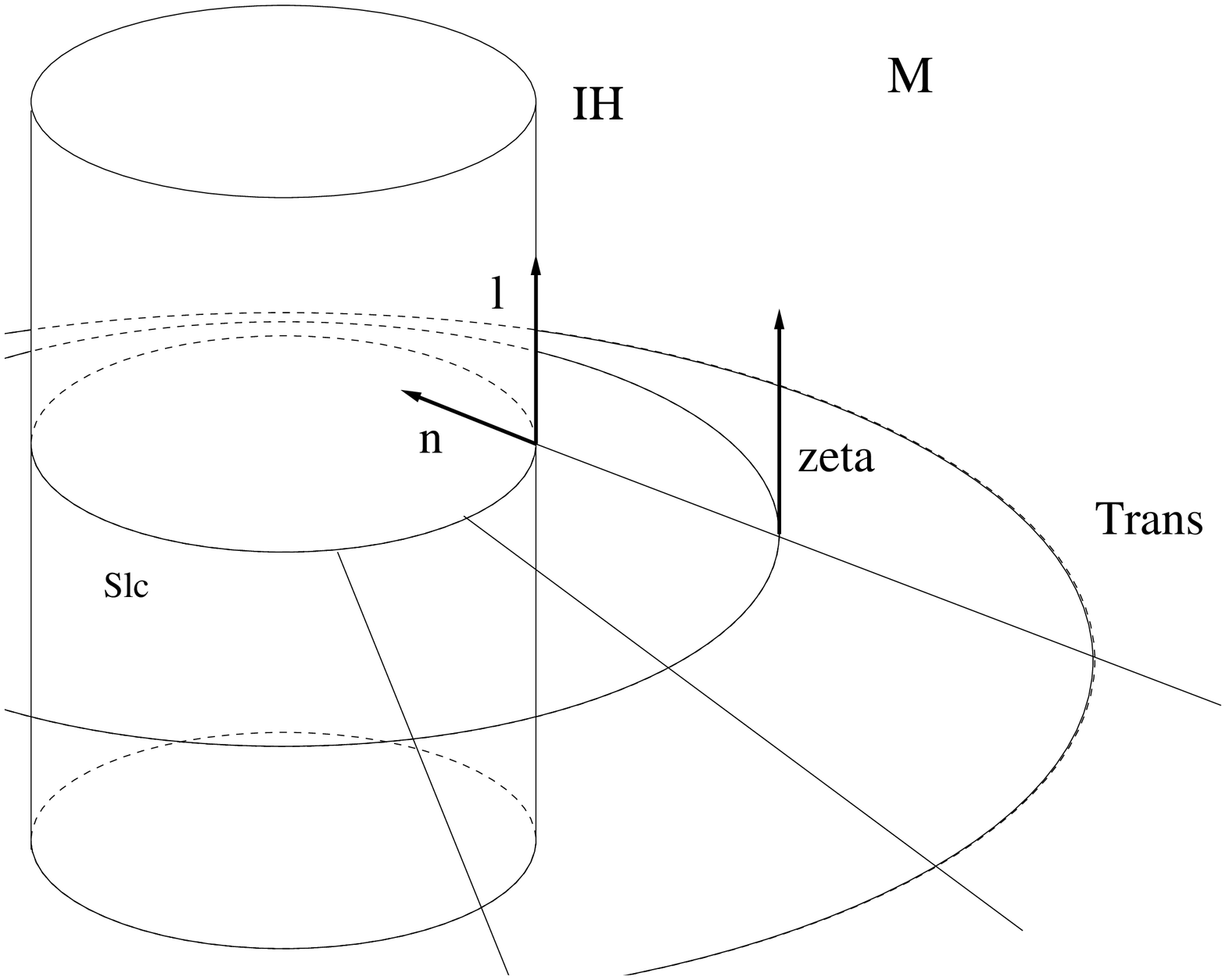}
    \caption{The construction of Bondi-like coordinate system.}
  \end{center} \label{fig:bondi}
\end{figure}
\subsection{Invariants of NEHs neighborhoods} 
We use the Bondi extension to endow the neighborhood of an invariant-generic NEH 
$\ih$ with invariant structures. Let the starting point for the construction of 
the previous subsection be the invariant vector field $\ell^a$  of the geometry 
of $\ih$, the invariant foliation and the invariant covector $n_a$. Then we get 
the following structures invariantly defined in the neighborhood of $\ih$: the
vector fields $\zeta^\mu$, and $\bn^\mu$, the foliation by the surfaces $\ih_{(r)}$, 
the foliation by the surfaces $\N_{v}$.
\begin{defn}\label{invneigh}
	We will call $\zeta^\mu$ the invariant vector field, $\bn_\mu$ the invariant
	covector field, and the foliations,  the invariant foliations,
	respectively of the neighborhood of $\ih$. And we will refer to the
	coordinates $(x^A, v,r)$ as to the Bondi invariant coordinates.
\end{defn}

\subsection{The invariants and the Killing vectors}
  \label{sec:neigh-kill}

The Bondi invariant coordinate system introduced in the previous
subsection in the neighborhood of $\ih$ is very well suited to
identify and characterize the Killing vectors possibly existing in the 
neighborhood of $\ih$.

\begin{lem}\label{[K,zeta]} 
  Consider  a non-expanding horizon $\ih$. Assume it is
  invariant-generic. Let
  $\zeta^\mu$ and $\bn_\mu$ be the invariant, respectively vector field
  and covector field of a neighborhood of $\ih$.  Suppose $K^\mu$ is a
  Killing vector field defined in the spacetime $\M$ and tangent to
  $\ih$. Then,
  \begin{enumerate}
    \item the restriction of $K^\mu$ to $\ih$ is an infinitesimal symmetry
      of the geometry of $\ih$, and
    \item $\hphantom{.}$\vspace{-0.75cm}
      \begin{equation}
        [K,\zeta] \ =\ [K,\bn]\ =\ 0.
      \end{equation}
  \end{enumerate}
\end{lem}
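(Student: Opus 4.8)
The plan is to prove the two assertions in order, since the first feeds naturally into the second. For item (1): a Killing field $K^\mu$ of the spacetime metric $g_{\alpha\beta}$ which is tangent to $\ih$ restricts to a vector field $X^a := K^a|_{\ih}\in\Gamma(T(\ih))$; I would show $X^a$ satisfies \eqref{eq:def-symm}. That $\lie_X q_{ab}=0$ is immediate: $q_{ab}$ is the pullback of $g_{\mu\nu}$ to $\ih$, and since $K$ is tangent to $\ih$, $\lie_X q_{ab}$ is the pullback of $\lie_K g_{\mu\nu}=0$. For the connection condition $[\lie_X,D_a]=0$, I would use that $D_a$ is the induced covariant derivative: for $Y\in\Gamma(T(\ih))$, $D_XY^a=\nabla_XY^a$, and the commutator $[\lie_K,\nabla_\mu]$ acting on tensors is governed by $\nabla_\mu\nabla_\nu K_\rho = R_{\nu\rho\mu}{}^\sigma K_\sigma$ (the standard Killing identity), which vanishes in the appropriate sense once contracted and pulled back — more cleanly, one invokes that a Killing flow is an isometry, hence preserves the Levi-Civita connection $\nabla$, and since it maps $\ih$ to itself (as $K$ is tangent to $\ih$), it preserves the induced data $(q_{ab},D_a)$; infinitesimally this is precisely $\lie_X q_{ab}=0$ and $[\lie_X,D_a]=0$. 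This is essentially the sentence already asserted in the excerpt just before Definition~\ref{def:cyc-def}, so I would keep it brief.

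For item (2), the key is that $\zeta^\mu$ and $\bn^\mu$ were built by a \emph{canonical} (isomorphism-covariant) construction out of the invariant data of $\ih$, and a Killing field tangent to $\ih$ generates a one-parameter family of isometries $\phi_t$ each restricting to an automorphism of $(\ih,q_{ab},D_a)$. First I would observe that, by item (1) together with Theorem~\ref{cor:symm-prop}a), $X$ commutes with the invariant vector field $\ell^a$ and preserves the invariant covector $n_a$ on $\ih$: $[X,\ell]=0$, $\lie_X n_a=0$. Since $\zeta^\mu$ and $\bn^\mu$ are defined from $(\ell^a,n_a,\text{invariant foliation})$ by the purely geometric prescription \eqref{eq:nM}–\eqref{eq:zeta} (parallel transport of $\bn$ along its own geodesics, then Lie-dragging $\ell$ along $\bn$), and since $K$ is a Killing field — so its flow $\phi_t$ is an isometry of $\M$ carrying $\ih$ to itself and hence carrying the whole Bondi-like construction based at $\ih$ to itself — the flow $\phi_t$ must preserve $\zeta^\mu$ and $\bn^\mu$. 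Differentiating at $t=0$ gives $[K,\zeta]=\lie_K\zeta=0$ and $[K,\bn]=\lie_K\bn=0$.

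Alternatively, and perhaps more transparently for a written proof, I would argue directly by uniqueness. For $\bn$: $\lie_K\bn^\mu$ is a vector field which, restricted to $\ih$, has pullback $\lie_X n_a = 0$ by Theorem~\ref{cor:symm-prop}a), hence $\lie_K\bn^\mu|_\ih$ lies along $\bn^\mu$; and applying $\lie_K$ to \eqref{eq:nM} together with $\lie_K g=0$ shows $\nabla_{\bn}(\lie_K\bn)$ is controlled by $\lie_K\bn$ itself along the $\bn$-geodesics — a linear ODE with zero (or covariantly-trivial) initial data, forcing $\lie_K\bn=0$ on the domain of transversal null foliation. For $\zeta$: apply $\lie_K$ to \eqref{eq:zeta}; since $[K,\bn]=0$ one gets $\lie_{\bn}(\lie_K\zeta)=0$, i.e. $\lie_K\zeta$ is $\bn$-constant, and its value on $\ih$ is $\lie_X\ell^a=-[X,\ell]=0$; hence $\lie_K\zeta=0$ everywhere on $\M'$.

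The main obstacle I anticipate is making rigorous the step that $\lie_K$ annihilates $\bn$ — i.e. propagating the vanishing off $\ih$. One must be careful that "$\lie_K\bn|_\ih$ is proportional to $\bn$" plus "$\lie_K$ commutes appropriately with the geodesic equation \eqref{eq:nM}" really does give a \emph{homogeneous} linear transport equation along the $\bn$-geodesics; the subtlety is that the proportionality factor of $\lie_K\bn$ relative to $\bn$ on $\ih$ need not vanish a priori, so one should first check that $\lie_K(\bn^\mu\bn_\mu)=0$ and that $K$ being tangent to $\ih$ forces $\lie_K\bn$ tangent to $\N_v$-type surfaces along $\ih$, pinning the initial data down to zero. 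Once that is in hand, Gronwall/uniqueness for linear ODEs closes the argument. Invoking the isometry/uniqueness viewpoint of the first paragraph sidesteps most of this computation, so I would present that as the primary argument and mention the direct ODE version as a remark.
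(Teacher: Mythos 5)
Your proposal is correct and its primary argument is essentially the paper's own: the paper disposes of item (1) as obvious and proves item (2) by exactly the observation that the isometric flow of $K^\mu$ preserves $\ih$ and hence the canonical Bondi-like construction of $\zeta^\mu$ and $\bn^\mu$, which is your isomorphism-covariance argument spelled out in more detail. The alternative direct ODE/uniqueness route you sketch is a legitimate supplement but is not needed and is not what the paper does.
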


The first item is obvious and the second follows from the fact that there
is an isometric flow of $K^\mu$ defined in a spacetime neighborhood of
any cross-section of $\pi:\ih\rightarrow\hat{\ih}$.

The infinitesimal symmetries were characterized in terms of the
invariant vector $\ell^a$ and the invariant foliation in the previous
section. Owing to Lemma \ref{[K,zeta]} we can characterize the
Killing vectors by the invariant vector field $\zeta^\mu$ and the
invariant foliations, that is by the Bondi invariant coordinates.

\begin{theorem}\label{cor:kill-form}
  Suppose the assumptions of Lemma \eqref{[K,zeta]} are satisfied. Then
  \begin{enumerate}[(i)]
    \item \label{it:cor-kill-form-null} if the restriction of $K^\mu$ to $\ih$ is null 
      everywhere on $\ih$, then there is a constant $a_0\in\mathbb{R}$ such that
      \begin{equation}\label{eq:k-l}
				K^\mu\ =\ a_0\zeta^\mu;
      \end{equation}
    \item If $K^\mu$ is not null at $\ih$, then it  takes the following form in 
			the Bondi-like invariant  coordinate system system $(x^A,r,v)$
      \begin{equation}\label{eq:Kill-chiral-form}
				K^\mu\ =\ a_0\zeta^\mu \ +\ [\Phi^A(x^B)\partial_A]^\mu  \ ,
      \end{equation}
      where 
      \begin{equation} 
				k^a\ =\ a_0\ell^a \ +\ [\Phi^A(\hat{x}^B)\partial_{\hat{x}^A}]^a, 
				\qquad  
				a_0\in\mathbb{R}
      \end{equation}
			is an infinitesimal symmetry of $\ih$.     
  \end{enumerate}
\end{theorem}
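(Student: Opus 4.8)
The plan is to leverage Lemma~\ref{[K,zeta]} together with the characterization of infinitesimal symmetries from Section~\ref{sec:intro-geom} (in particular Theorem~\ref{cor:symm-prop} and, for the null case, Corollary~\ref{nullsym}). First I would note that by item~1 of Lemma~\ref{[K,zeta]} the restriction $k^a := K^\mu|_{\ih}$ is an infinitesimal symmetry of the invariant-generic NEH $(\ih,q_{ab},D_a)$, and by item~2 we have the commutation relations $[K,\zeta]=[K,\bn]=0$ throughout the domain of the transversal null foliation. These two facts are the entire engine of the argument: the commutators tell us that $K^\mu$, expressed in the Bondi invariant coordinates $(x^A,v,r)$, has components that are transported rigidly in both the $v$- and $r$-directions.

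\medskip

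\emph{The null case (i).} Here $k^a$ is a null infinitesimal symmetry of $\ih$, so by Theorem~\ref{cor:symm-prop}\ref{it:cor-symm-prop-dec} its base-space projection $\hat{X}^A$ must vanish (a null vector is annihilated by $\ppi_*$, since $L=\ker\ppi_*$), whence $k^a = a\,\ell^a$ for some constant $a$ — using that $\rd\sgr{k}=0$ from Corollary~\ref{nullsym} and that $\ell^a$ is the invariant field. Then $K^\mu$ and $a_0\zeta^\mu$ (with $a_0:=a$) are two vector fields on $\M'$ that agree on $\ih$ and both satisfy $[\,\cdot\,,\bn]=0$: for $\zeta$ this is its defining property~\eqref{eq:zeta}, and for $K$ it is Lemma~\ref{[K,zeta]} item~2. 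Since the flow of $\bn$ foliates $\M'$ by geodesics emanating from $\ih$, a vector field commuting with $\bn$ is determined by its restriction to $\ih$; hence $K^\mu = a_0\zeta^\mu$ on all of $\M'$. This is the content of~\eqref{eq:k-l}.

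\medskip

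\emph{The non-null case (ii).} Now $k^a$ is a general infinitesimal symmetry, so Theorem~\ref{cor:symm-prop}\ref{it:cor-symm-prop-dec} gives $k^a = a_0\ell^a + \tilde{X}^a$ with $\tilde{X}^a$ tangent to the leaves of the invariant foliation and $\ppi_*\tilde X = \hat X$ a Killing field of $\hat q_{AB}$. In the Bondi invariant coordinates on $\ih$ (i.e. $r=0$) this reads $k^a = a_0(\partial_v)^a + \Phi^A(\hat x^B)(\partial_{x^A})^a$, where $\Phi^A$ are the components of $\hat X$ pulled back via the coordinate extension~\eqref{eq:xA-M}; note $\Phi^A$ depends only on $x^B$ because $\hat X$ lives on $\bas$ and because, by Theorem~\ref{cor:symm-prop}\ref{it:cor-symm-prop-comm}, $\lie_k n_a = 0$ forces the $v$-component to be the constant $a_0$ and $[k,\ell]=0$ is consistent with $\Phi^A$ being $v$-independent. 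The remaining step is to propagate this off $\ih$: write $K^\mu = K^v\partial_v + K^r\partial_r + K^A\partial_A$ in the Bondi invariant coordinates, and use $[K,\bn]=-[K,\partial_r]=0$ together with $[K,\zeta]=[K,\partial_v]=0$ to conclude $\partial_r K^\mu = \partial_v K^\mu = 0$, so each component is a function of $x^A$ alone, equal to its value on $\ih$. Since $K^\mu$ is tangent to $\ih$ we have $K^r|_{\ih}=0$, hence $K^r\equiv 0$; and $K^v\equiv a_0$, $K^A\equiv \Phi^A(x^B)$, giving~\eqref{eq:Kill-chiral-form}.

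\medskip

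The main obstacle — and the step deserving the most care — is the claim in the non-null case that the base-space Killing field $\hat X$ lifts to a $v$-independent vector field $\Phi^A(x^B)\partial_A$ tangent to the invariant foliation \emph{with no admixture of $\ell$ beyond the constant $a_0$ term}; this is where invariant-genericity is essential, since it is what makes the foliation and the splitting $k = a_0\ell + \tilde X$ canonical and rigid under the flow of $\ell$. One must check that the coordinate extension~\eqref{eq:xA-M} is compatible with $\tilde X$ in the sense that $\tilde X^a x^A{}_{,a}$ is $v$-independent; this follows because $\ppi_* \tilde X = \hat X$ and $x^A = \hat x^A\circ\ppi$ on $\ih$, so $\tilde X^a x^A{}_{,a} = \hat X^B \partial_{\hat x^B}\hat x^A = \Phi^A(\hat x)$ depends on the $x^B$ only. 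Everything else is a short bookkeeping exercise with the commutator equations, which I would not spell out in full.
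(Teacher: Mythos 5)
Your proof is correct and follows essentially the same route as the paper: the paper's (very terse) proof consists precisely of the observation that $[\bn,K]=0$ forces $\partial_r K^\mu=0$, so that $K^\mu$ equals its restriction $k^\mu$ to $\ih$, with the form of $k^a$ then supplied implicitly by Theorem~\ref{cor:symm-prop}. You have simply filled in the details the paper leaves implicit (the null case via $\ppi_*k=0$ and positive-definiteness of $\hat q$ on the leaves, and the $v$-independence via $[K,\zeta]=0$), all of which are sound.
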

\begin{proof}
The conclusion follows from the following calculation 
	\begin{equation}
    0\ =\ [n,K]\ =\ -[\partial_r,K^\mu\partial_\mu]\ =\ -\partial_r(K^\mu)\partial_\mu
  \end{equation}
  of which a direct consequence is 
	\begin{equation}
    K^\mu(x^A,v,r)\ =\ K^\mu(x^A,v,0)=k^\mu(x^A,v) \ .
  \end{equation}
\end{proof}

To see the strength of this result, let us note that, given the Bondi invariant 
coordinates $(x^A,v,r)$, and the Killing vectors 
$\hat{k}_1^A(\hat{x}^B)\partial_{\hat{x}^A},...,\hat{k}_m^A(\hat{x}^B)\partial_{\hat{x}^A}$ 
of the horizon base space $\bas$ geometry $\hat{q}_{AB}$, if we want to know if there
is a spacetime Killing vector field tangent to the horizon, all we have to do
is to check candidates $K^\mu$ of the form 
\begin{equation}
  K\ =\ a_0\partial_v \ +\ (a^1\hat{k}_1^A+...+a^m\hat{k}_m^A)
  \partial_{x^A} 
\end{equation}
for all  $a_0,...,a_m\in \mathbb{R}$.

\subsection{The Killing vectors in non-invariant Bondi coordinates}
	\label{sec:neigh-killnonin}
	
Even if we are not assuming that a NEH $\ih$ is invariant-generic, a Killing 
vector field still takes a simple form in suitably chosen Bondi-like coordinates.  
\begin{theorem}\label{thm:Kill-hel}
  Consider a non-expanding horizon $(\ih,q_{ab},D_a)$ contained in a spacetime 
  $(\M,g)$; suppose $(\M,g)$ admits a Killing field $K^\mu$ tangent to $\ih$; 
  suppose there is a nowhere vanishing vector field $\ell^\mu\in\Gamma(T(\ih))$ 
  such that the restriction $k^a$ of $K^\mu$ to $\ih$ satisfies
  \begin{equation} 
    [\ell,k]\ =\ 0 
  \end{equation}
  and a foliation of $\ih$ by sections of $\pi:\ih\rightarrow\bas$ preserved by 
  the flow of $k^a$.
  Then, there are coordinates $(\hat{x}^A,v)$ on $\ih$ such that 
  \begin{equation} 
    k\ =\ a\partial_v + b\Phi^A(x^B)\partial_A \ ,\qquad  a,b\in\mathbb{R}\ , \label{eq:k}
  \end{equation}
  and in the corresponding Bondi like coordinates $(x^A,r,v)$ the Killing field $K^\mu$ 
  is necessarily of the form: 
  \begin{equation}\label{eq:K-ch}
    K\ =\ a\partial_v + b\Phi^A(x^B)\partial_A \ , \qquad
    a,b = \const \ .
  \end{equation}
\end{theorem}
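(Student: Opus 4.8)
The plan is to mimic the proof of Theorem~\ref{cor:kill-form}, replacing the \emph{invariant} Bondi structures by the Bondi-like structures built (via the construction of Sec.~\ref{sec:neigh-invars}) out of the \emph{given} data: the nowhere-vanishing $\ell^\mu\in\Gamma(T(\ih))$, the $k$-preserved foliation of $\ih$ by sections of $\pi$, and the associated normalized covector $n_a$ with $\ell^a n_a=-1$. First I would record that, since $k^a$ preserves the foliation and commutes with $\ell$, and since $\lie_k q_{ab}=0$ (as $k^a$ is the restriction of a Killing field, hence an infinitesimal symmetry of $\ih$ by Lemma~\ref{[K,zeta]}), the pushforward $\hat{k}^A=\ppi_*k$ is a Killing field of $(\bas,\hat q_{AB})$ generating an action preserving the chosen section; then the discussion around Corollary~\ref{cor:axi-form} (or simply choosing adapted coordinates on $\bas$ in which $\hat k=b\,\Phi^A\partial_A$ with $\Phi$ independent of $v$) together with the relation $[\ell,k]=0$ and $k^a D_a v = a$ (constant, because $\rd\sgr{}$-type arguments / $[\ell,k]=0$ force the ``$\partial_v$''-component to be $v$-independent and the foliation-preservation forces it constant on leaves) yields the normal form \eqref{eq:k} for $k$ on $\ih$ in the coordinates $(\hat x^A,v)$ adapted to $\ell$ and the foliation. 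This is the horizon-level statement and is essentially a repackaging of the symmetric-NEH results already quoted.

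Next I would run the Bondi-like construction of Sec.~\ref{sec:neigh-invars} starting from exactly this $\ell$, this foliation and this $n_a$, producing $\bn^\mu$ (parallel-propagated null, $\bn_\mu=-(\rd v)_\mu$, $\bn^\mu=-(\partial_r)^\mu$), $\zeta^\mu=(\partial_v)^\mu$, the radial coordinate $r$ and the transverse coordinates $x^A$ extending $\hat x^A$ off $\ih$ along $\bn$. The key commutation fact I need is $[K,\bn]=0$ in this neighbourhood. This should follow exactly as in the remark after Lemma~\ref{[K,zeta]}: $\bn$ is defined by parallel transport of $n_a|_{\ih}$ along $\bn$-geodesics, $K$ is a Killing field, and $K$ is tangent to $\ih$ with $k=K|_{\ih}$ preserving the relevant cross-section and its normal covector $n_a$ (because $\lie_k q_{ab}=0$, $[\ell,k]=0$ and $k$ preserves the foliation force $\lie_k n_a=0$ on $\ih$); hence the isometry flow of $K$ maps $\bn$-geodesics to $\bn$-geodesics and preserves the affine parametrization, giving $[K,\bn]=0$ on all of $\M'$. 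Equivalently $\lie_K(\partial_r)=0$.

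Then the final normal form \eqref{eq:K-ch} drops out of the same one-line calculation as in the proof of Theorem~\ref{cor:kill-form}:
\begin{equation}
0\ =\ [\bn,K]\ =\ -[\partial_r,\,K^\mu\partial_\mu]\ =\ -(\partial_r K^\mu)\partial_\mu \ ,
\end{equation}
so $K^\mu$ is independent of $r$, i.e. $K^\mu(x^A,v,r)=K^\mu(x^A,v,0)=k^\mu(x^A,v)$. Feeding in the horizon normal form \eqref{eq:k}, namely $k=a\partial_v+b\,\Phi^A(x^B)\partial_A$ with $a,b$ constant and $\Phi$ independent of $v$, and noting that on $\ih$ we have $\partial_v=\zeta|_\ih=\ell$ and that $\partial_A,\partial_v$ are the same coordinate vector fields used in the bulk, we conclude $K=a\partial_v+b\,\Phi^A(x^B)\partial_A$ with $a,b=\const$ throughout $\M'$, which is \eqref{eq:K-ch}.

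I expect the main obstacle to be the rigorous justification of $\lie_k n_a=0$ on $\ih$ and the consequent $[K,\bn]=0$ in the bulk \emph{without} invoking invariant-genericity: one must check that the chosen foliation, being merely $k$-preserved rather than canonically determined, still has its normal covector Lie-dragged by $k$, and that the normalization $\ell^a n_a=-1$ is $k$-invariant — both follow from $\lie_k q_{ab}=0$ and $[\ell,k]=0$, but this is the point where the argument genuinely differs from the invariant-generic case and deserves care. A secondary, more bookkeeping-level point is confirming that the coordinates $(\hat x^A,v)$ adapted to $\ell$ and the $k$-preserved foliation can be chosen so that $\Phi^A$ is literally $v$-independent; this is exactly the content of (the relevant part of) Corollary~\ref{cor:axi-form} applied to the cyclic/helical structure generated by $\hat k$, so I would cite it rather than reprove it.
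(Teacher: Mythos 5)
Your proposal is correct and follows essentially the same route the paper intends: the paper leaves Theorem~\ref{thm:Kill-hel} without an explicit proof, but its argument is clearly the one you reconstruct — establish $[K,\bn]=0$ from the isometry flow preserving the chosen foliation and its normal (as in Lemma~\ref{[K,zeta]}), then apply the one-line computation $0=[\bn,K]=-(\partial_rK^\mu)\partial_\mu$ from the proof of Theorem~\ref{cor:kill-form} together with the horizon normal form of $k$ from the symmetric-NEH results. Your flagged subtlety ($\lie_k n_a=0$ without invariant-genericity) is indeed the point where the hypotheses on the $k$-preserved foliation and $[\ell,k]=0$ are used, and you resolve it correctly.
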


\section{$4$d Einstein-Maxwell neighbourhoods of 3d NEHs} 
  \label{sec:4d-neigh}

In the remaining part of the article we restrict our studies to the spacetime 
neighbourhood of the horizon being of dimension $4$. The focus point of this 
section is the analysis of the spacetime metric expansion in the radial (the 
vector field $\bn^\mu$) direction first without restrictios on the spacetime 
matter content and next assuming that the only matter admitted is the Maxwell 
field. In the latter case we discuss the characteristic Cauchy problem on $\ih$.

We start with a short summary of the structure and evolution equations of the 
Maxwell field on a NEH. Next we provide a short comprehensive discussion of the 
Cauchy problem on electro-vacuum NEH.

Although the description of an electrovac neighbourhood of a NEH and the initial 
value problem corresponding to it can be (and are) formulated in the geometric 
formalism used so far in the article, it is particularly convenient (especially 
in the task of providing the metric expansion) to perform the analysis using 
appropriately chosen Newman-Penrose null frame (see Appendix \ref{app:NP}). 
Therefore, we introduce such frame (in subsection \ref{sec:null-frame}) adopted 
to the Bondi-like coordinate system defined in Section~\ref{sec:Bondi} further 
providing the dictionary between the formerly used geometric quantities and the
new objects defined by the frame.
This construction is then used to analyze the radial expansion of the spacetime 
metric and determine the initial data (at the horizon) necessary to uniquely specify 
the terms of expansion for any matter content and next, with restriction to 
electrovac spacetime.

Next, in subsection \ref{sec:evac-spc} we restrict our interest solely to electrovac
horizon neighbourhoods. To do so we reintroduce the dictionary between the two formalisms 
(geometric and the null frame one) provided earlier, this time adopting it to context 
at hand. Using it we discuss the properties of Maxwell field equations in the horizon 
neighbourhood. These properties are next used to reexamine the set of initial data 
needed to specify the expansion of metric previously studied in general matter case. 

The section is concluded by the detailed discussion of the characteristic
initial value problem for electrovac case with initial surfaces
defined by Bondi-like coordinate extension (subsection \ref{sec:ivp}) and with use
of the null frame introduced in Section~\ref{sec:null-frame}.
In all the subsections for the reader's convenience we just 
present the results, showing the calculations in the appendices
\ref{app:4d-neigh} and \ref{app:evac-spc}.

\subsection{Electromagnetic field on a NEH.}

Suppose the spacetime $\M$ is $4$ dimensional, and an electromagnetic field $F_{\mu\nu}$
is present in a neighbourhood of $\ih$. We will be considering the Einstein-Maxwell
equations in detail in section \ref{sec:evac-spc} in terms of the Newman-Penrose
components. But before that, let us recall the equivalent geometric description 
of the constraints imposed on the components of $F_{\alpha\beta}$ by the Maxwell 
vacuum equations \cite{lp-extr}.
 
As in the earlier parts of this article, given a spacetime covetor $W_\alpha$, 
its pullback on $\ih$ is denoted by $W_a$. Similarly for any covector $W_a$ 
orthogonal at $p\in \ih$ to $\ell^a$, we denote the corresponding element of 
$(T_p\ih/L)^*$ by $W_A$. 

The electromagnetic field energy-momentum tensor $T_{\alpha\beta}$ satisfies Stronger 
Energy Condition \ref{c:energy}, hence  $T_{ab}$  satisfies (\ref{eq:lie_T-1}). 
The tensor $T_{ab}$ contributes to the constraints (\ref{eq:[l,D]1}), and the condition (\ref{eq:lie_T-1}) amounts to the vanishing of some components of $F_{\alpha\beta}$ 
at $\ih$, namely
\begin{equation} 
  \ell^a(F-i\star_{\M}F)_{ab}\ =\ 0 \ , \label{Fconstr1}
\end{equation}
where $\star_{\M}$ is the spacetime Hodge star. 
 
We assume that $F_{\mu\nu}$ satisfies in a spacetime neighbourhood of $\ih$ the 
Maxwell vacuum equations
\begin{equation}\label{eq:Maxwell-vac} 
  \rd (F-i\star_{\M}F)\ =\ 0 \ . 
\end{equation}

The Maxwell equations constrain further the remaining components of $F_{\mu\nu}$ 
at $\ih$. In the section~\ref{sec:frame-ih} we derive these constraints by using 
a null frame adapted to $\ih$. Here we express the result in a frame independent 
way.

The first of Maxwell constraints reads 
\begin{equation} 
  \ell^a\star_{\M}\rd(F-i\star_{\M}F)_a \ =\ 0 \ . \label{Fconstr2} 
\end{equation}
Therefore the covector $\star_{\M}\rd(F-i\star_{\M}F)_a$ is orthogonal to $\ell$
and can be subject to the horizon Hodge dualization $\star_\ih$ introduced in 
Section~\ref{ssec:neh-def}. The second constraint following from the Maxwell equations 
takes the form of the horizon self-duality condition 
\begin{equation} 
  \star_\ih(\star_{\M}d(F-i\star_{\M}F))_A\ =\ i \star_{\M}d(F-i\star_{\M}F)_A \ .  
  \label{Fconstr3} 
\end{equation}

\subsection{Geometry of the Cauchy problem on $\ih$}
	\label{sec:cauchy-geom}
	
As a null surface, a NEH $\ih$ is not a part of a typical surface to formulate the
initial value problem for the Einstein - matter field equations -- a Cauchy problem 
for general relativity. A suitable formulation is known as the 
characteristic Cauchy problem \cite{Friedrich-evac1,*Friedrich-evac2,Rendall}. 
We will apply it to the NEHs in this subsection, using the null frame approach to 
gravity called the Newman-Penrose framework. Here, we express the outline of the 
results in a geometric, frame independent way. The details are presented in further 
subsections: \ref{sec:null-frame} through \ref{sec:evac-spc} and in 
Appendix~\ref{app:evac-spc} .   
   
Consider a NEH $(\ih,q_{ab},D_a)$ in a spacetime $(\M,g_{\alpha\beta})$.
The Bondi-like coordinate systems (introduced in section 3.1) adapted to $\ih$ 
form an atlas on the neighborhood of $\ih$. Let $(x^A,v,r)$ be one of these coordinate 
systems.

At any point of $\ih$, the spacetime metric tensor can be written in the form
\begin{equation}
  g_{\alpha\beta} \rd x^\alpha\otimes \rd x^\beta \ 
  =\ q_{AB} \rd x^A\otimes \rd x^B\ -\ \rd v\otimes \rd r-\rd r\otimes \rd v \ .
\end{equation}
Furthermore, the derivative $\partial_rg_{\alpha\beta}$ for all $\alpha,\beta =1,2,3,4$ 
is determined at $\ih$ by the components of the connection $D_a$ (see (\ref{eq:N-fc-ih}) 
below). The higher derivatives $\partial^k_rg_{\alpha\beta}$ are determined by the 
Einstein-matter field equations and data defined on a {\it 2-dimensional slice} 
$\tilde{\ih}$ of $\ih$ \cite{Friedrich-evac1,*Friedrich-evac2}.

\subsubsection{The Einstein vacuum case: pure gravitational field}

Suppose, in a neighbourhood of $\ih$ the vacuum Einstein equations hold.
Then, the horizon geometry $q_{ab},D_a$ satisfies the constraints \eqref{eq:[l,D]1}
with arbitrary null vector field $\ell^a = (\partial_v)^a$ and  
\begin{equation}
  \Ricn{4}_{ab}\ =\ 0 \ . \label{Ric=0}
\end{equation} 
To determine  the derivatives $(\lie_{\ell})^kg_{\alpha\beta}$, $k=2,\ldots,n$ at 
$\ih$ it is sufficient to know at a slice of $\ih$,
\begin{equation}\label{eq:slc-def-4d}
  \tilde{\ih}_{v_0} = \{ {\rm x}\in\ih :\  v({\rm x})=v_0 \}
\end{equation} 
(assuming it is a global section of $\ppi:\ih\rightarrow\bas$), the shear (see 
\ref{G32} below) $\lambda$ of the (transversal to $\ih$, orthogonal to the slice, 
null vector field) $n^\mu=(\partial_r)^\mu$, and all its derivatives 
\begin{equation}
  \partial_r^k\lambda, \qquad k=0,\ldots,n-2 \ .
\end{equation} 
The above statement can be demonstrated explicitly (``exactly'') by solving the 
hierarchy of the ordinary differential equations (ODE's) - see the next subsections and 
Appendix~\ref{app:frame-exp}.\footnote{Actually, the initial data the geometry 
	$(q_{ab},D_a)$ defines on the slice is also sufficient to determine $(q_{ab},D_a)$ 
	on the whole $\ih$).
} 

In order to determine  the spacetime metric $g_{\alpha\beta}$ 
in some 4-dimensional region containing $\ih$, we can use the $3$-dimensional surface 
in $\M$: $\N_{v_0}=\{ {\rm x}\in\M :\  v({\rm x})=v_0 \}$ spanned by the null geodesics 
tangent to the vector field $\bn^\mu$ and intersecting the slice  $\tilde{\ih}_{v_0}$ 
of $\ih$ specified in \eqref{eq:slc-def-4d}. Then, the following data:
\begin{enumerate}[\hspace{0.5cm}1)]
  \item on $\ih$:  $q_{ab},D_a$ such that (\ref{eq:[l,D]1}) with $\Ricn{4}_{ab}=0$ 
    \label{it:ivp-vac-geom}
  \item on $\N_{v_0}$: $\lambda$ with certain boundary condition specified on 
		$\lambda|_{\ih\cap\N_{v_0}}$ 
    induced by $D_a$ (see \eqref{eq:[l,D]1}) \label{it:ivp-vac-lambda}
\end{enumerate}
determine the spacetime metric tensor (modulo diffeomorphisms) in the domain of 
dependence of $\ih\cup \tilde{\ih}_{v_0}$. Moreover, as we vary the vacuum spacetime 
metric tensor in such a way that $\ih$ is a NEH, and $\N_{v_0}$ satisfies the definition, 
the data \ref{it:ivp-vac-geom})-\ref{it:ivp-vac-lambda}) above ranges all 
the possible NEH geometries on $\ih$ such that (\ref{eq:[l,D]1},\ref{Ric=0}), and 
all possible functions $\lambda:\N_{v_0}\rightarrow\mathbb{C}$ which satisfy the boundary 
condition at $\ih\cap\N_{v_0}$.

\subsubsection{The Einstein-Maxwell vacuum case: pure gravitational and Maxwell field}  

Suppose now, in the neighbourhood $\M'$ the vacuum Einstein-Maxwell equations hold.
Then, the horizon geometry $(q_{ab},D_a)$ and the pullbacks $F_{ab}$, $\star_\M F_{ab}$
satisfy the constraints (\ref{Fconstr1}, \ref{Fconstr2}, \ref{Fconstr3}).      
At every point of $\ih$, all the transversal derivatives $(\lie_{\ell})^kg_{\alpha\beta}$ 
as well as $(\lie_{\ell})^kF_{\alpha\beta}$ can be determined by certain components 
of the horizon geometry and electromagnetic field on it (see the next subsections and 
Appendix~\ref{app:evac-exp}). 
More precisely, the initial value problem is again the characteristic Cauchy problem 
with the initial data null surfaces $\ih$, $\N_{v_0}$ specified exactly as in previous 
sub-subsection. Then in order to determine the spacetime metric $g_{\alpha\beta}$ and 
the electromagnetic field $F_{\alpha\beta}$ in the domain of dependence of 
$\ih\cup \N_{v_0}$, it is sufficient to specify the following data:
\begin{enumerate}[\hspace{0.5cm}1)]
  \item on $\ih$: $q_{ab},D_a$,  $F_{ab}$ and $\star_\M F_{ab}$ such that 
		(\ref{Fconstr1}, \ref{Fconstr2}, \ref{Fconstr3}) hold,
    \label{it:ivp-em-geom}
  \item on $\N_{v_0}$: $\lambda$ with the boundary data $\lambda|_{\ih\cap\N_{v_0}}$ 
    induced by $D_a$ and the pullbacks $\bn^\mu F_{\mu a}$ and $\bn^\mu\star_\M F_{\mu a}$
    satisfying the boundary conditions induced by the Maxwell equations \eqref{eq:Maxwell-vac}.
    \label{it:ivp-em-lambda}
\end{enumerate}
Moreover, as we vary the vacuum solutions $g_{\mu\nu}$ and $F_{\mu\nu}$ to the 
Einstein-Maxwell equations such that $\ih$ is a NEH and $\N_{v_0}$ satisfies 
the definition, the data \ref{it:ivp-em-geom})-\ref{it:ivp-em-lambda}) above 
ranges: \ref{it:ivp-em-geom}) all the NEH geometries and the electromagnetic 
fields on $\ih$ such that (\ref{Fconstr1}, \ref{Fconstr2}, \ref{Fconstr3}) and, 
\ref{it:ivp-em-lambda}) all the possible functions $\lambda:\N_{v_0}\rightarrow\mathbb{C}$ 
which satisfy the appropriate boundary condition at $\ih\cap\N_0$ listed in point 
\ref{it:ivp-em-geom}) and one-forms $\bn^\mu F_{\mu a}$ and $\bn^\mu\star_\M F_{\mu a}$ 
defined on $\N_{v_0}$ also satisfying the appropriate boundary condition at $\ih\cap\N_0$ 
listed in point \ref{it:ivp-em-lambda}).

Below we present the expansion, specification of the data and the boundary conditions 
in detail using the Newman-Penrose framework.

\subsection{The null frame, the metric expansion} 
  \label{sec:null-frame}  

Our starting point is the Bondi-like extension of the structures and coordinates 
defined in Section~\ref{sec:neigh-invars} on a NEH $\ih$: a null, nowhere vanishing 
vector field $\ell^a$ tangent to $\ih$, a function $v:\ih\rightarrow \mathbb{R}$ such 
that $\ell^aD_av=1$, and coordinates $(x^A,v)$ on $\ih$ such that $\ell^aD_ax^A=0$. 
Whereas $v$ is globally defined, the coordinates  $x^A$ are defined locally, form  
an atlas, the pullback by $\ppi^*$ of atlas $\hat{x}^A$ defined on the base manifold $\bas$. 
The function $v$ defines on $\ih$ the covector $n=-\rd v$, which in the neighborhood 
$\M'$ of $\ih$, defines in particular the null vector field $\bn^\mu$. Using this 
structure we define below a null frame (see Appendix \ref{app:NP} for the
basic properties of null frames) $ (e_1,\,e_2,\,e_3,\,e_4) = (m,\,\bar{m},\, \bn,\, \ell)$ 
such that 
\begin{equation}\label{eq:frame-e3def}
  (e_3)^\mu = \bn^\mu \ \ {\rm in}\ \M',
  \qquad  
  (e_4)^\mu\ =\ \ell^\mu\ \ {\rm at}\  \ih,
\end{equation}
and $e_1, e_2$ are tangent to the leaves of the foliation of $\ih$.
Whereas the vector fields $e_3$ and $e_4$ are defined on the entire neighborhood $\M'$, 
the domains of the vector fields $e_1$ and $e_2$ will coincide with those
of the coordinates $(x^A)$. By $(e^1,e^2,e^3,e^4)$ we will be denoting the dual coframe.  
This construction of the frame has been already discussed in \cite{l-sem,p-phd} and
subsequently presented in \cite{k-spc}.

The spacetime metric tensor $g_{\mu\nu}$ on $\M'$ and the degenerate metric tensor 
$q_{ab}$ induced on $\ih$ take in that frame the following form: 
\begin{subequations}\begin{align} 
  g_{\mu\nu}\ &=\ \left(e^1\otimes e^2 + e^2\otimes e^1 
      - e^3\otimes e^4 - e^4 \otimes e^3\right)_{\mu\nu} \ , \\
  q_{ab}\ := \ g_{ab}\ &=\  (e^1 \otimes e^2 + e^2\otimes e^1)_{ab}. 
\end{align}\end{subequations} 
where $(\hspace{0.5mm}\cdot\hspace{0.5mm})_{ab}$ stands for the pull-back to $\ih$ 
of a tensor originally defined onto $\M$.

\subsubsection{Geometry and constraints at the horizon, the invariants}
  \label{sec:frame-ih}

Let us now focus on the properties of the frame at $\ih$ itself. For this purpose 
through this sub-subsection we will adopt the shortened notation, using `$=$' for 
`$=|_{\ih}$'.
In the Bondi-like coordinates $(x^A,v,r)$ defined in Section \ref{sec:neigh-invars}, 
$r=0$ on $\ih$, and at $\ih$ the vector field $\partial_v$ is null
\begin{equation}\label{eq:frame-lv}
  \ell^a \ =\ (\partial_v)^a
\end{equation}
and has constant surface gravity.
The real vectors $\Re(m)^\mu$, $\Im(m)^\mu$ are (automatically) tangent 
to $\ih$. To adapt the frame further, we assume the vector fields
$\Re(m)^a, \Im(m)^a$  tangent to the constancy surfaces $\slc_v$ of 
the coordinate $v$  \eqref{eq:v-M} are Lie dragged by the flow $[\ell]$ 
\begin{equation}\label{m} 
  \lie_{\ell} m^a\ =\ 0 \ . 
\end{equation} 
This implies immediately, that the projection of $m^a$ onto $\bas$
uniquely defines on a horizon base space $\bas$ a null vector frame
$(\hm,\hmb)$ and the differential operators $\delta$, $\bar{\delta}$
\begin{subequations}\label{eq:base-frame}\begin{align} 
  (\ppi_* m)^A \ &=:\ \hm^A \ ,  &
  \m\ &:=\ \hm^A(x^B)\partial_A \ 
\tag{\ref{eq:base-frame}}\end{align}\end{subequations}
corresponding to the frame vectors.

The frame specified above is adapted to: the vector field $\ell^a$,
the flow of $\ell^a$  invariant foliation of $\ih$, and the null complex-valued
frame $\hm^A$ defined on the manifold $\bas$. Spacetime frames
constructed in this way on $\ih$ will be called {\it adapted}.


Since all the frame elements are Lie dragged by $\ell^a$, 
the connection $D$ induced on $\ih$ can be decomposed as
follows\footnote{The decomposition is consistent with the definition
  of connection coefficients presented in \ref{app:NP}.
}
\begin{subequations}\label{G}\begin{align} 
  \label{G12} 
    m^\nu D \bar{m}_\nu\ &=\ 
    \pi^* \left({\hm}^A\hD {\hmb}{}_A\right)\ =: \ppi^*\hat{\Gamma},\\ 
  \label{G43} 
    -n_\nu D \ell^\nu\ &=\ \w{\bsl} = 
    \pi e^2_{(\ih)} + \bar{\pi}e^1_{(\ih)} 
    + \sgr{\bsl} e^3_{(\ih)},\\ 
  \label{G32} 
    -\bar{m}^\nu D n_\nu\ &=\  \mu e^1_{(\ih)} +\lambda e^2_{(\ih)} 
    + \pi e^4_{(\ih)},\\ 
  m_\mu D \ell^\mu\ &= \ 0, 
\end{align}\end{subequations} 
where $\hat{\Gamma}$  is the Levi-Civita connection 
$1$-form corresponding to the covariant derivative $\hD$
defined by $\hq$ and to the null frame  $\hm^A$ defined on $\bas$
\begin{equation}\label{2gamma} 
  \hat{\Gamma}\ =:\ 2\bar{a} \hat{e}^1 + 2{a} \hat{e}^2 \ .
\end{equation} 

The rotation $1$-form potential $\w{\bsl}$ in the chosen frame takes
the form 
\begin{equation}\label{eq:w-frame} 
  \w{\bsl}\ =\  \pi e^2_{(\ih)} + \bar{\pi}e^1_{(\ih)} 
            - \sgr{\ell}e^4_{(\ih)}\ , 
\end{equation} 

In terms of the coordinates $(x^A,v)$ on $\ih$,  the functions $a$ and $\pi$ satisfy
\begin{equation}\label{eq:frame-api-evo}
  \partial_va\ =\ \partial_v\pi\ =\ 0 \ .
\end{equation}

The Ricci tensor is represented by the set of the Newman-Penrose coefficients 
$\Phi_{ij}$, $i,j=0,1,2$ \eqref{eq:NP-WR}. In terms of them, the constraints induced 
on the horizon geometry $(q_{ab,D_a})$ by the Einstein field equations described in
section \ref{ssec:degrees} are by the identity \eqref{eq:[l,D]1} equivalent to the 
following set of equations  
\begin{subequations}\label{eq:constr-evac-4D}\begin{align} 
  \label{Dmu} 
    8\pi G( T_{m\bar{m}} - \frac{1}{2}Tq_{m\bar{m}}) \  =\ -2(\Phi_{11}+3\Ricn{4}{})\  
    &=\ 2D\mu + 2\sgr{\bsl}\mu - \tdiv\tw{\ell} 
        - |\tw{\ell}|^2_{\tq} + \Ricn{2}_{m\bar{m}} \ ,\\ 
  \label{Dlambda} 
    8\pi G T_{\bar{m}\bar{m}} =\ -2\Phi_{20}\    
    &=\ 2D\lambda + 2\sgr{\ell}\lambda - 2\mb \pi 
        - 4a\pi - 2\pi^2 \ , 
\end{align}\end{subequations}  
where $D:=\ell^{a}\partial_a$, $\m:=m^a\partial_a$, $\Ricn{2}_{m\bar{m}}:=(\ppi^{\star}\Ricn{2})_{ab}m^a\bar{m}^b$ 
and $(\tdiv\,\tw{\ell})$ is and the divergence of projected rotation $1$-form 
\eqref{eq:omega_def}. As functions 
of the variables $x^A$ (\ref{eq:xA-M}), by \eqref{eq:lie_q} and \eqref{eq:lieellomega} 
the latter two objects equal their counterparts $(\Ricn{2}_{m\bar{m}},\hdiv\,\hw{\ell})$ 
defined on the horizon base space 
\begin{subequations}\label{eq:Kdw}\begin{align} 
  \Ricn{2}_{m\bar{m}} \ &:= \ 2\m a + 2\mb \bar{a} - 8 a  \bar{a} \ ,  \\
  \hdiv\hw{\ell}\ 
    &= \ \m\pi + \mb\bar{\pi} - 2a\bar{\pi} - 2\bar{a}\pi \ .
\end{align}\end{subequations} 
\begin{rem}
  In terms of the Newman-Penrose coefficients, Definition~\ref{def:natural} 
  of the natural vector field $\ell^a$ of a NEH geometry reads: $\ell^a$ is tangent 
  to $\ih$, null, $\sgr{\ell}=1$ and
  \begin{equation} \label{eq:frame-muevo}
    \ell^a \mu_{,a}\ =\ 0.
  \end{equation}
  The invariant foliation listed in Definition \ref{invgen} and the corresponding 
  invariant variable $v$ is defined by the following condition 
  \begin{equation}
    \m\pi + \mb\bar{\pi} - 2a\bar{\pi} - 2\bar{a}\pi\ =\ 0.
  \end{equation}
\end{rem}

Furthermore, the condition (\ref{eq:lie_T-2}) reads
\begin{equation}\label{eq:lie_T-2-NP}
  D\Phi_{20}\ =\ D(\Phi_{11}+3\Ricn{4}{})\ =\ 0 \ ,
\end{equation}
whereas the functions $\Phi_{11}$, $\Phi_{20}$ and $\Ricn{4}{}$ are 
determined by the electromagnetic field (which we will see below).

Some components the energy-momentum and Weyl tensor vanish due to the Stronger 
Energy Condition \ref{c:energy}. Indeed, the following Ricci tensor components 
(listed in \eqref{eq:NP-stronger}) vanish on $\ih$ due to \eqref{RicclX}, and 
the Weyl tensor components (listed in \eqref{eq:NP-weaker}) due to the definition 
of NEH and the Bianchi equalities (see \eqref{eq:NP-WR} for the definition of 
components) 
\begin{subequations}\label{eq:NP-matt-gauge}\begin{align}  
  \label{eq:NP-stronger}
    \Phi_{00}\ =\ \Phi_{01}\ = \Phi_{10}\ &=\ 0 \ .\\
  \label{eq:NP-weaker}
    \Psi_0\ =\ \Psi_1\  &=\ 0 \ , 
\end{align}\end{subequations}
Moreover, the horizon geometry and the matter fields at $\ih$ determine the values 
at the horizon of the Weyl tensor components $\Psi_2$ and $\Psi_3$ (via the NP 
equations \eqref{eq:E-Psi2} and \eqref{eq:E-Psi3} respectively),
\begin{subequations}\begin{align}  
  \label{Psi2horizon}
    \Psi_2\ &=\ -\frac{1}{4}\Ricn{2} -\frac{1}{2}(\delta\pi-\bar{\delta}\bar{\pi}) 
    -a\bar{\pi}+\bar{a}\pi +\Phi_{11}+\frac{1}{24}\Ricn{4}\\
  \label{Psi3horizon}
    \Psi_3\ &=\ \bar{\delta}\mu-\delta\lambda +\pi\mu 
    +(4\bar{a}-\bar{\pi})\lambda+\Phi_{12} \ , 
\end{align}\end{subequations}
The remaining one: $\Psi_4$, is constrained by Bianchi identity \eqref{eq:B-DPsi4}
which at the horizon reads
\begin{equation}\label{eq:DPsi4}\begin{split}
  D\Psi_4\ =\ &-2\sgr{\bsl}\Psi_4 + \mb \Psi_3
    + (5\pi + 2a) \Psi_3 - 3\lambda \Psi_2 \\
  \ &- \bar{\mu}\Phi_{20} + (\pi+2a)\Phi_{21} - 2\lambda\Phi_{11}
    - \Phi_{20}{}_{,r} \ .
\end{split}\end{equation}
In this way, $\Psi_4$ at $\ih$ is uniquely determined by the value of
$\Psi_4$ on chosen section, the horizon geometry and the matter fields.

\subsubsection{Extension to the spacetime neighbourhood}
  \label{sec:frame-ext}

Given the coframe $(e^1,..,e^4)$ dual to the frame $(m,\bar{m},n,\ell)$
defined above at $\ih$, the condition 
\begin{equation}\label{eq:frame-ext}
  \nabla_{n}e^{\mu} = 0
\end{equation}
defines its unique extension to the spacetime neighbourhood $\M'$.
The corresponding connection coefficients $g(e_\gamma,\nabla_\alpha e_\beta)$
are defined in (\ref{eq:NP-con}).  In
the Bondi-like coordinate system this adapted (co)frame extended by
\eqref{eq:frame-ext} takes the form 
\begin{subequations}\label{eq:frame-def}\begin{align}
  e_1 &= m = \bar{e_2} = m^A(\partial_A+Z_A\partial_r) \ ,  &
  e^1 &= \bar{e}^2 = \bar{m}_A\rd x^A+X\rd v \ , \label{eq:m-def} \\
  e_3 &= n = -\partial_r \ , &
  e^3 &= -\rd r + Z_A\rd x^A + H\rd v \ , \\
  e_4 &= \ell = \partial_v-\bar{X}e_1-Xe_2+H\partial_r \ , &
  e^4 &= \rd v \ ,
\end{align}\end{subequations}
where $Z_A,H$ are real functions, $m^A,X$ are complex. At the horizon
these functions take the following values
\begin{subequations}\label{eq:fc-H}\begin{align}
  X|_{\ih}\ =\ H|_{\ih}\ =\ Z_A|_{\ih}\ &=\ 0  &
  m_A|_{\ih}\ &=\ \pi^*\hat{m}_A \ . \tag{\ref{eq:fc-H}}
\end{align}\end{subequations}

The condition \eqref{eq:frame-ext} (consistent with $\nabla_nn^\mu=0$ and 
$n^\mu n_\mu=$const) imposes on the connection coefficients (defined via 
\eqref{eq:NP-con}) corresponding to it the following constraints true
in $\M'$ 
\begin{equation}\label{eq:NP-zeros}
  \tau\ =\ \gamma\ =\ \nu\ =\ \mu-\bar{\mu}\
    =\ \pi-(\alpha+\bar{\beta})\ =\ 0 \ .
\end{equation}
In particular the last constraint allows us to express the
coefficients $(\alpha,\beta)$ in terms of $\pi$ and 
\begin{equation}\label{eq:NP-adef} 
  a\ :=\ \fracs{1}{2}(\alpha-\bar{\beta}) \ .
\end{equation}
The commutators of the differential operators corresponding to the
frame vectors can be expressed in terms of the functions
$(H,X,m^A,Z_A)$ and their derivatives. On the other hand they are
determined by the connection coefficients via \eqref{eq:NP-comm}.
That correspondence leads to the constraints on the frame coefficients
which determine their evolution of the functions $(X,H,m_A,Z_A)$ along the 
transversal to $\ih$ null geodesics:
\begin{subequations}\label{eq:N-fc-gen}\begin{align}
  \label{eq:n-X-gen}
    -\nr X\ &=\ \bar{\pi} + \mu X + \bar{\lambda}\bar{X} \\
  \label{eq:n-H-gen}
    \nr H\ &=\ (\epsilon+\bar{\epsilon}) + \pi X + \bar{\pi}\bar{X}\\
  \label{eq:n-mA-gen}
    \nr m_A\ &=\ \bar{\lambda}\bar{m}_A + \mu m_A \\
  \label{eq:n-mZ-gen}
  \nr Z_A\ &=\ \pi m_A + \bar{\pi}\bar{m}_A
\end{align}\end{subequations}
This set is supplemented by analogous evolution equations for the spin (connection) 
coefficients (\ref{eq:ODE-n-mu}-\ref{eq:ODE-n-varpi}, \ref{eq:ODE-n-rho}, 
\ref{eq:ODE-n-sigma}, \ref{eq:ODE3}) and Weyl tensor components (\ref{eq:ODE-n-Psi3}, 
\ref{eq:ODE-n-Psi2}, \ref{eq:n-Psi0-N}, \ref{eq:ODE-n-Psi1}).

The global structure of the resulting frame is as follows: The neighborhood $\M'$ 
of a given NEH $\ih$ is covered by open sets ${\cal U}_I$, $I=1,\ldots,K$ obtained 
from a covering ${\cal \hat{\cal U}_I}$, $I=1,\ldots,K$ of the base $\bas$. Each open set  
${\cal U}_I$ is the union of the null geodesics tangent to $n^\mu$ or  to $\ell^\mu$, 
and intersecting the set $\hat{\cal U}_I$, for every $I=1,\ldots,K$.

\subsubsection{Metric expansion at the horizon}
  \label{sec:frame-exp}

Since in this sub-subsection we consider objects on $\ih$ only, we again adopt the 
notation $ `='\ \equiv\ `=|_\ih'$.

It is a straightforward observation that the horizon geometry $(q_{ab},D_a)$ already
determines the frame components at $\ih$ (through \eqref{eq:fc-H}) as
well as their $1$-order radial derivative $\partial_r$ (via \eqref{eq:N-fc-gen}),
\begin{subequations}\label{eq:N-fc-ih}\begin{align}
  \label{eq:n-X-ih}
    X_{,r}\ &=\ -\bar{\pi}  \\
  \label{eq:n-H-ih}
    H_{,r}\ &=\ \sgr{\ell}\\
    \label{eq:n-mZ-ih}
  Z_{A,r}\ &=\ \pi m_A + \bar{\pi}\bar{m}_A\\
  \label{eq:n-mA-ih}
     m_{A,r}\ &=\ \bar{\lambda}\bar{m}_A + \mu m_A \\
\end{align}\end{subequations} 
The second order of the frame expansion, following directly from 
(\ref{eq:n-epsilon}, \ref{eq:n-mu}, \ref{eq:n-lambda}, \ref{eq:n-pi}) is
\begin{subequations}\begin{align}
	X_{,rr}\ &=\ -\bar{\Psi}_3 - \Phi_{12},\\
  H_{,rr}\ &=\ \bar{\Psi}_2 + \Psi_2 + 2\Phi_{11}
    - \fracs{1}{12}\Ricn{4}\\
	Z_A{}_{,rr}\ &=\ (\Psi_3+\Phi_{21})m_A
    + (\bar{\Psi}_3+\Phi_{12})\bar{m}_A \label{ddZ-gen}\\
  m_A{}_{,rr}\ &=\ - \Phi_{22}m_A - \bar{\Psi}_4\bar{m}_A \ . \label{ddm-gen}
\end{align}\end{subequations}
Note that the derivatives $H_{,rr}, X_{,rr}, Z_{A,rr}$ on $\ih$ are
determined directly by $(q_{ab},D_a)$ and the Ricci tensor 
[see (\ref{Psi2horizon},\ref{Psi3horizon},\ref{eq:DPsi4})]. The last derivative, 
${m}_{A,rr}$, involves a solution $\Psi_4$ to the equation \eqref{eq:DPsi4} uniquely 
determined by the initial value of $\Psi_4$ on chosen section and the horizon geometry.

To summarize, by direct inspection of the system of equations used here we see, 
that the data which is not determined, thus must be specified, consists of the 
following components: 
\begin{enumerate}[ (i)]
  \litem $\Phi_{21}, \Phi_{22}, \Ricn{4}, \Phi_{20,r}$ given on the
    entire $\ih$, and
  \litem $\Psi_4$ given on an initial slice $\slc$.
\end{enumerate}

\begin{rem}\label{locality} 
	As it is pointed out at the end of the previous Section \ref{sec:frame-ext}, 
	the elements frame $e_1$  and $e_2$ of the frame are defined locally, on the 
	sets $\ppi^{-1}(\hat{U}_I)$, $I=1,...,K$ covering the horizon $\ih$. On each 
	intersection between two sets, say $\ppi^{-1}{\cal U}_I$ and $\ppi^{-1}{\cal U}_J$, 
	there is an obvious transformation law, 
	\begin{equation} 
		e_1{}^{(I)}\ =\ u(x^A)^{(IJ)}e_1{}^{(J)} 
	\end{equation}
	where $u(x^A)^{(IJ)}\in {\rm U}(1)$. On the other hand, $e_3$ and $e_4$ are 
	defined globally, at every point of $\ih$. Now, the functions $\Phi_{21}, \Phi_{22}$ 
	and $\Phi_{20}$, as components of a tensor, are also defined locally, on each 
	set $\ppi^{-1}(\hat{U}_I)$, and satisfy the corresponding transformation laws 
	on the intersections. On the other hand, the Weyl tensor component  $\Psi_4$ 
	is not sensitive to the frame transformations preserving $e_3$ and $e_4$, and 
	$\Ricn{4}$ is just a scalar. The derivative $\partial_r=e_4^\mu\partial_\mu$, hence
	it is defined globally and commutes with the transformations.
\end{rem}

Finally, we address the question, what data is required to determine the general
order derivatives $\partial^n_r(e^{\mu})_{\nu}$. It turns out, that the general 
case is described by the following:
\begin{cor}\label{cor:data-gen}
  Given a NEH $\ih$  in a $4$-dimensional spacetime satisfying the Einstein field 
  equations with a general kind of matter, the Bondi-like coordinates $(x^A,v,r)$ 
  defined in Section~\ref{sec:neigh-invars} and a null frame $(e_1,e_2,e_3,e_3)$ 
  defined in Section~\ref{sec:frame-ih} and~\ref{sec:frame-ext},
  the following data
  \begin{enumerate}[ (i)]
    \litem the value of the constant $\sgr{\ell}$, 
    \litem on the initial slice $\slc$: ${m}^\alpha$ (which is tangent to the 
			slice by the construction), ${\pi}$, $\mu$, $\lambda$ and     
			$\nr^{k}\Psi_4$ $\forall k\in\{0,\cdots,n-2\}$
    \litem on $\ih$: $\nr^{k}\Phi_{11}$, $\nr^{k}\Phi_{21}$,
			$\nr^{k}\Phi_{22}$, $\nr^{k}\Ricn{4}$, $\nr^{k+1}\Phi_{20}$
			$\forall k\in\{0,\cdots,n-2\}$  
  \end{enumerate}
  determines uniquely all the radial derivatives 
  $\partial_r^ke_1^\mu, ..., \partial_r^ke_4^\mu$ (at $\ih$) of the frame components  
  up to the order $k=n$.
  The data is free, that is it is not subject to any extra constraints, modulo
  Remark \ref{locality}. Also, in the current work we make the additional  assumption \eqref{eq:lie_T-2} which in terms of the Newman-Penrose coefficients reads: $\partial_v\Phi_{02}=\partial_v\Phi_{20}=\partial_v(\Phi_{11}+3\Ricn{4})=0$.
\end{cor}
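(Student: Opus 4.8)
The plan is to prove Corollary~\ref{cor:data-gen} by induction on the radial order $k$, exploiting the hierarchical structure of the two families of ODEs along the transversal null geodesics generated by $\bn^\mu=-(\partial_r)^\mu$: the frame-coefficient equations \eqref{eq:N-fc-gen} for $(X,H,m_A,Z_A)$ and the companion spin-coefficient and Weyl-component equations referenced in Section~\ref{sec:frame-ext} (\eqref{eq:ODE-n-mu}--\eqref{eq:ODE-n-varpi}, \eqref{eq:ODE-n-rho}, \eqref{eq:ODE-n-sigma}, \eqref{eq:ODE3}, \eqref{eq:ODE-n-Psi3}, \eqref{eq:ODE-n-Psi2}, \eqref{eq:n-Psi0-N}, \eqref{eq:ODE-n-Psi1}). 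First I would set up the bookkeeping: list the full collection of frame functions, spin coefficients, and curvature components that appear, note which are fixed by the gauge conditions \eqref{eq:NP-zeros}, and observe that every equation in the system has the schematic form $\partial_r(\text{quantity})=(\text{polynomial in quantities of equal or lower radial order, plus their }\m,\mb\text{ and }D\text{ derivatives})+(\text{source})$, where the sources are exactly the undetermined data $\Phi_{11},\Phi_{21},\Phi_{22},\Ricn{4},\Phi_{20,r}$ (and, through \eqref{eq:DPsi4}, $\Psi_4$).

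The inductive step then goes as follows. Assume at order $\le k$ all frame, spin, and Weyl quantities are determined on $\ih$ (as functions of $x^A,v$) by the listed data. Differentiating each ODE $k$ times in $r$ and restricting to $\ih$ expresses the order-$(k+1)$ radial derivative of each quantity algebraically in terms of: order-$\le k$ radial derivatives (known by hypothesis), their tangential $\m,\mb$ derivatives along the slices (known, since the frame vectors $e_1,e_2$ and the base operators $\delta,\bar\delta$ are fixed on $\ih$), their $D=\partial_v$ derivatives (known, since $v$-evolution on $\ih$ is governed by the horizon constraints \eqref{eq:constr-evac-4D} together with \eqref{eq:frame-api-evo} and the additional hypothesis \eqref{eq:lie_T-2} in the form $\partial_v\Phi_{20}=\partial_v(\Phi_{11}+3\Ricn{4})=0$), and the freely-specified sources at order $k$. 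The only genuinely non-algebraic point is the Weyl component $\Psi_4$: at each order $\partial_r^k\Psi_4|_\ih$ is obtained by solving the $v$-ODE gotten from differentiating \eqref{eq:DPsi4} $k$ times in $r$, which is a linear first-order ODE along the flow of $\ell$ on $\ih$ with a source built from already-known lower-order data; this has a unique solution once $\partial_r^k\Psi_4$ is prescribed on the initial slice $\slc$, which is exactly item~(ii). The base case $k=0,1,2$ is precisely the content of \eqref{eq:fc-H}, \eqref{eq:N-fc-ih}, and the second-order expansion formulas already displayed, so the induction is anchored.

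Separately I would verify the \textbf{freeness} assertion: that the listed data is subject to no hidden constraints beyond the frame-overlap relations of Remark~\ref{locality}. Here the argument is that the horizon geometry $(q_{ab},D_a)$ plus the matter components $\Phi_{11},\Phi_{21},\Phi_{22},\Ricn{4}$ and $\Phi_{20,r}$ on $\ih$, and $\Psi_4|_\slc$, can be chosen independently: the NEH constraints \eqref{eq:[l,D]1} constrain only $D\mu$, $D\lambda$ and the combination appearing in \eqref{eq:constr-evac-4D}, i.e.\ they fix $v$-evolution of already-present horizon data rather than imposing relations among the new sources; the conditions \eqref{eq:NP-stronger}--\eqref{eq:NP-weaker} are consequences of Condition~\ref{c:energy} and the NEH definition, hence automatic; and \eqref{Psi2horizon}--\eqref{eq:DPsi4} are definitions/evolution laws, not constraints. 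One checks that prescribing the sources as arbitrary smooth functions (with the U(1) transformation behaviour of Remark~\ref{locality} for the frame-dependent ones $\Phi_{21},\Phi_{22},\Phi_{20}$) and $\Psi_4|_\slc$ arbitrary yields, via the ODE hierarchy, a consistent formal radial expansion; consistency on chart overlaps follows because $e_3,e_4$ and $\partial_r$ are global and the overlap maps are $x^A$-dependent only.

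The main obstacle I anticipate is the careful tracking of which tangential and $\partial_v$-derivatives of lower-order quantities actually appear when one differentiates the ODE system in $r$, and confirming that none of them secretly requires order-$(k+1)$ information — in particular that the coupling between the frame equations \eqref{eq:N-fc-gen} and the spin-coefficient equations (through $\mu,\lambda,\pi$ and their $r$-derivatives, which feed back into $Z_A,m_A$) closes triangularly at each order rather than circularly. Establishing this triangular (lower-triangular in radial order, once the $v$-evolution on $\ih$ is handled by the horizon constraints) structure for the full NP system is the technical heart of the proof; the rest is the routine induction sketched above, with the explicit computations deferred to Appendix~\ref{app:frame-exp}.
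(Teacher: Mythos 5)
Your proposal is correct and follows essentially the same route as the paper's proof in Appendix~\ref{app:frame-exp}: an induction on the radial order anchored at the explicitly computed $0$th, $1$st and $2$nd orders, with the inductive step obtained by $r$-differentiating the transversal ODE hierarchy (frame, spin-coefficient and Bianchi equations) and with $\partial_r^k\Psi_4|_{\ih}$ recovered at each order as the unique solution of the $v$-ODE coming from \eqref{eq:B-DPsi4} given its value on the initial slice. The triangular closure of the hierarchy that you flag as the technical heart is exactly what the paper's appendix tracks by listing, at each order, which derivatives of the Weyl and Ricci components are supplied by the tangential Bianchi identities (\ref{eq:nPsi0})--(\ref{eq:nPsi3}) before the next radial step is taken.
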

For the detailed proof of the above corollary the reader is referred
to Appendix~\ref{app:4d-neigh}. At this point one has to remember though, that 
Corollary~\ref{cor:data-gen} is not an existence or a uniqueness statement. 
For that, the data on $\ih$  has to be completed by suitable data defined on 
another null surface. Also the Einstein equations on $g_{\mu\nu}$ have to be 
completed by equations satisfied by the matter which contributes to the energy-momentum 
tensor (see Section~\ref{sec:cauchy-geom}).

The above expansion has been discussed in \cite{p-phd} and
subsequently presented up to a $2$nd order (also specifically in Einstein-Maxwell 
case) in \cite{k-spc}.

\subsection{4-dimensional electrovac NEH}
  \label{sec:evac-spc}

Let us now restrict  our studies to the case, when $\M'$ admits electromagnetic 
field as a sole matter content. The geometry of a non-expanding horizon in that case 
was analyzed already in \cite{abl-g}. Here we extend these studies by analysis of the
properties of an electrovac NEH's spacetime neighbourhood. First in
Section~\ref{sec:evac-exp} we introduce the necessary geometric
objects used for the description of the Maxwell fields, discuss their
properties and describe how the Maxwell evolution equations influence
the set of data necessary to determine the metric expansion at the
horizon. Next in Section~\ref{sec:ivp} we discuss the characteristic
initial value problem for the system under consideration in context of Bondi-like
coordinate system introduced in Section~\ref{sec:Bondi}. 

Given a NEH $\ih$ of a geometry $(q_{ab},D_a)$ we use throughout
this subsection the following objects: 
\begin{itemize}
	\item the Bondi-like coordinates $(x^A,v,r)$ adapted to $\ih$ and such that
		$\ell^a=(\partial_v)^a$ at $\ih$ is a null vector of a constant surface gravity 
		$\sgr{\ell}$,
	\item the null tangent frame $(e_1,...e_4)=(m^\mu,\bar{m}^\mu,n^\mu,\ell^\mu)$ 
		of the form \eqref{eq:frame-def} and the dual coframe $(e^1,...,e^4)$. 
\end{itemize}

\subsubsection{Constraints and the metric expansion}
  \label{sec:evac-exp}


Given a null frame specified above the electromagnetic field can be represented 
by the field coefficients defined in the following (equivalent to \eqref{eq:NP-F}) 
way\footnote{The decomposition is valid for a general Newman-Penrose null frame.}
\begin{equation}\label{eq:dummy}
  F\ :=\ \frac{1}{2}F_{\mu\nu}e^\mu\wedge e^\nu\
  =\ \ -\Phi_0e^4\wedge e^1 + \Phi_1(e^4\wedge e^3 + e^2\wedge e^1) 
  - \Phi_2 e^3\wedge e^2 + c.c. \ .
\end{equation}
The components of the energy-momentum tensor corresponding to the
field are just products of the respective field coefficients
\eqref{eq:NP-F-T} via \eqref{eq:NP-WR}. Their structure implies immediately 
that $T_{\mu\nu}\ell^{\mu}\ell^{\nu}\geq 0$. Whence from the Raychaudhuri equation 
it follows that $\Phi_{00}$ vanishes on $\ih$ , so does
\begin{equation}\label{eq:phi0-van}
\Phi_{0}|_{\ih}\ =\ 0 
\end{equation}
and so do all the components of $T_{\mu\nu}$ containing $\Phi_0$
\begin{equation}
  \Phi_{01}|_{\ih} = \Phi_{02}|_{\ih} = \Phi_{10}|_{\ih}= \Phi_{20}|_{\ih} = 0 \ .
\end{equation}
In consequence the Stronger Energy Condition \ref{c:energy} holds
for this kind of matter and (\ref{eq:lie_T-1}) is satisfied at $\ih$ automatically.

The component $\Phi_1$ is encoded into the pullback onto $\ih$ of 
$F_{\alpha\beta}-i*_\M F_{\alpha\beta}$
\begin{equation}
F_{ab}-i*_\M F_{ab}\ =\ \Phi_1\left(e^2\wedge e^1 \right)_{ab} 
\end{equation}

The electromagnetic field $F_{\mu\nu}$ is subject to the Maxwell equations
which in the null frame can be written in the form
\eqref{eq:Maxwell-NP}. On $\ih$ these equations reduce to 
\begin{subequations}\begin{align}
  \label{Phi01}
    \Phi_0|_{\ih}\ &=\ 0 \ , \quad D\Phi_1|_{\ih}\ =\ 0 \ ,  \\
  \label{DPhi2-nei}
    D\Phi_2|_{\ih}\ &=\ -\sgr{\bsl}\Phi_2 + (\mb+2\pi)\Phi_1 \ .
\end{align}\end{subequations}
The values of $\Phi_1, \Phi_2$ given on the chosen initial slice $\slc$
are then sufficient to determine the field $F_{\alpha\beta}$ at $\ih$
(provided that all the necessary frame and connection components are
given). Also the pull-back $F_{ab}$ of $F$ to $\ih$ is determined
just by $\Phi_1$ which furthermore can be represented as a pull-back
$\Phi_1 = \ppi^\star\hat{\Phi}_1$ of scalar $\hat{\Phi}_1$ defined on
$\bas$. 

The contribution of the Maxwell field to the frame expansion derived
in section \ref{sec:frame-exp} can be summarized as follows: since the 
Ricci tensor components (so the Maxwell field tensor) do not
contribute to the $0$th and $1$st order of expansion, the set of data
required to determine the expansions will be modified only for $n\geq
2$. In such case the modification can be summarized as:
\begin{cor}\label{cor:data-evac}
  Suppose $\ih$ is a non-expanding horizon embedded in $4$-dimensional
  electrovac spacetime. Let $(x^A,v,r)$ be the Bondi-like coordinates defined in 
  Section~\ref{sec:neigh-invars}, $(e_1,e_2,e_3,e_3)$ be a null frame defined in 
  ~\ref{sec:frame-ih} and~\ref{sec:frame-ext},
  and $\Phi_I$, $I=0,1,2$ be the electromagnetic field coefficients defined above.
  Then the value of the constant $\sgr{\ell}$ and the following data defined on the 
  initial slice $\slc$ 
	\begin{itemize}
		\item horizon geometry: ${m}^\alpha, {\pi},   \mu, \lambda$, 
		\item electromagnetic field: ${\Phi}_1, \ \ {\rm and}\ \ \nr^{k}\Phi_2, 
      \ \ \forall k\in\{0,\cdots,n-2\}$,  
		\item the Weyl tensor component: $\nr^{k}\Psi_4\ \ \forall k\in\{0,\cdots,n-2\}$
	\end{itemize}
	determines on $\ih$ uniquely all the radial derivatives 
	$\partial_r^ke_1^\mu, ..., \partial_r^ke_4^\mu$  of the frame components up to 
	the order $k=n$.
	This data is free, it is not subject to any constraints, modulo
  Remark~\ref{locality}. 
\end{cor}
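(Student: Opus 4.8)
The plan is to bootstrap from Corollary~\ref{cor:data-gen}. That result already reduces the problem, for general matter, to prescribing $\sgr{\ell}$, the slice quantities $m^\alpha,\pi,\mu,\lambda,\nr^k\Psi_4$ (for $k=0,\dots,n-2$), and the Ricci block $\nr^k\Phi_{11},\nr^k\Phi_{21},\nr^k\Phi_{22},\nr^k\Ricn{4},\nr^{k+1}\Phi_{20}$ (for $k=0,\dots,n-2$) on the \emph{whole} of $\ih$. So it suffices to show that in the electrovac case this Ricci block is itself determined by $\sgr{\ell}$ together with the slice data $\Phi_1$ and $\nr^k\Phi_2$ listed in Corollary~\ref{cor:data-evac}. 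First I would use the algebraic form of the electrovac field equations: the Maxwell stress--energy tensor is trace-free, so $\Ricn{4}$ equals a known constant and $\nr^k\Ricn{4}=0$ for $k\geq1$; and by \eqref{eq:NP-F-T}, \eqref{eq:NP-WR} each $\Phi_{ij}$ is a fixed bilinear expression, $\propto\Phi_i\bar\Phi_j$, in the Maxwell coefficients. Hence the entire Ricci block is a polynomial in the quantities $\nr^j\Phi_I|_\ih$, $I\in\{0,1,2\}$, and the task becomes to determine these from the stated data.

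For the Maxwell data on $\ih$ itself I would argue as follows. By \eqref{eq:phi0-van} and \eqref{Phi01}, $\Phi_0|_\ih=0$ and $D\Phi_1|_\ih=0$, so $\Phi_1|_\ih=\ppi^\star\hat{\Phi}_1$ is fixed by $\Phi_1|_\slc$; and \eqref{DPhi2-nei} is a linear ODE in $D$ along the null generators of $\ih$ whose source involves only $\Phi_1$ and already-available connection data, so $\Phi_2|_\ih$ is fixed by $\Phi_2|_\slc$ (recall that $\slc$ is taken to be a global section of $\ppi$). For the radial derivatives I would write the null-frame Maxwell system \eqref{eq:Maxwell-NP} in the adapted frame of Sections~\ref{sec:frame-ih}--\ref{sec:frame-ext}, using the vanishing connection coefficients \eqref{eq:NP-zeros}: exactly two of these equations involve $\Delta=n^\mu\partial_\mu=-\partial_r$, and they express $\partial_r\Phi_0$ through $\delta\Phi_1$ and connection terms times $(\Phi_0,\Phi_1,\Phi_2)$, and $\partial_r\Phi_1$ through $\delta\Phi_2$ and connection terms times $(\Phi_0,\Phi_1,\Phi_2)$, whereas \emph{no} equation yields $\partial_r\Phi_2$ --- which is precisely why $\nr^k\Phi_2$ is genuinely free data. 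The induction on $k$ then runs: given all $\nr^j\Phi_I|_\ih$ for $j<k$, differentiating the two radial relations $k-1$ times gives $\nr^k\Phi_0|_\ih$ and $\nr^k\Phi_1|_\ih$ in terms of lower-order data and tangential derivatives ($\delta\nr^{k-1}\Phi_1$, $\delta\nr^{k-1}\Phi_2$) intrinsic to $\ih$; and applying $\nr^k$ to \eqref{DPhi2-nei}, commuting $D$ past $\nr$ with the frame commutators \eqref{eq:NP-comm}, yields a linear ODE in $D$ for $\nr^k\Phi_2|_\ih$ with already-known source, so $\nr^k\Phi_2|_\ih$ is fixed by $\nr^k\Phi_2|_\slc$. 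This propagates every needed $\nr^j\Phi_I$ from $\slc$ to all of $\ih$. Since, in electrovac, \eqref{eq:DPsi4} is an ODE in $D$ for $\Psi_4|_\ih$ with Maxwell-built source (and similarly after applying $\nr^k$), the $\Psi_4$ datum stays $\nr^k\Psi_4|_\slc$ as in Corollary~\ref{cor:data-gen}, and the horizon-geometry slice data $m^\alpha,\pi,\mu,\lambda$ enters exactly as before.

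The step I expect to be the main obstacle is the index bookkeeping that pins the free radial Maxwell data at $\nr^{n-2}\Phi_2$. The highest-order Ricci datum required by Corollary~\ref{cor:data-gen} is $\nr^{n-1}\Phi_{20}$, and $\Phi_{20}\propto\Phi_2\bar\Phi_0$; expanding $\nr^{n-1}(\Phi_2\bar\Phi_0)|_\ih$ by Leibniz, the would-be dangerous term $\nr^{n-1}\Phi_2\cdot\bar\Phi_0$ drops because $\Phi_0|_\ih=0$, leaving only $\nr^{j}\Phi_2$ with $j\leq n-2$ --- but one must still check that the accompanying factors $\overline{\nr^i\Phi_0}$ (for $i\leq n-1$), which through the radial relations import $\nr^{i-1}\Phi_1$ and thence $\nr^{i-2}\Phi_2$, never push $\Phi_2$ to order $n-1$, and likewise that $\nr^{n-2}\Phi_{11},\nr^{n-2}\Phi_{21},\nr^{n-2}\Phi_{22}$ do not. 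Once this is verified the data list is exactly the one stated, and freeness follows: the only constraints the Maxwell equations impose at $\ih$ are precisely those used above to \emph{solve for} the propagated quantities, not to restrict the slice data; the single residual ambiguity --- that $\Phi_1|_\ih$ is a frame component on the locally defined patches and hence carries the ${\rm U}(1)$ transformation law --- is exactly the content of the ``modulo Remark~\ref{locality}'' clause. All the calculations are carried out in Appendices~\ref{app:4d-neigh} and~\ref{app:evac-spc}.
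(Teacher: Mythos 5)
Your proposal is correct and follows essentially the same route as the paper's proof in Appendix~\ref{app:evac-exp}: reduce to Corollary~\ref{cor:data-gen}, use $\Phi_{ij}\propto\Phi_i\bar\Phi_j$ and trace-freeness to replace the Ricci block by Maxwell data, propagate $\Phi_1$, $\nr^k\Phi_2$ along $\ih$ via the $D$-direction Maxwell equations, and obtain the radial derivatives of $\Phi_0,\Phi_1$ (hence $\nr^{k}\Phi_{20}$, etc.) from the transversal Maxwell equations. Your explicit Leibniz check that $\Phi_0|_\ih=0$ keeps the required order of $\nr^k\Phi_2$ at $n-2$ is a detail the paper leaves implicit, but the argument is the same.
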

\noindent The proof of this corollary, being a modification to the proof of
Corollary~\ref{cor:data-gen}, is presented in Appendix~\ref{app:evac-spc}. 
As in the case of Corollary~\ref{cor:data-gen}, the existence or uniqueness
is not guaranteed (see the discussion below Corollary~\ref{cor:data-gen}). 

\medskip

Corollary~\ref{cor:data-evac} and the form which the Maxwell-Eistein equations 
take on a NEH imply quite interesting property of the spacetime metric at the horizon. 
There is a well defined limit in which a given spacetime metric tensor $g_{\mu\nu}$ 
defines perturbatively --in terms of the expansion at a NEH $\ih$-- a new, ``would be'' 
(that is provided that it exists) stationary solution to the Einstein questions in 
\emph{all} the orders in the transversal variable $r$.  
Indeed,  we can easily determine the dependence on $v$ of all the data listed
in Corollary \ref{cor:data-evac}. 
In particular, the frame and rotation components are 
(by definition) $v$ independent, whereas $\mu,\lambda$ are (due to the reduction to 
the horizon of (\ref{eq:D-mu}, \ref{eq:D-lambda})) exponential (when $\sgr{\bsl}\neq 0$) 
or linear (otherwise) in $v$ respectively. 
Acting with $\partial^n_r$ on the transversal evolution
equations (\ref{eq:n-kappa}-\ref{eq:n-alpha}) (expressed in more
convenient form as (\ref{eq:ODE1}-\ref{eq:ODE3}) one can 
show that the $n$th transversal derivative of metric (represented by
the respective derivative of the frame components) behaves like
\begin{equation}
	\partial^n_rg_{\alpha\beta}|_{\ih}\ \sim\ g_{\alpha\beta}^{(n)} e^{-n\sgr{\bsl}v}+\ ...\ +g_{\alpha\beta}^{(0)}
\end{equation}
\cite{l-sem} if $\sgr{\bsl}\neq 0$.
Since the Bondi-like variable always can be  chosen in such a way that $\sgr{\ell}\not=0$,
this result can be interpreted in the way, that the horizon neighborhood geometry 
settles down to the geometry representing a Killing horizon with $\partial_v$ as 
a Killing vector. Note however, that the result does not mean that the horizon 
neighbourhood approaches the symmetric spacetime as $(i)$ we do not know whether or not 
there is a metric tensor, solution to the Einstein equations, which satisfies 
the limit expansion and $(ii)$ solutions to the characteristic initial Cauchy problem
do not need to be analytic.

\subsubsection{Characteristic Cauchy problem}
  \label{sec:ivp}

Now, we can complete the data of Corollary \ref{cor:data-evac} to characteristic
Cauchy data. This will be the null frame version of the Cauchy data introduced in 
a geometric way in Section \ref{sec:cauchy-geom}. Here, we provide a formulation 
in terms of the null tangent frame $(m^\mu,\bar{m}^\mu,\bn^\mu,\ell^\mu)$, 
the corresponding Newman-Penrose coefficients of the connection, curvature, 
and the electromagnetic field. We apply the results of sections~\ref{sec:neigh-invars} 
and~\ref{sec:evac-exp} to specify the class of the reduced Friedrich data
\cite{Friedrich-evac1,*Friedrich-evac2,Rendall}
corresponding to the case at hand and the NEHs in question.

As in Section \ref{sec:cauchy-geom}, in addition to a given NEH $\ih$, we use 
another null surface $\N_{v_0}$ orthogonal to a slice $\slc$ of $\ih$ such that  
$ v|_{\slc}=v_0.$  
Now, in the Bondi-like coordinates $(x^A,v,r)$ the null surfaces $\ih$ and 
$\N_{v_0}$ satisfy:
\begin{equation}
  r|_{\ih}\ =\ 0,\ \ \ \ v|_{\N_{v_0}}\ =\ v_0 \ . 
\end{equation}
The NEH horizon geometry and the component $\Phi_1$ of the electromagnetic field 
defined on $\ih$, coupled to  the component $\Psi_4$ of the Weyl tensor and $\Phi_2$ 
set freely on the entire $\N_{v_0}$ provide at the slice $\slc$ the data of 
Corollary~\ref{cor:data-evac}. Furthermore, they determine uniquely all the (spacetime) 
frame, connection, Maxwell field and Riemann tensor components at $\N_{v_0}$. 
The key idea of the proof is the observation that the Einstein-Maxwell equations 
and Bianchi identities form on $\N_{v_0}$ a hierarchy of the ordinary differential 
equations. For the readers convenience the proof of this fact is presented 
in Appendix~\ref{app:ivp}. The consequence of the above observations is
\begin{cor}\label{cor:Cauchy-data}
	Given a non-expanding horizon $\ih$ and the transversal null
	surface $\N_{v_0}$ the following   data  is the Friedrich reduced data
	\cite{Friedrich1,*Friedrich2}:
	\begin{enumerate}[ (i)]
		\item the surface gravity $\sgr{\bsl}\in\mathbb{R}$ ($=0$ or $=1$);
		\item on $\slc_{v_0}=\ih\cap\N_{v_0}$: $m^\alpha$ (by construction, 
			tangent to $\slc$), $\pi$, 	$\mu$, $\lambda$, ${\Phi}_1$; \label{it:Fr-data}
		\item on $\N_{v_0}$:  $\Phi_2$, $\Psi_4$.
	\end{enumerate}
\end{cor}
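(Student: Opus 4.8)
The plan is to verify that the quantities listed in Corollary~\ref{cor:Cauchy-data} constitute admissible \emph{reduced} characteristic data for the Einstein--Maxwell system on the pair of intersecting null hypersurfaces $\ih$ and $\N_{v_0}$, in the sense of Friedrich \cite{Friedrich-evac1,Rendall}. This requires two things: first, that from the listed data one can reconstruct, on each of $\ih$ and $\N_{v_0}$ separately, the complete collection of coframe, connection, Maxwell-field and curvature Newman--Penrose coefficients, by integrating a hierarchy of ordinary differential equations along the null generators; and second, that the listed data is \emph{free}, that is, it carries no residual differential constraint beyond those already solved. Granted these, Friedrich's well-posedness theorem for the characteristic Cauchy problem furnishes the unique solution in the common domain of dependence, which is exactly the content of the statement.

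First I would handle the surface $\ih$, where most of the work has already been done in Sections~\ref{ssec:degrees} and~\ref{sec:frame-ih}. By the explicit solution of the NEH constraints \cite{lp-g,abl-g,lp-extr}, the horizon geometry $(q_{ab},D_a)$ on all of $\ih$ is reconstructed from its restriction to $\slc_{v_0}$: the frame datum $m^\alpha$ and the coefficient $\pi$ (equivalently $a$) are $v$-independent by \eqref{eq:frame-api-evo}, whereas $\mu$ and $\lambda$ follow by integrating the linear ODEs \eqref{Dmu}, \eqref{Dlambda} along $\ell=\partial_v$ with the slice values as initial data and the sources fixed by the electrovac matter content. On the Maxwell side, $\Phi_0$ vanishes on $\ih$, $\Phi_1$ is $v$-independent by \eqref{Phi01} and hence known from its value at $\slc_{v_0}$, and $\Phi_2|_{\ih}$ is obtained by integrating \eqref{DPhi2-nei} with initial datum the corner value of the $\N_{v_0}$-function $\Phi_2$. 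Finally $\Psi_0=\Psi_1=0$ by \eqref{eq:NP-weaker}, $\Psi_2$ and $\Psi_3$ are given algebraically by \eqref{Psi2horizon}, \eqref{Psi3horizon}, and $\Psi_4|_{\ih}$ comes from integrating \eqref{eq:DPsi4} along $\ell$ with initial datum the corner value of the $\N_{v_0}$-function $\Psi_4$. This already produces the data of Corollary~\ref{cor:data-evac} at $\slc_{v_0}$, and in fact all spacetime frame, connection, curvature and Maxwell components at every point of $\ih$.

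Next I would treat $\N_{v_0}$, which is the technical heart of the proof and which I would carry out in detail in Appendix~\ref{app:ivp}. Its null generators are the integral curves of $\bn=-\partial_r$, and along them the Einstein--Maxwell equations together with the Bianchi identities organise themselves into a triangular hierarchy of ODEs in $r$: the coframe functions $X,H,m_A,Z_A$ evolve by \eqref{eq:N-fc-gen}; the spin coefficients not already forced to vanish by \eqref{eq:NP-zeros} evolve by the transversal equations (\ref{eq:ODE-n-mu}--\ref{eq:ODE-n-varpi}, \ref{eq:ODE-n-rho}, \ref{eq:ODE-n-sigma}, \ref{eq:ODE3}); the Maxwell coefficients $\Phi_0,\Phi_1$ evolve by the components of the Maxwell equations \eqref{eq:Maxwell-vac} along $\bn$; and $\Psi_0,\Psi_1,\Psi_2,\Psi_3$ evolve by the $\bn$-directed Bianchi equations (\ref{eq:ODE-n-Psi3}, \ref{eq:ODE-n-Psi2}, \ref{eq:n-Psi0-N}, \ref{eq:ODE-n-Psi1}). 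The only coefficients left undetermined by this hierarchy are precisely $\Phi_2$ and $\Psi_4$, that is, the two functions prescribed freely on $\N_{v_0}$; all required initial conditions at $\slc_{v_0}$ --- namely $X=H=Z_A=0$ with $m_A$ equal to the pullback of $\hat{m}_A$ from $\bas$, and $\pi,\mu,\lambda,\Phi_1$, together with the corner values of the $\ih$-quantities reconstructed in the previous step --- are supplied by that step and by the corner list of Corollary~\ref{cor:Cauchy-data}. The key point to be checked is that the hierarchy genuinely closes: that at each level the right-hand sides involve only coefficients already computed, and in particular never $\Phi_2$ or $\Psi_4$ at the same or a higher level. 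This is exactly where the electrovac reduction enters --- Maxwell's equations eliminate the would-be free data $\Phi_{21},\Phi_{22},\Phi_{20,r}$ of Corollary~\ref{cor:data-gen}, just as in the passage from that corollary to Corollary~\ref{cor:data-evac}.

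It then remains to note that the listed data is free: the corner quantities $m^\alpha,\pi,\mu,\lambda,\Phi_1$ are constrained only by the NEH constraints, which are already solved; the functions $\Phi_2,\Psi_4$ on $\N_{v_0}$ are, by the construction just outlined, entirely unconstrained; and $\sgr{\ell}$ is an arbitrary real constant, normalisable to $0$ or $1$ via \eqref{eq:f}. The step I expect to be the main obstacle is precisely the bookkeeping on $\N_{v_0}$ --- establishing that the electrovac Newman--Penrose system really does possess the claimed triangular structure along the $\bn$-generators, so that the order-by-order integration proceeds without circularity, and verifying that the corner data inherited from $\ih$ is consistent with the $\N_{v_0}$-side boundary values. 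Once that is in place, invoking Friedrich's existence-and-uniqueness theorem for the characteristic Einstein--Maxwell problem \cite{Friedrich-evac1,Rendall} completes the argument.
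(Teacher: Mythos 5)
Your proposal is correct and follows essentially the same route as the paper: reconstruct the full set of Newman--Penrose quantities on $\ih$ from the slice data via the analysis of Corollary~\ref{cor:data-evac}, then observe that on $\N_{v_0}$ the Einstein--Maxwell and Bianchi equations form a triangular hierarchy of ODEs along the generators of $\bn$ whose only undetermined members are $\Phi_2$ and $\Psi_4$, and finally invoke the Friedrich/Rendall characteristic well-posedness theorem. This is precisely the argument the paper sketches around the corollary and carries out in Appendix~\ref{app:ivp} (systems \eqref{eq:ODE1}--\eqref{eq:ODE3} and \eqref{eq:ODE-n-Psi1}).
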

The data is freely defined modulo Remark \ref{locality}. Given this data,
in the domain of dependence of $\ih\cup \N_{v_0}$, the NP equations 
(\ref{eq:D-rho}-\ref{eq:D-mu},\ref{eq:n-epsilon}-\ref{eq:n-alpha}) coupled with the 
Einstein-Maxwell field equations (\ref{eq:NP-WR}, \ref{eq:NP-F}), Maxwell evolution 
equations \eqref{eq:Maxwell-NP}, Bianchi identities (\ref{eq:BI-DPsi1}-\ref{eq:nPsi3}),
frame components evolution equations \eqref{eq:N-fc-gen}
and with the gauge choice equations (\ref{eq:frame-e3def}, \ref{eq:frame-lv}, %
\ref{2gamma}, \ref{eq:frame-api-evo}, \ref{eq:frame-muevo}, \ref{eq:frame-def}, %
\ref{eq:fc-H}, \ref{eq:NP-zeros}, \ref{eq:NP-adef}) define   
a unique electrovac spacetime ($e_1,...,e_4$, $\Phi_0$-$\Phi_2$, $\Psi_0$-$\Psi_4$, %
$X$, $H$, $Z_A$, $m_A$). 
In the spacetime defined in $\M'^\pm$ (the future/past to $\ih$ part of the domain of dependence) by the resulting solution,
$\ih$ is a non-expanding horizon, $(x^A,v,r)$ is an adapted Bondi-like coordinate
system and $(m^\mu,\bar{m}^\mu,\bn^\mu,\ell^\mu)$ is a null frame of the properties
of the frame (\ref{eq:frame-e3def}-\ref{eq:DPsi4}). In particular, the vector field 
\begin{equation} 
	\bn^\mu\ := -(\partial_r)^\mu 
\end{equation}
is null, satisfies 
\begin{equation}
	\nabla_nn=0 
\end{equation}
and is orthogonal to the slices 
\begin{equation} 
	v=\const 
\end{equation}
of the horizon. Also, the vector field 
\begin{equation}
	\xi^\mu\ :=\ (\partial_v)^\mu
\end{equation}
satisfies 
\begin{equation}
	\xi^\mu|_{\ih}\ =\ \ell^\mu, \qquad \lie_n\xi\ =\ 0.
\end{equation} 
Therefore, the current vector fields $\bn^\mu$ and $\xi^\mu$ coincide with the vector 
fields $n^\mu$ and $\xi^\mu$ induced in a neighborhoud of $\ih$ 
in Section~\ref{sec:neigh-invars}.  

This structure will be applied in the next section to identify and characterize 
the necessary and sufficient consitions for the existence of a timelike Killing 
field on the domain $\M'^\pm$.

\section{Electrovac Killing horizon}
  \label{sec:evac-kvf}

\subsection{The induced structures, the Bondi-like coordinates and the adapted null frame} 

We now restrict our interest to the situation when an electrovac spacetime $\M$ admits 
a Killing vector field $K^\mu$ tangent to and null at a horizon $\ih$. 
We also assume that $K^\mu$ is an infinitesimal symmetry of the electromagnetic field,
that is 
\begin{equation} \lie_KF_{\mu\nu}\ =\ 0.\end{equation}
On the horizon
\begin{equation}
	(K|_{\ih})^a\ =\ \ell^a
\end{equation}
is an infinitesimal symmetry. We also assume, that  the surface gravity 
of $\ell^a$ (necessarily constant) is not zero
\begin{equation}\label{eq:sgr-evac-kill}
	\sgr{\ell}\not=0.
\end{equation}
We will apply the general results of Sections~\ref{sec:sym-known}, \ref{sec:neigh-kill} 
and~\ref{sec:neigh-killnonin} as well as we will employ the adapted null frames introduced 
in Section~\ref{sec:4d-neigh}.   
 
If the NEH $\ih$ geometry is invariant-generic (see Section~\ref{sec:neigh-kill}) 
and $\xi^\mu$ is the $\ih$  neighbourhood invariant vector field, then due to 
Theorem~\ref{cor:kill-form} the Killing vector necessarily coincides with $\xi^\mu$ 
modulo a rescaling by a constant factor
\begin{equation}
	K^\mu\ =\ \xi^\mu.
\end{equation}
Otherwise,  the results of Section~\ref{sec:neigh-killnonin} apply. 
In either case, there are on $\ih$ coordinates $(x^A,v)$ such that 
\begin{equation}
	\ell^a\ =\ (\partial_v)^a.
\end{equation} 
We are assuming they are given and use the corresponding  Bondi-like extension
and the related Bondi-like coordinates $(x^A,v,r)$. Then, owing to 
Theorem~\ref{thm:Kill-hel}, the Killing vector $K$ in the neighbourhood $\M'$ 
necessarily is
\begin{equation}
	K^\mu\ =\ (\partial_v)^\mu.
\end{equation} 

It turns out that the null frame $(e_1,...,e_4)$ \eqref{eq:frame-ext} adapted to 
the structures introduced on $\ih$, and the adapted Newman-Penrose framework defined 
in Section~\ref{sec:4d-neigh} are surprisingly compatible with the Killing vector fields:    
\begin{lem}\label{KNP=0} 
	Suppose $K^\mu$ is a Killing vector field tangent to a NEH $(\ih,q_{ab},D_c)$. 
	The components of the null frame $(e_1,e_2,e_3,e_4)$ and the dual coframe 
	$(e^1,...,e^4)$ introduced in Section~\ref{sec:null-frame}
	are Lie dragged by $K^\mu$, that is
	\begin{equation}
		\lie_Ke^1=...=\lie_Ke^4\ =\ 0\ =\ \lie_Ke_1=...=\lie_Ke_4,
	\end{equation}
	provided 
	\begin{equation}
		\lie_K e^1|_{\ih} = ... = \lie_Ke_4|_{\ih}=0. 
	\end{equation}
\end{lem}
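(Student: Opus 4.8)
The plan is to propagate the vanishing of $\lie_Ke_\alpha$ and $\lie_Ke^\alpha$ off the horizon by an ODE argument along the null geodesics generated by $\bn^\mu$, using the fact that both the frame and the Killing vector interact trivially with $\bn$. First I would record the two structural facts that make this work. By Lemma~\ref{[K,zeta]} (item 2), $[K,\bn]=0$, i.e. $\lie_K\bn^\mu=0$ everywhere on $\M'$; and by construction \eqref{eq:frame-ext} the coframe is parallel-transported along $\bn$, $\nabla_{\bn}e^\mu=0$, equivalently $\lie_{\bn}e^\mu$ is expressible through the connection coefficients that were set to zero or that do not obstruct the argument. The key commutator identity is that $\lie_K$ and $\lie_{\bn}$ commute as operators on tensor fields whenever $[K,\bn]=0$; hence for each coframe element $e^\alpha$ the field $\phi^\alpha:=\lie_Ke^\alpha$ satisfies a \emph{linear, homogeneous} transport equation along the $\bn$-geodesics.

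Concretely, I would argue as follows. Since $K$ is Killing, $\lie_K\nabla=0$, so $\lie_K$ commutes with $\nabla_{\bn}$ up to a term involving $\lie_K\bn=0$; therefore $\nabla_{\bn}(\lie_Ke^\alpha)=\lie_K(\nabla_{\bn}e^\alpha)=0$. Thus each $\phi^\alpha=\lie_Ke^\alpha$ is itself parallel along every $\bn$-geodesic. The geodesics foliate $\M'$ and all originate on $\ih$, where by hypothesis $\phi^\alpha|_{\ih}=\lie_Ke^\alpha|_{\ih}=0$. A parallel covector field vanishing at one point of a geodesic vanishes along the whole geodesic, so $\phi^\alpha\equiv0$ on $\M'$, i.e. $\lie_Ke^1=\dots=\lie_Ke^4=0$. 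For the frame vectors one then uses that $\lie_Ke_\beta$ is determined algebraically by the $\lie_Ke^\alpha$ and by $\lie_Kg_{\mu\nu}=0$: writing $e_\beta{}^\mu$ in terms of the dual basis and the (Killing-invariant) metric, the vanishing of all $\lie_Ke^\alpha$ forces $\lie_Ke_\beta=0$ as well. (Alternatively, contract the identity $\lie_K(e^\alpha(e_\beta))=\lie_K\delta^\alpha_\beta=0$ with the now-known $\lie_Ke^\alpha=0$ to conclude $e^\alpha(\lie_Ke_\beta)=0$ for all $\alpha$, hence $\lie_Ke_\beta=0$.)

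The remaining point is that the hypothesis is genuinely only the restriction $\lie_Ke^\alpha|_{\ih}=0$, not the full spacetime statement, and this is exactly what the transport argument upgrades; I would state explicitly that this boundary condition at $\ih$ is the one verified in Section~\ref{sec:frame-ih} from the adaptation conditions \eqref{eq:frame-lv}, \eqref{m}, \eqref{eq:fc-H} together with Theorem~\ref{thm:Kill-hel}, which fixes $K=(\partial_v)^\mu$ in the Bondi-like coordinates. The main obstacle, and the only place care is needed, is the commutation step $\nabla_{\bn}\lie_K=\lie_K\nabla_{\bn}$: one must check that the general identity $[\lie_K,\nabla_X]=\nabla_{[K,X]}$ for a Killing field $K$ applies here with $X=\bn$ and that $[K,\bn]=0$ is legitimately available on all of $\M'$ (this is precisely Lemma~\ref{[K,zeta]}), so that no inhomogeneous source term appears in the transport equation. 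Everything else is the elementary observation that a homogeneous linear first-order ODE with zero initial data has only the zero solution.
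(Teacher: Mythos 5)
Your proposal is correct and follows essentially the same route as the paper: the authors likewise use $[\bn,K]=0$ and the Killing property to write $0=\lie_K(\nabla_{\bn}e_\mu)=\nabla_{\bn}\lie_Ke_\mu$, and then conclude from the vanishing initial data on $\ih$ that the parallel-transported quantity $\lie_Ke_\mu$ vanishes throughout $\M'$. The only cosmetic difference is that you run the transport argument on the coframe and recover the frame by duality, while the paper works with the frame vectors directly; both are immediate from $\lie_Kg_{\mu\nu}=0$.
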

Indeed, it follows from the following calculation true in the neighborhood 
$\M'$ of $\ih$ for every value of $\mu=1,...,4$ (no abstract index)
\begin{equation}\label{calc}
	0\ =\ \lie_K(\nabla_{\bn} e_\mu)\ =\ \nabla_{\bn} \lie_Ke_\mu
\end{equation}
where the first equality follows from $\nabla_{\bn} e_\mu=0$, whereas the second 
follows from 
\begin{equation}
	[\bn,K]\ =\ 0
\end{equation}
which for a Killing vector $K^\mu$ implies that the parallel transport along the 
integral lines of $\bn^\mu$ commutes with the flow of $K^\mu$. The second equation 
in \eqref{calc} combined with the initial condition 
\begin{equation}
	\lie_K e_\mu|_{\ih}\ =\ 0
\end{equation}  
completes the proof. 

\begin{cor}\label{cor:kvf-form}
	Consider a NEH $\ih$ such that its neighbourhood admits a Killing vector field
	tangent to $\ih$ and null thereon. If $\ih$ is invariant-generic introduce
	on $\ih$ coordinates $(x^A,v)$ such that $\partial_v$ is the invariant vector on $\ih$. 
	Otherwise, assume that $K^\mu$ is not zero restricted to any null generator of $\ih$ 
	and introduce on $\ih$ coordinates $(x^A,v)$ such that 
	\begin{equation}
		K^\mu|_{\ih} = (\partial_v)^{\mu} .
	\end{equation}
	Extend $(x^A,v)$ to the Bondi-like coordinates in the neighbourhood $\M'$.
	Introduce the null frame of Section \ref{sec:frame-ih} and Section \ref{sec:frame-ext}. 
	
	Then, in all the $\M'$ the Killing vector $K^\mu$ is of the form
	\begin{equation}
		K^{\mu}=(\partial_v)^{\mu}.
	\end{equation}
	Furthermore, as a consequence of Lemma~\ref{KNP=0}, all the frame coefficients 
	$e_\mu^A, e_\mu^v,e_\mu^r$
	as well as all the Newman-Penrose coefficients of the Levi-Civita connection, the Maxwell 
	field and the Weyl tensor are constant along the orbits of $K^\mu$, that is they are 
	independent of the variable $v$.
\end{cor}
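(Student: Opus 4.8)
The plan is to first nail down the global form $K^{\mu}=(\partial_v)^{\mu}$ on $\M'$, and then to run Lemma~\ref{KNP=0} starting from the adapted frame at $\ih$ to propagate the $v$-independence off the horizon. For the first part, by Lemma~\ref{[K,zeta]} the restriction $k^{a}=K^{\mu}|_{\ih}$ is an infinitesimal symmetry of the geometry of $\ih$; it is null on $\ih$ by hypothesis, and the normalisation built into the chosen coordinates gives $k^{a}=\ell^{a}=(\partial_v)^{a}$. In the invariant-generic case the coordinates are such that $\partial_v=\ell$ is the invariant vector and $\zeta^{\mu}=(\partial_v)^{\mu}$; Theorem~\ref{cor:kill-form} (the case in which $K$ is null on $\ih$) then forces $K^{\mu}=a_{0}\zeta^{\mu}$ with $a_{0}=1$ fixed by the horizon value, so $K^{\mu}=(\partial_v)^{\mu}$ on all of $\M'$. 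In the complementary case Theorem~\ref{thm:Kill-hel} gives $K^{\mu}=a\partial_v+b\,\Phi^{A}\partial_{A}$ in the Bondi-like coordinates, with $b=0$ because $K$ is null at $\ih$ while $\Phi^{A}\partial_{A}$ is tangent to the spacelike slices, and $a=1$ by normalisation. Equivalently and uniformly, $[\bn,K]=0$ (from $\nabla_{\bn}\bn=0$ with $K$ Killing, as in Lemma~\ref{[K,zeta]}) yields $-\partial_r(K^{\mu})=[\bn,K]^{\mu}=0$, hence $K^{\mu}(x^{A},v,r)=K^{\mu}(x^{A},v,0)=(\partial_v)^{\mu}$.

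Next I would verify the hypothesis $\lie_{K}e_{\mu}|_{\ih}=0$ of Lemma~\ref{KNP=0}. Since $K|_{\ih}=\ell$, this is exactly the statement that the adapted frame of Section~\ref{sec:frame-ih} is Lie dragged along $\ell$ at $\ih$: for $e_{1}=m$ this is the defining property \eqref{m} of an adapted frame; for $e_{4}=\ell$ it is trivial; for $e_{3}=\bn=-\partial_r$ it follows from $[\zeta,\bn]=0$ (see \eqref{eq:zeta}) evaluated at $\ih$, where $\zeta=\ell$; and the dual coframe is then Lie dragged too. With this initial datum, Lemma~\ref{KNP=0} gives $\lie_{K}e_{\mu}=\lie_{K}e^{\mu}=0$ throughout $\M'$.

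It then remains to translate this into $v$-independence. In the Bondi-like coordinates $K=\partial_v$ commutes with the coordinate vector fields $\partial_{A},\partial_v,\partial_r$, so the vanishing of $\lie_{K}e_{\mu}$ and $\lie_{K}e^{\mu}$ is equivalent to the $v$-independence of all the frame component functions $e_{\mu}^{A},e_{\mu}^{v},e_{\mu}^{r}$ (equivalently of $X,H,Z_{A},m^{A}$ in \eqref{eq:frame-def}). Each Newman-Penrose coefficient of the connection, curvature and Maxwell field is a scalar manufactured from $g_{\mu\nu}$, the Riemann tensor, $F_{\mu\nu}$, the frame and the Levi-Civita connection $\nabla$; since $\lie_{K}$ annihilates $g_{\mu\nu}$ and the Riemann tensor ($K$ being Killing), annihilates $F_{\mu\nu}$ (by the standing assumption $\lie_{K}F=0$), annihilates the frame (by Lemma~\ref{KNP=0}) and commutes with $\nabla$, it annihilates each such scalar. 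As $\lie_{K}=\partial_v$ on scalars, all the Newman-Penrose coefficients of the connection, the Maxwell field and the Weyl tensor are independent of $v$, which is the assertion of the corollary.

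The routine part is the last step; the point that deserves care is the case analysis in the first step --- in the non-invariant-generic situation one should confirm that the hypotheses of Theorem~\ref{thm:Kill-hel} (a nowhere-vanishing $\ell$ commuting with $k$ together with an $\ell$-invariant slicing) are genuinely available given only that $K$ does not vanish on any null generator and $\sgr{\ell}\neq 0$, and that $b=0$ there --- together with checking in the second step that each element of the adapted frame is indeed Lie dragged by $\ell$ at $\ih$, which rests on the specific Bondi-like construction of Section~\ref{sec:neigh-invars}. Once $K=\partial_v$ is in hand, everything else is mechanical.
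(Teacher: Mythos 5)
Your proof is correct and follows essentially the same route as the paper: the form $K^\mu=(\partial_v)^\mu$ is obtained from $[\bn,K]=0$ (equivalently Theorem~\ref{cor:kill-form} in the invariant-generic case and Theorem~\ref{thm:Kill-hel} otherwise), and the $v$-independence then comes from Lemma~\ref{KNP=0} plus the observation that the Newman--Penrose scalars are built from $\lie_K$-invariant objects. The only difference is that you spell out the verification of the initial condition $\lie_K e_\mu|_{\ih}=0$ and the passage from $\lie_K e_\mu=0$ to $\partial_v$-independence of the components, which the paper leaves implicit; these details are correct.
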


\subsection{Necessary conditions: data at $\ih$, and the metric expansion}
  \label{sec:evac-kvf-exp}
  
The expansion in the radial coordinate\footnote{The affine parameter of transversal 
	null geodesics.
} 
$r$ at the given NEH $\ih$ of the coefficients of the  frame $e_1,...e_4$ in the 
general case assuming the vacuum Einstein-Maxwell equations was developed 
in Section~\ref{sec:evac-exp}. The presence of the Killing field imposes new 
constraints on the data considered on the horizon $\ih$ following from the substitution 
of the symmetry condition 
\begin{equation}
	\ell^\alpha\partial_\alpha(n^\beta\partial_\beta)^k f\ =\ 0, 
\end{equation}
where $f$ is any component of the frame $e^\alpha_1,...,e^\alpha_4$ in the Bondi-like 
coordinates $(x^A,v,r)$, and any component (in that frame) of the Levi-Civita connection,
the Weyl tensor, and the Maxwell field, and $k\in\mathbb{N}$. 
Indeed, the equation~\eqref{eq:constr-evac-4D} with $0$ substituted for $\partial_v \mu$ 
and $\partial_v\lambda$ determines the expansion and shear  $(\mu,\lambda)$ of $n$ as  
functionals of the remaining elements of the horizon geometry ---the complex vector $m^a$, 
the surface gravity $\sgr{\ell}$ and the component $\pi$ of the rotation 1-form potential--- 
and the component $\Phi_1$ of the electromagnetic field,
\begin{subequations}\label{eq:mul-kill}\begin{align}
	\mu\ &=\ \frac{1}{\sgr{\ell}} \left[ m^a\partial_a + |\pi|^2 -2\pi\bar{a} 
		+ \Psi_2 \right] \ , \\
	\lambda\ &=\ \frac{1}{\sgr{\ell}} \left[ \bar{m}^a\partial_a + |\pi|^2 
		+ 2a\pi \right] \ .
\end{align}\end{subequations}
Also the Maxwell field equation \eqref{DPhi2-nei} upon the assumption
$\partial_v \Phi_2=0$ determines the value of $\Phi_2$ 
at $\ih$ for known $(\Phi_0,\Phi_1)$ and connection coefficients,
\begin{equation}
	\Phi_2\ =\ \frac{1}{\sgr{\ell}} \left[ \bar{m}^a\partial_a + 2\pi\Phi_1 \right] \ .
\end{equation}
Hence the value of a whole Energy-Momentum tensor at $\ih$ is known. Therefore given  
$(m,\sgr{\ell},\pi,\Phi_1)$ one can calculate all the connection coefficients as 
well as  the Weyl tensor components $\Psi_0,\Psi_1,\Psi_2,\Psi_3$ (see the analysis 
in Section~\ref{sec:evac-exp}). 
In the analogous way the last component $\Psi_4$ is  determined as a functional of 
$(m,\sgr{\ell},\pi,\Phi_1)$ via the substitution of $0$ for $\partial_v\Psi_4$
in \eqref{eq:DPsi4} and expressing $\Phi_{20,r}$ therein by a suitable functional
of $(m,\sgr{\ell},\pi,\Phi_1)$ following from $\Phi_0=0$ and the Maxwell equation 
\eqref{eq:n-Phi0},
\begin{equation}
	\Psi_4\ =\ \frac{1}{2\sgr{\ell}} \left[ 
		\bar{m}^a\partial_a\Psi_3 - 3\lambda\Psi_3 + (5\pi+2a)\Psi_3 -\Phi_{20,r}
		+ \bar{m}^a\partial_a\Phi_{21} + (\pi+2a)\Phi_{21} - 2\lambda\Phi_{11} 
	\right]
\end{equation}
In this way the free degrees of freedom (represented by the triple $(m^a,\pi,\Phi_1)$ 
defined on the horizon $\ih$) determine then the frame expansion up to $2$nd order. 
Furthermore, given the frame $e_1^\mu,...,e_4^\mu$ at $\ih$ and the derivatives $\partial_r^ke_1^\mu,...,\partial_r^ke_4$ for $k=1,...,n$, the
values of $\nr^{n-1}\Phi_2$ and $\nr^{n-1}\Psi_4$ necessary for the  $n+1$th
order of expansion can be derived by differentiating the equations \eqref{eq:D-Phi2} 
and \eqref{eq:B-DPsi4} in the radial direction respectively (see Appendices 
\ref{app:frame-exp} and \ref{app:evac-exp} for the details). Finally the following 
is true:
\begin{cor}
  Suppose $(\ih,\ell)$ is a  Killing horizon in a $4$-dimensional electrovac spacetime 
  and the surface gravity $\sgr{\ell}\not=0$ \eqref{eq:sgr-evac-kill}. Then, the 
  following data defined on the horizon: the complex vector  $m^a$ \eqref{eq:m-def}, 
  the rotation $1$-form potential  $\omega_a$ \eqref{eq:omega_def}, and the pull-back 
  onto $\ih$ of $(F-i*_\M F)_{ab}$, 
  uniquely determine all the derivatives $\nr^n e^{\mu}{}_{\nu}$ of the frame 
  \eqref{eq:frame-def} coefficients as well as all the derivatives $\partial_r^n\Phi_I$, 
  $I=0,1,2$ of the components of the electromagnetic field \eqref{eq:NP-F} 
  at the horizon $\ih$, for all $n\in\mathbb{N}$.
  \label{cor:evac-kill}
\end{cor}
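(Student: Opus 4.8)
The plan is to exhibit $(m^a,\omega_a,(F-i*_{\M}F)_{ab})$ as the initial-slice part of the \emph{free} characteristic data of Corollary~\ref{cor:data-evac}, and then to use the Killing field to turn every ``evolution along $\ell$'' equation into an algebraic relation that fixes, order by order, the remaining radial data at $\ih$. Throughout one works in an adapted Bondi-like coordinate system and null frame of Section~\ref{sec:null-frame}; by Corollary~\ref{cor:kvf-form} (which rests on Lemma~\ref{KNP=0}) one may take $K^\mu=(\partial_v)^\mu$, and every frame component and every Newman--Penrose scalar of the connection, the Weyl tensor and the Maxwell field is independent of $v$. The first step is purely translational: the complex vector $m^a$ fixes the frame coefficient $m_A$ at $\ih$ and, through \eqref{2gamma}, the base connection coefficient $a$; the rotation $1$-form potential $\omega_a$ fixes, through \eqref{eq:w-frame}, the coefficient $\pi$ and the constant $\sgr{\ell}\neq0$; and $(F-i*_{\M}F)_{ab}|_{\ih}=\Phi_1\,(e^2\wedge e^1)_{ab}$ fixes $\Phi_1$ at $\ih$. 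Since in the electrovac case $\Ricn{4}=\const$ and $\Phi_{ij}=(\const)\,\Phi_i\bar{\Phi}_j$, every Ricci/matter quantity appearing below is an explicit function of the $\Phi$'s. Thus the hypotheses amount to prescribing the triple $(m^a,\pi,\Phi_1)$ on $\ih$ together with the nonzero constant $\sgr{\ell}$.

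Next is the base of an induction on the radial order. Because $\ell^a=(\partial_v)^a$ on $\ih$ and all quantities are $v$-independent, the operator $D=\ell^a\partial_a$ annihilates them on $\ih$, so each $D$-evolution equation degenerates to an algebraic identity. Setting $D\mu=D\lambda=0$ in \eqref{Dmu} and~\eqref{Dlambda} expresses $\mu$ and $\lambda$ at $\ih$ without any operator inversion, the right-hand sides being explicit tangential differential expressions in $a,\pi,\Phi_1$ (use \eqref{eq:Kdw}, \eqref{Psi2horizon}, \eqref{eq:mul-kill} and $\sgr{\ell}\neq0$). Setting $D\Phi_2=0$ in \eqref{DPhi2-nei} expresses $\Phi_2$ at $\ih$ through $\Phi_1$; setting $D\Psi_4=0$ in \eqref{eq:DPsi4}, after eliminating $\Phi_{20,r}$ via $\Phi_0|_{\ih}=0$ \eqref{eq:phi0-van} and the radial Maxwell equation \eqref{eq:n-Phi0}, expresses $\Psi_4$ at $\ih$; and $\Psi_0=\Psi_1=0$ \eqref{eq:NP-weaker}, together with \eqref{Psi2horizon} and~\eqref{Psi3horizon} for $\Psi_2,\Psi_3$ and the radial Maxwell equations for $\nr\Phi_I|_{\ih}$, completes the curvature and matter data at $\ih$. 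Feeding this into \eqref{eq:fc-H}, \eqref{eq:N-fc-ih} and the second-order frame formulas of Section~\ref{sec:frame-exp} settles all frame derivatives of order $\le 2$, all connection and Weyl scalars, and $\Phi_I,\nr\Phi_I$ at $\ih$, as functionals of $(m^a,\pi,\Phi_1)$.

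For the inductive step assume that for all $j\le n$ the derivatives $\nr^{j}$ of the frame components, connection scalars, Weyl scalars and Maxwell scalars at $\ih$ have been expressed through $(m^a,\pi,\Phi_1)$. Differentiating the transversal ODE hierarchy --- \eqref{eq:N-fc-gen} for the frame and the companion equations for the spin coefficients, the Weyl scalars $\Psi_0,\dots,\Psi_3$ and the Maxwell scalars $\Phi_0,\Phi_1$ --- $n$ times in $r$ and evaluating at $\ih$ expresses $\nr^{n+1}$ of the frame and $\nr^{n}$ of the remaining quantities in terms of already known lower-order data together with the two quantities $\nr^{n-1}\Phi_2|_{\ih}$ and $\nr^{n-1}\Psi_4|_{\ih}$; this is exactly the free-data accounting of Corollary~\ref{cor:data-evac}, reindexed. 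To pin those two down, differentiate \eqref{eq:D-Phi2} and \eqref{eq:B-DPsi4} $(n-1)$ times in $r$ and evaluate at $\ih$. Since $\lie_K=\partial_v$ annihilates $\Phi_2$ and $\Psi_4$ and commutes with $\partial_r$, one has $\partial_v\nr^{n-1}\Phi_2=\partial_v\nr^{n-1}\Psi_4=0$; and since $\ell=\partial_v+H\partial_r-\bar X\m-X\mb$ with $X,H,Z_A$ all vanishing on $\ih$, the Leibniz rule leaves only the $H\partial_r$ term feeding the top radial order, so $\partial_r^{n-1}(D\Phi_2)|_{\ih}=(n-1)\sgr{\ell}\,\partial_r^{n-1}\Phi_2|_{\ih}+(\text{known})$ and likewise for $\Psi_4$. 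On the right-hand sides of \eqref{eq:D-Phi2} and \eqref{eq:B-DPsi4} the only $n$-independent feedback into $\partial_r^{n-1}\Phi_2|_{\ih}$, resp. $\partial_r^{n-1}\Psi_4|_{\ih}$, comes from the term whose coefficient at $\ih$ equals $-\sgr{\ell}$, resp. $-2\sgr{\ell}$ (the remaining pieces, e.g.\ $\mb\Phi_1$ or $\mb\Psi_3$, involving only $\Phi_1$, $\Psi_3$ whose radial expansions are already under control). Collecting terms yields
\begin{equation*}
  n\,\sgr{\ell}\,\nr^{n-1}\Phi_2|_{\ih}\ =\ (\text{known}),\qquad
  (n+1)\,\sgr{\ell}\,\nr^{n-1}\Psi_4|_{\ih}\ =\ (\text{known}),
\end{equation*}
which determine $\nr^{n-1}\Phi_2|_{\ih}$ and $\nr^{n-1}\Psi_4|_{\ih}$ because $\sgr{\ell}\neq0$. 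This closes the induction; together with $\Phi_0|_{\ih}=0$ and the radial Maxwell equations it also yields all $\partial_r^{n}\Phi_I$ at $\ih$.

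The one genuinely delicate point --- and the reason the hypothesis $\sgr{\ell}\neq0$ \eqref{eq:sgr-evac-kill} is indispensable --- is the nondegeneracy of the coefficient multiplying the top-order unknown in the last step: it must be a nonzero multiple of $\sgr{\ell}$ at every order rather than vanishing accidentally. The mechanism is transparent once the $\partial_v$-part of $D$ is seen to be killed, leaving the $H\partial_r$-part, which injects $\sgr{\ell}$ with cumulative weight linear in $n$, matched against the $-\sgr{\ell}$ and $-2\sgr{\ell}$ coefficients already present in \eqref{DPhi2-nei} and \eqref{eq:DPsi4}. Everything else is bookkeeping: checking order by order that every quantity labelled ``known'' has in fact already been produced, and arranging the simultaneous recursion on frame, connection, curvature and Maxwell data so that the indices match Corollary~\ref{cor:data-evac}. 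I would carry this out in Appendix~\ref{app:4d-kill}, following the pattern of the proofs of Corollaries~\ref{cor:data-gen} and~\ref{cor:data-evac} in Appendices~\ref{app:4d-neigh} and~\ref{app:evac-spc}, and would keep only the reduction above in the main text.
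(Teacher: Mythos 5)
Your proposal follows essentially the same route as the paper's Section~\ref{sec:evac-kvf-exp}: translate the hypotheses into the triple $(m^a,\pi,\Phi_1)$ plus $\sgr{\ell}$, use $v$-independence (Corollary~\ref{cor:kvf-form}) to turn the $D$-evolution equations \eqref{eq:constr-evac-4D}, \eqref{DPhi2-nei} and \eqref{eq:DPsi4} into algebraic relations for $\mu,\lambda,\Phi_2,\Psi_4$ divided by $\sgr{\ell}\neq0$ (cf.\ \eqref{eq:mul-kill}), and then obtain $\nr^{n-1}\Phi_2$ and $\nr^{n-1}\Psi_4$ at each order by radially differentiating \eqref{eq:D-Phi2} and \eqref{eq:B-DPsi4} and again exploiting stationarity. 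Your explicit tracking of the coefficients $n\sgr{\ell}$ and $(n+1)\sgr{\ell}$ is consistent with the paper's appendix formula for $D\,\nr^{n-1}\Psi_4$, so the argument is correct and matches the paper's own proof.
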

In particular, in the Einstein vacuum case, that is in the absence of the Maxwell 
field, is a NEH $\ih$ is a Killing horizon, and the Killing vector field is given 
at the horizon as $\ell$, then all the transversal derivatives $(\lie_n)^k g_{\mu\nu}$, 
$k\in \mathbb{N}$, where $n$ is a transversal to $\ih$ vector field, are determined 
by the  degenerate metric tensor $q_{ab}$ induced in $\ih$ and the rotation $1$-form 
potential $\omega_a$ \eqref{eq:omega_def} provided $\ell^a\omega_a\neq0$. 

From Corollary \ref{cor:evac-kill} it follows immediately that, provided the spacetime
metric $g_{\mu\nu}$ and the electromagnetic field tensor $F_{\mu\nu}$
are analytic, they are uniquely determined in the space-time 
by the data $(m,\sgr{\ell},\pi,\Phi_1)$.
In a vacuum case  the analyticity is ensured in the region where KVF is timelike 
\cite{mzh-static,mzh-stationary} and \eqref{eq:zeta-norm}.  This is the case   
outside (in direction against $\bn^\mu$) (when $\sgr{k}>0$) or inside (for $\sgr{k}<0$) 
the horizon in the connected region\footnote{We assume here that the edge of this 
	region has non-empty intersection with the horizon
}. However, we still do not know whether the metric is analytic up to the horizon $\ih$.

This requirement is satisfied in particular by non-degenerate Killing horizons in 
the static vacuum spacetime \cite{ch-static}. The listed data representing a non-rotating 
horizon\footnote{The rotation of a static Killing horizon necessarily vanishes.}
uniquely determines then the metric and Maxwell field of a static spacetime in the 
connected region in which the Killing field is timelike.

\subsection{Necessary conditions: data on the transversal surface $\N_{v_0}$}
  \label{sec:evac-kvf-exist}

In subsection \ref{sec:evac-kvf-exp} we characterized  the $\ih$ part of the 
characteristic Cauchy data of Section \ref{sec:ivp}. Now, we turn to the data 
defined on the null surface $\N_{v_0}$ transversal to the horizon $\ih$ which data 
consist of the functions $(\Phi_2,\Psi_4)$. The presence of the Killing field 
inducing the null symmetry at the horizon imposes some constraints on these (otherwise 
free) data. The specifics of the construction of the Bondi-like coordinates imply that 
(see Sec.~\ref{sec:neigh-kill}), given coordinates $(x^A,v,r)$ such that the Killing 
field (if present) null at the horizon $\ih$  has the form $K|_{\ih}=\partial_v$, it takes 
this form (i.e. $K^{\mu}=(\partial_v)^{\mu}$) in the neighborhood $\M'$ covered  by 
the coordinates (see Theorem~\ref{cor:kill-form}~\eqref{it:cor-kill-form-null}). If 
the horizon is invariant-generic, then the coordinate $v$ is  a priori given as the 
invariant one and $(\partial_v)^\mu=\zeta^\mu$, the invariant vector field of the 
neighbourhood of $\ih$. Otherwise, the problem reduces to finding a suitable $v$ on $\ih$. 
Therefore, probing for a Killing vector field reduces to the question whether 
the field $\partial_v$ is a symmetry of $g_{\mu\nu}$ and $F_{\mu\nu}$ at the horizon 
neighbourhood.
We also remember from Corollary \ref{cor:kvf-form}, that if $\partial_v$ is a KVF 
then it preserves all the frame coefficients. The converse statement is straightforward: 
the independence of $v$ of all the frame, Maxwell field, connection and Weyl tensor 
coefficients is equivalent to the fact that $\partial_v$ is a KVF. The condition
\begin{subequations}\label{eq:pre-KVF-constr}\begin{align}
  \partial_v\Psi_4|_{\N_{v_0}}\ &=\ 0 \ ,  &
  \partial_v\Phi_2|_{\N_{v_0}}\ &=\ 0 \tag{\ref{eq:pre-KVF-constr}}
\end{align}\end{subequations}
is then a necessary condition for $\partial_v$ to be the KVF.

Using the Bianchi identity \eqref{eq:B-DPsi4} and the Maxwell field equation 
\eqref{eq:D-Phi2} [and expressing the operator $D:=\ell^\mu\partial_{\mu}$ 
in the Bondi-like coordinate system via \eqref{eq:frame-def}] one can rewrite these 
conditions as the differential constraints involving the derivatives in the directions 
tangent to $\N_{v_0}$ only
\begin{subequations}\label{eq:KVF-constr}\begin{align}
  \label{eq:KVF-Psi4}
  \begin{split}
    (H\partial_r- \bar{X}\m-X\mb) \Psi_4\ &=\ - (4\epsilon -\rho)\Psi_4 + \mb\Psi_3
      + (5\pi + 2a)\Psi_3 - 3\lambda\Psi_2 -\kappa_0\bar{\mu}\Phi_2\bar{\Phi}_0 \\
    &\hphantom{=}\ + \kappa_0((\pi+2a)\Phi_2\bar{\Phi}_1-2\lambda\Phi_1\bar{\Phi}_1
      +\partial_r\Phi_2\bar{\Phi}_0 + \bar{\sigma}\Psi_2\bar{\Phi}_2) \ ,
  \end{split} \\
  \label{eq:KVF-Phi2}
  (H\partial_r - \bar{X}\m-X\mb)\Phi_2\ &=\ \mb\Phi_1 - \lambda\Phi_0 + 2\pi\Phi_1 
    + (\rho-2\epsilon)\Phi_2 \ .
\end{align}\end{subequations}
The conditions \eqref{eq:pre-KVF-constr} became then the well defined on $\N_{v_0}$ 
constraint \eqref{eq:KVF-constr} for the geometry components. Note that this system
involves the transversal to $\ih$ derivatives $(\nr\Psi_4,\nr\Phi_2)$. It can be then 
treated as the completion of the 'evolution' equations (\ref{eq:ODE1}-\ref{eq:n-Psi0-N}). 
However, we do not know whether  the resulting system of equations has a well defined 
Cauchy problem  on $\ih$. The first difficulty is, that the $r$-dependent  
coefficient $H$ vanishes at $\ih$. 
Secondly, the completed system constitutes now a system of partial differential 
equations (PDE's) instead of the ordinary ones. The 
structure of this system is not manifest, however the action of the Killing flow 
allows to recast it into the system defined on the Cauchy surface where an equivalent 
system is elliptic in the region where the KVF is timelike \cite{mzh-stationary}. 
Unlike in the static case \cite{ch-static} the question about the ellipticity of 
the system \emph{at} the horizon remains open.

\subsection{The necessary and sufficient conditions}

In this subsection we  formulate the set of necessary and sufficient conditions
for the existence of a Killing vector tangent to and null at horizon in the 
$4$-dimensional, electrovacuum case. As above, we will use the transversal null
surface $\N_{v_0}$. 

\begin{theorem}\label{thm:KVF-suffic} 
  Suppose $(\ih,q_{ab},D_a)$ is an
  invariant-generic non-expanding horizon contained in 4-dimensional spacetime
  $(\M, g_{\alpha\beta})$ which satisfies the vacuum Einstein-Maxwell equations
  with an electromagnetic field $F_{\mu\nu}$. Let $\zeta^\mu$ be the invariant 
  vector field of the neighbourhood of $\ih$ and $(x^A,v,r)$ be the invariant 
  Bondi-like coordinate system (see Section~\ref{sec:neigh-invars}). Each of the 
  conditions $(i)$ and $(ii)$ below is equivalent to the  local existence (in the 
  domain of dependence of $\ih\cup\N_{v_0}$ in the point $(ii)$ below) of a Killing 
  vector $K^\mu$ tangent to and null at $\ih$, and such that $\lie_KT_{\mu\nu}=0$:
  \begin{enumerate}[(i)]
    \item $\hphantom{.}$\vspace{-0.60cm}
      \begin{equation} 
				\lie_{\zeta}g_{\mu\nu}\ =\ 0 \ , \qquad 
				\lie_{\zeta}{T_{\mu\nu}}\ =\ 0 \ , \label{liezeta} 
      \end{equation}
    \item
      The Cauchy data defined on $\ih$ and $\N_{v_0}$ satisfy:
      \begin{enumerate}[(a)]
				\item on $\ih$:  $\zeta^a$ is an infinitesimal symmetry of $(q_{ab},D_a)$ and 
					$\lie_{\zeta}F_{\mu\nu}\ =\ 0$     
				\item on $\N_{v_0}$: the conditions \eqref{eq:KVF-constr} are satisfied, 
					provided (for the necessary condition) the null frame \eqref{eq:frame-def} 
					is constructed such that 
					$\lie_\zeta e_1|_{\ih}\ =\ldots = \  \lie_\zeta e_4|_{\ih}\ =\ 0.$
      \end{enumerate}
  \end{enumerate}
\end{theorem}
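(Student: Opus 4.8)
The plan is to prove the two equivalences $(i)\Leftrightarrow(\text{A})$ and $(ii)\Leftrightarrow(\text{A})$, where $(\text{A})$ denotes the assertion that in the region at hand (all of $\M'$ for $(i)$, the domain of dependence of $\ih\cup\N_{v_0}$ for $(ii)$) there exists a non-trivial Killing field $K^\mu$ tangent to and null at $\ih$ with $\lie_K T_{\mu\nu}=0$; the stated equivalence $(i)\Leftrightarrow(ii)$ then follows by transitivity. Throughout I work in the invariant Bondi-like coordinates $(x^A,v,r)$, so that $\zeta^\mu=(\partial_v)^\mu$ and $\partial_r=-\bn^\mu\partial_\mu$ commute, and in the adapted null frame of Sections~\ref{sec:frame-ih}--\ref{sec:frame-ext}, normalised so that $\lie_\zeta e_\mu|_\ih=0$. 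I use that $\ih$ is invariant-generic throughout.

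The equivalence $(i)\Leftrightarrow(\text{A})$ is immediate from the earlier results. If $(i)$ holds then $\zeta^\mu$ is itself a Killing field; its restriction to $\ih$ is the invariant null vector $\ell^a$, tangent to $\ih$, and $\lie_\zeta T_{\mu\nu}=0$ by hypothesis, so $K:=\zeta$ realises $(\text{A})$. Conversely, for $K$ as in $(\text{A})$, Lemma~\ref{[K,zeta]} shows $K|_\ih$ is an infinitesimal symmetry of $(q_{ab},D_a)$ which is null everywhere on $\ih$, so item (i) of Theorem~\ref{cor:kill-form} forces $K^\mu=a_0\zeta^\mu$ with $a_0\in\mathbb{R}$, and $a_0\neq0$ since $K$ is non-trivial; hence $\lie_\zeta g_{\mu\nu}=a_0^{-1}\lie_K g_{\mu\nu}=0$ and $\lie_\zeta T_{\mu\nu}=0$, which is $(i)$. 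For $(\text{A})\Rightarrow(ii)$ I combine this with Corollary~\ref{cor:kvf-form}: $\zeta=\partial_v$ is Killing, $\zeta|_\ih$ is an infinitesimal symmetry of $(q_{ab},D_a)$, and — using the standard fact that a Killing field preserving $g_{\mu\nu}$ and the Maxwell stress tensor, together with the source-free equations~\eqref{eq:Maxwell-vac}, preserves $F_{\mu\nu}$ up to a constant duality rotation which may be absorbed into the choice of representative — we get $\lie_\zeta F_{\mu\nu}=0$ on $\ih$, i.e.\ $(ii)(\text{a})$; moreover by Corollary~\ref{cor:kvf-form} $\partial_v$ annihilates every frame, connection, Maxwell and Weyl coefficient on $\M'$, so in particular $\partial_v\Psi_4=\partial_v\Phi_2=0$ on $\N_{v_0}$, which by \eqref{eq:pre-KVF-constr}--\eqref{eq:KVF-constr} is $(ii)(\text{b})$ for the frame normalisation fixed above.

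The substantive direction is $(ii)\Rightarrow(\text{A})$. Working with $\zeta=\partial_v$ in the invariant Bondi coordinates, I will show that $(ii)$ forces $\partial_v$ to annihilate every frame, connection, Maxwell and Weyl coefficient in the domain of dependence of $\ih\cup\N_{v_0}$; by the remark noted just before \eqref{eq:pre-KVF-constr} this means $\partial_v$ is Killing and preserves $F_{\mu\nu}$ (hence $T_{\mu\nu}$), and since $\partial_v|_\ih=\ell$ is null and tangent to $\ih$ this is $(\text{A})$. By Corollary~\ref{cor:Cauchy-data} (and Appendix~\ref{app:ivp}) the solution in the domain of dependence is reconstructed from the reduced data $\big(\sgr{\ell};\ m^\alpha,\pi,\mu,\lambda,\Phi_1\ \text{on}\ \slc;\ \Phi_2,\Psi_4\ \text{on}\ \N_{v_0}\big)$ by solving a hierarchy of ordinary differential equations along the generators of $\ih$ and of $\N_{v_0}$ and then in the direction transverse to $\N_{v_0}$. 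Since $\partial_v$ commutes with $\partial_r$ and with itself, differentiating this hierarchy by $v$ produces the linearisation of the same hierarchy --- linear and homogeneous --- for the collection $\Lambda:=\{\partial_v(\text{coefficient})\}$. Now $(ii)(\text{a})$ says $\zeta^a$ is an infinitesimal symmetry of $(q_{ab},D_a)$ and $\lie_\zeta F_{\mu\nu}=0$ on $\ih$; by the analysis of Sections~\ref{sec:evac-exp} and~\ref{sec:evac-kvf-exp} (the constraints \eqref{eq:constr-evac-4D}, which under $\partial_v\mu=\partial_v\lambda=0$ yield \eqref{eq:mul-kill}, the formulas \eqref{Psi2horizon}--\eqref{eq:DPsi4} and \eqref{DPhi2-nei}, and the frame-expansion formulas) every coefficient on $\ih$ is a universal functional of the $v$-independent data $(m^a,\omega_a,(F-i\star_\M F)_{ab},\sgr{\ell})$, whence $\Lambda=0$ on $\ih$. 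Condition $(ii)(\text{b})$, via \eqref{eq:pre-KVF-constr}, says the $\N_{v_0}$-data $(\Phi_2,\Psi_4)$ is $v$-independent; together with $\Lambda|_{\slc}=0$, the linearised generator-ODEs on $\N_{v_0}$ then give $\Lambda=0$ on $\N_{v_0}$, and the linearised transverse hierarchy, being linear and homogeneous with vanishing data on $\ih$ and on $\N_{v_0}$, yields $\Lambda\equiv0$ on the whole domain of dependence.

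The point requiring the most care --- the expected main obstacle --- is the rigour of this last step. Because the right-hand sides of the Newman--Penrose, Bianchi and frame equations contain $\delta$- and $\bar\delta$-derivatives of ``lower'' quantities, differentiating the hierarchy by $\partial_v$ introduces, besides genuine one-dimensional (radial or transverse) ODEs, terms coupling the tangential derivatives of the components of $\Lambda$ to one another; one must verify that the triangular ordering exploited in the proof of Corollary~\ref{cor:Cauchy-data} survives linearisation, so that uniqueness still follows level by level from ODE uniqueness together with the vanishing of $\Lambda$ on the corner $\slc$ and on both characteristic surfaces --- equivalently, one invokes well-posedness of the characteristic Cauchy problem for the linearised reduced Friedrich system. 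The only other place needing attention is the electromagnetic duality-rotation remark used in the necessity direction to pass from $\lie_\zeta T_{\mu\nu}=0$ to $\lie_\zeta F_{\mu\nu}=0$.
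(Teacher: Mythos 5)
Your necessity direction coincides with the paper's: both rest on Theorem~\ref{cor:kill-form} (forcing $K^\mu=a_0\zeta^\mu$) and on Corollary~\ref{cor:kvf-form}/Lemma~\ref{KNP=0} (giving $v$-independence of all Newman--Penrose coefficients, hence \eqref{eq:pre-KVF-constr} on $\N_{v_0}$). For sufficiency you take a genuinely different route. The paper reduces everything to Racz's theorem (Theorem~\ref{thm:Racz}): it verifies on $\Sigma=\ih\cup\N_{v_0}$ the conditions \eqref{Racz1} and \eqref{Racz2} via Lemmas~\ref{lem:kvf-zeta} and~\ref{lem:k'k}, and Racz's theorem then supplies the Killing field in the domain of dependence; the auxiliary wave-equation candidate $K'$ of Lemma~\ref{lem:k'k} has no counterpart in your argument. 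Your on-surface work --- differentiating the generator ODE hierarchies on $\ih$ and on $\N_{v_0}$ by $\partial_v$ and using homogeneity plus vanishing corner data --- is essentially the content of the paper's Lemma~\ref{lem:kvf-zeta}, so that part is sound. The divergence, and the soft spot, is the propagation of $\Lambda\equiv 0$ into the four-dimensional bulk: off the characteristic surfaces the evolution is a genuine hyperbolic PDE system, not the triangular ODE hierarchy (which lives only on $\ih$ and $\N_{v_0}$), so ``level-by-level ODE uniqueness'' does not apply there; what you actually need is uniqueness for the linearised characteristic initial value problem of a symmetric hyperbolic system with vanishing data on both null surfaces. That statement is true (energy estimates in the domain of dependence), so your route can be completed, but it is exactly the technical content that Racz's theorem packages: his hypotheses require only a finite list of conditions on $\Sigma$ rather than control of the linearised bulk evolution, which is what the paper's two lemmas are designed to deliver. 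A minor additional point: your passage from $\lie_K T_{\mu\nu}=0$ to $\lie_K F_{\mu\nu}=0$ via a constant duality rotation is more careful than the paper, which simply assumes $\lie_K F_{\mu\nu}=0$ at the start of Section~\ref{sec:evac-kvf}, but absorbing the rotation changes the representative $F$, whereas condition $(ii)$(a) refers to the given field; this should be stated as a normalisation.
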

\begin{rem}
  The condition $(ii)$(a) above is equivalent to (\ref{m}, \ref{G}, \ref{eq:frame-api-evo},
  \ref{eq:NP-matt-gauge}, \ref{Phi01}, \ref{eq:mul-kill}).
\end{rem}
The necessary conditions follow from the fact, that in the invariant Bondi-like 
coordinate system the invariant vector $\zeta$ of the neighborhood of $\ih$ has the 
form $\zeta^\mu=(\partial_v)^\mu$, from Theorem~\ref{cor:kill-form}, and from the 
previous section.    
To complete the proof of sufficiency, it is enough to show that at $\ih\cup \N_{v_0}$ 
the vector field $\zeta^\mu$ satisfies the Racz conditions \cite{Racz1,*Racz2}
\begin{subequations}\begin{align}
  \label{Racz1}\lie_{\zeta}g_{\mu\nu}\ =\ \nabla_\alpha \lie_{\zeta}g_{\mu\nu}\ 
  =\ \lie_{\zeta}T_{\mu\nu}\ &=\ 0 \ ,  \\
  \label{Racz2}\nabla^{\mu}\nabla_{\mu}\zeta_{\nu}
   + \Ricn{4}_{\nu}{}^{\mu}\zeta_{\mu} \ &=\ 0 \ .
\end{align}\end{subequations}
These conditions are ensured by the Lemma~\ref{lem:kvf-zeta} (cond.~\eqref{Racz1}) 
and Lemma~\ref{lem:k'k} (cond.~\eqref{Racz2}) below. Once they hold, $\zeta$ is 
necessarily a Killing field by Racz theorem \cite{Racz1,*Racz2} (which we quote 
in Appendix~\ref{app:Racz}).
\begin{lem}\label{lem:kvf-zeta}
	Suppose $(\ih, q_{ab},D_a)$ is a non-expanding horizon contained in 4-dimensional 
	spacetime $(\M, g_{\alpha\beta})$ which satisfies the vacuum Einstein-Maxwell 
	equations with an electromagnetic field $F_{\mu\nu}$. Let $(x^A,v,r)$ be the 
	Bondi-like coordinate system of Section~\ref{sec:neigh-invars}. Suppose the 
	conditions $(ii)$(a),(b) of Theorem~\ref{thm:KVF-suffic} are satisfied (however, 
	$\partial_v$ is not assumed here to be the invariant vector field). 
	Then, the vector field $\partial_v$ satisfies at $\ih \cup \N_{v_0}$ the following 
	condition for arbitrary $n\in\mathbb{N}$,
  \begin{equation}\label{eq:kvf-nabla}
    \nabla^{(n)}_{\alpha_1,\cdots,\alpha_n}\lie_{\partial_v}
    g_{\mu\nu}|_{\Sigma}\ =\ 0 \ .
  \end{equation}
\end{lem}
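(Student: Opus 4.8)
The plan is to prove \eqref{eq:kvf-nabla} by induction on $n$, exploiting the characteristic nature of the data and the fact that, on $\ih$ and on $\N_{v_0}$, the Einstein--Maxwell equations together with the Bianchi identities and the frame evolution equations organize themselves into a hierarchy of ODEs along the null generators (this is precisely the structure used to establish Corollaries~\ref{cor:data-evac} and~\ref{cor:Cauchy-data}). Write $\xi:=\partial_v$. It suffices to show $\lie_\xi F = 0$ on $\ih\cup\N_{v_0}$ together with $\lie_\xi g_{\mu\nu}=0$ and then propagate the vanishing of all transversal (here meaning both $\partial_r$ and the $\N_{v_0}$-tangential) derivatives; since $g_{\mu\nu}$ is reconstructed from the frame coefficients $e^\mu{}_\nu$, it is equivalent to show that $\lie_\xi$ annihilates every frame coefficient, every Newman--Penrose connection coefficient, every $\Phi_I$, and every $\Psi_k$, together with all their $\partial_r$-derivatives, at every point of $\ih\cup\N_{v_0}$.

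First I would treat the base case on $\ih$ itself. By hypothesis $(ii)$(a) the restriction $\ell^a=\xi^a|_\ih$ is an infinitesimal symmetry of $(q_{ab},D_a)$ and $\lie_\xi F_{\mu\nu}|_\ih=0$; in the adapted frame this says (cf.\ the Remark following Theorem~\ref{thm:KVF-suffic}) that $m^a,\pi,\sgr{\ell}$ are $v$-independent and that $\Phi_0,\Phi_1$ are $v$-independent, while \eqref{eq:frame-api-evo}, \eqref{G} give the $v$-independence of $a$ and of the remaining rotation components. From \eqref{eq:mul-kill} --- which, recall, is exactly equation \eqref{eq:constr-evac-4D} with $\partial_v\mu=\partial_v\lambda=0$ substituted --- one gets $\partial_v\mu=\partial_v\lambda=0$ \emph{as a consequence}, and from the Maxwell equation \eqref{DPhi2-nei} with $\partial_v\Phi_2=0$ substituted one gets $\partial_v\Phi_2=0$; then \eqref{Psi2horizon}, \eqref{Psi3horizon}, \eqref{eq:NP-matt-gauge} give $\partial_v\Psi_0=\dots=\partial_v\Psi_3=0$, and \eqref{eq:DPsi4} with $\partial_v\Psi_4=0$ substituted (using that $\Phi_{20,r}$ is itself expressed through $v$-independent data via \eqref{eq:n-Phi0}) gives $\partial_v\Psi_4=0$ on $\ih$. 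This establishes $\lie_\xi(\text{all frame/NP quantities})=0$ \emph{on $\ih$}. For the $\partial_r$-derivatives on $\ih$: apply $\partial_v$ to each of the transversal evolution equations (\ref{eq:N-fc-gen}) and (\ref{eq:ODE1}--\ref{eq:ODE3}), commute $\partial_v$ past $\partial_r$ (legitimate since $[\xi,n]=[\partial_v,-\partial_r]=0$), and observe that the right-hand sides are polynomial in quantities already shown $v$-independent; an inner induction on the order $k$ of $\partial_r$ closes this.

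Next I would propagate from $\ih$ onto $\N_{v_0}$. Along $\N_{v_0}$ the coordinate $v$ equals $v_0$, so strictly speaking `$\partial_v$ on $\N_{v_0}$' only makes sense as the restriction of the spacetime field $\xi$; what must be shown is that $\lie_\xi$ of each quantity vanishes at every point of $\N_{v_0}$, and the tool is that along the generators of $\N_{v_0}$ (the integral curves of $n^\mu$) the Einstein--Maxwell--Bianchi system is an ODE hierarchy in the parameter $r$, whose right-hand sides are, schematically, the operators $\nr$ acting on the collection $(\text{frame},\text{NP})$. Applying $\lie_\xi$ to this whole hierarchy and using $[\xi,n]=0$ yields a \emph{linear homogeneous} ODE system along each generator for the quantities $\lie_\xi(\text{frame}),\lie_\xi(\text{NP})$, with initial data at $\slc_{v_0}=\ih\cap\N_{v_0}$ all vanishing by the previous paragraph --- \emph{provided} the two conditions \eqref{eq:KVF-constr} on $\N_{v_0}$ hold, which is exactly hypothesis $(ii)$(b): these are precisely the statements $\partial_v\Psi_4|_{\N_{v_0}}=0$ and $\partial_v\Phi_2|_{\N_{v_0}}=0$ rewritten (via \eqref{eq:B-DPsi4}, \eqref{eq:D-Phi2} and the Bondi-coordinate form of $D$) as intrinsic equations on $\N_{v_0}$, and they are what is needed to feed $\lie_\xi\Psi_4$ and $\lie_\xi\Phi_2$ consistently into the ODE hierarchy rather than leaving them as free data. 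Uniqueness for linear homogeneous ODEs with zero initial data then gives $\lie_\xi(\text{everything})=0$ on all of $\N_{v_0}$, and in particular $\nabla^{(n)}\lie_\xi g_{\mu\nu}|_{\ih\cup\N_{v_0}}=0$ for all $n$ after re-expressing $\nabla^{(n)}$ through the frame derivatives.

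The main obstacle I anticipate is bookkeeping the $\N_{v_0}$ step carefully: one must verify that the full list of evolution equations along $n^\mu$ genuinely closes (every right-hand side expressed in already-controlled quantities), that the two `extra' equations \eqref{eq:KVF-constr} are exactly the missing links for $\Psi_4$ and $\Phi_2$ and introduce no circularity, and that differentiating an equation like \eqref{eq:KVF-Psi4} --- whose coefficient $H$ \emph{vanishes} on $\slc_{v_0}$ --- does not obstruct setting up the linearized system (it does not, because one linearizes the $n$-evolution hierarchy, in which $H$ appears only through its own regular evolution equation \eqref{eq:n-H-gen}, not the degenerate combination $H\partial_r$). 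Once this is organized, the passage to arbitrary $\nabla^{(n)}$ is routine, since $\nabla$ is built from the frame and the NP connection coefficients, all of which, together with their $\nr$-derivatives, have been shown to be Lie-dragged by $\xi$ on $\ih\cup\N_{v_0}$.
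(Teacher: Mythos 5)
Your overall strategy is the one the paper actually uses: show that $\lie_{\partial_v}$ annihilates every frame and Newman--Penrose coefficient at order zero on $\ih$ and on $\N_{v_0}$ (the latter by applying $\partial_v$ to the radial ODE hierarchy (\ref{eq:ODE1}--\ref{eq:ODE3}), using $[\partial_v,\partial_r]=0$ to obtain a \emph{linear homogeneous} system for the dotted quantities with vanishing corner data, with \eqref{eq:KVF-constr} supplying $\partial_v\Psi_4=\partial_v\Phi_2=0$ on $\N_{v_0}$), and then induct on the order of transversal derivatives. The paper packages the order-zero statement through the explicit tensor $A_{\mu\nu}=\lie_\zeta g_{\mu\nu}$ and the commutator $[\m,\partial_v]$, which in particular makes $A_{\mu\nu}|_\ih=0$ immediate from $H|_\ih=X|_\ih=0$ and the NEH conditions, but that is a presentational difference, not a different route.

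There are, however, two concrete defects in your first paragraph. First, the claim that ``\eqref{eq:DPsi4} with $\partial_v\Psi_4=0$ substituted \ldots gives $\partial_v\Psi_4=0$ on $\ih$'' is circular as written: condition $(ii)$(a) constrains only the intrinsic data $(q_{ab},D_a)$ and $F_{\mu\nu}|_\ih$, and says nothing about the transversal Weyl component $\Psi_4$. What \eqref{eq:DPsi4} actually gives, once all other ingredients are $v$-independent, is $D(\partial_v\Psi_4)=-2\sgr{\ell}\,\partial_v\Psi_4$ along each generator, i.e.\ $\partial_v\Psi_4\propto e^{-2\sgr{\ell}v}$; you must then kill the integration constant by evaluating the constraint \eqref{eq:KVF-Psi4} of hypothesis $(ii)$(b) at the corner $\ih\cap\N_{v_0}$ (where $H=X=0$, so it reduces to $D\Psi_4=0$). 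This is exactly how the paper argues. Second, and for the same reason, your ``inner induction on $\partial_r^k$'' on $\ih$ does not close within $\ih$ alone: at each order $k$ the right-hand sides reintroduce $\partial_r^{k-1}\Psi_4$ and $\partial_r^{k-1}\Phi_2$, whose $v$-derivatives again satisfy homogeneous-plus-source transport equations along the generators and need vanishing corner data $\partial_v\partial_r^{k-1}\Psi_4|_{\ih\cap\N_{v_0}}=\partial_v\partial_r^{k-1}\Phi_2|_{\ih\cap\N_{v_0}}=0$. That data is obtained only after the order-zero argument on $\N_{v_0}$ has been run (since $\dot\chi\equiv 0$ on $\N_{v_0}$ implies the vanishing of all its $r$-derivatives there), so the two inductions must be interleaved rather than performed ``all of $\ih$ first, then $\N_{v_0}$.'' Both points are repairable with ingredients already present in your proposal, and with them your argument coincides with the paper's proof.
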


\begin{lem}\label{lem:k'k}
  Suppose a NEH $(\ih,q_{ab},D_c)$ and a vector field $\partial_v$ satisfy all the
  assumptions of Lemma \ref{lem:kvf-zeta}. Then the solution to the
  initial value problem 
  \begin{equation}\label{eq:KVF-cand-lem}
    \nabla^{\mu}\nabla_{\mu}K'_{\alpha}
      + \Ricn{4}_{\alpha}{}^{\mu}K'_{\mu}\ =\ 0 \ , \qquad  
    K'_{\mu}|_{\ih\cup\N}\ =\ \zeta_{\mu} \ 
  \end{equation}  
  agrees at $\Sigma=\ih\cup\N$ with the vector field $\partial_v$ 
  up to arbitrary order
  \begin{equation}\label{eq:k'k}
    \forall_{n\in\mathbb{N}}\quad
    \nabla^{(n)}_{\alpha_1,\cdots,\alpha_n}K'_{\mu}\ =\
    \nabla^{(n)}_{\alpha_1,\cdots,\alpha_n}\zeta_{\mu} \ .
  \end{equation}
\end{lem}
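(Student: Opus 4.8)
The plan is to exploit the fact that the ``candidate Killing equation'' \eqref{eq:KVF-cand-lem} is a linear wave (hyperbolic) equation for $K'_\mu$, whose characteristic initial value problem is posed on the same pair of null hypersurfaces $\ih\cup\N$ on which all our other data live, and to compare its transversal derivatives with those of $\partial_v$ order by order. First I would write both vector fields in the Bondi-like coordinates $(x^A,v,r)$ and in the adapted null frame $(e_1,\dots,e_4)$ of Sections~\ref{sec:frame-ih}--\ref{sec:frame-ext}; note that $\partial_v=e_4+\bar X e_1+X e_2-H\partial_r$ with $X,H$ vanishing at $\ih$, so on $\ih$ itself $\partial_v$ coincides with $\ell^\mu$, while on $\N_{v_0}$ it is $\zeta^\mu$ by the boundary condition in \eqref{eq:KVF-cand-lem}. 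Thus $K'$ and $\partial_v$ agree as functions on the ``corner'' $\Sigma=\ih\cup\N$, which gives \eqref{eq:k'k} for $n=0$; all tangential derivatives along $\Sigma$ of both sides then automatically agree as well.

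The substance is the induction on the number of transversal derivatives. I would set up two nested inductions, matching the bicharacteristic structure of $\ih$ (generated by $\ell=\partial_v$) and of $\N_{v_0}$ (generated by $\bn=-\partial_r$). On $\N_{v_0}$ the operator $\nabla^\mu\nabla_\mu$ contracted appropriately produces, along the null generators $\bn$, a first-order transport (ODE) for each successive $r$-derivative $\partial_r^k K'_\mu$ in terms of lower-order ones and the already-known connection, curvature and Maxwell coefficients on $\N_{v_0}$ — these are exactly the quantities shown in Section~\ref{sec:ivp} and Corollary~\ref{cor:Cauchy-data} to be determined there by the Cauchy data. The same hierarchy, but now run along $\ell=\partial_v$ on $\ih$, controls the mixed derivatives. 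The key point is that the \emph{same} ODE hierarchy is satisfied by the transversal derivatives of $\partial_v$: this is precisely the content of Lemma~\ref{lem:kvf-zeta}, which guarantees $\nabla^{(n)}\lie_{\partial_v}g_{\mu\nu}=0$ on $\Sigma$ to all orders. Indeed, if $\partial_v$ were a Killing field it would solve \eqref{eq:KVF-cand-lem} identically; since \eqref{eq:kvf-nabla} says its failure to be Killing vanishes to infinite order on $\Sigma$, the transported vector $\partial_v$ satisfies the candidate equation on $\Sigma$ modulo terms that are themselves proportional to the (vanishing) symmetrized derivatives of $\lie_{\partial_v}g$. Feeding this into the transport hierarchy and using uniqueness of solutions of the ODE's gives $\partial_r^k K'_\mu=\partial_r^k(\partial_v)_\mu$ on $\ih$, and similarly $\partial_v$-derivatives agree on $\N_{v_0}$, hence all mixed derivatives $\nabla^{(n)}$ agree at $\Sigma$.

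The main obstacle I anticipate is bookkeeping, not principle: one must verify carefully that the commutator $[\nabla^\mu\nabla_\mu,\ \cdot\ ]$ acting on $K'-\partial_v$ really does close into the transport hierarchy \emph{without} introducing transversal derivatives of order higher than those already solved for, and that the ``source'' terms that appear are genuinely built only from $\lie_{\partial_v}g_{\mu\nu}$, its derivatives, and the known geometry on $\Sigma$ — so that Lemma~\ref{lem:kvf-zeta} can be invoked to kill them. Concretely this amounts to the identity $\nabla^\mu\nabla_\mu(\lie_X\xi)_\nu + \Ricn{4}_\nu{}^\mu(\lie_X\xi)_\mu = \nabla^\mu\big(\text{terms in }\lie_X g\big)$ for any Killing candidate, combined with the wave-operator commutation identities; I would establish it once in covariant form and then read it off in the adapted frame. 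The second, milder subtlety is the corner $\ih\cap\N_{v_0}$ itself, where one must check the two inductions are mutually consistent — this follows because the data of Corollary~\ref{cor:Cauchy-data} were shown to be freely and consistently prescribable there, so the two ODE hierarchies share identical initial conditions at the corner and the usual characteristic-surface argument applies.
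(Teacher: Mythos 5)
Your proposal is correct and follows essentially the same route as the paper's proof in Appendix~\ref{app:k'k}: induction on the order of transversal derivatives, using Lemma~\ref{lem:kvf-zeta} together with the wave-operator identity to show that $\zeta=\partial_v$ satisfies \eqref{eq:KVF-cand-lem} to all orders on $\Sigma$, and then reducing $\nabla^{(n)}\nabla^\mu\nabla_\mu(K'-\zeta)=0$ to linear homogeneous transport ODEs for the $(n{+}1)$st transversal derivatives along the null generators of $\ih$ and of $\N_{v_0}$, which vanish by uniqueness given zero data at the corner. The closure and source-term issues you flag are exactly the points the paper handles via the frame decomposition of $g^{\mu\nu}$, the Jacobi identity and $\lie_\ell\bn=0$.
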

The proofs of the above Lemmas \ref{lem:kvf-zeta} and \ref{lem:k'k} is presented in the 
Appendices~\ref{app:kvf-zeta} and \ref{app:k'k} respectively.

In Theorem \ref{thm:KVF-suffic} we assumed that $\ih$ was invariant-generic. Owing 
to that assumption and to Theorem \ref{cor:kill-form} the only candidate for the 
Killing vector field was the invariant vector $\zeta^\mu$ which in the invariant 
Bondi-like coordinates equals $\partial_v$. On the other hand, if we relax the 
invariant-genericity assumption, we still have Theorem~\ref{sec:neigh-killnonin}. 
Combined with Lemma~\ref{lem:kvf-zeta} and Lemma~\ref{lem:k'k} it leads to the 
following non-invariant version of Theorem~\ref{thm:KVF-suffic}:   
\begin{theorem}\label{thm:KVF-suffic'} 
  Suppose $(\ih, q_{ab},D_a)$ is a NEH contained in 4-dimensional spacetime
  $(\M, g_{\alpha\beta})$ which satisfies the vacuum Einstein-Maxwell equations
  with an electromagnetic field $F_{\mu\nu}$. Let $(x^A,v,r)$ be the Bondi-like 
	coordinate system (see Section \ref{sec:neigh-invars}) such that $\sgr{\partial_v}\ %
	=\ \const\ \neq 0$.
  Each of the conditions $(i)$ and $(ii)$ below is equivalent to the  local
  existence (in the domain of dependence of $\ih\cup \N_{v_0}$ in the point (ii) 
  below) of a Killing vector $K^\mu$ tangent to $\ih$, such that
  $K|_{\ih}=\partial_v$ and such that $\lie_KT_{\mu\nu}=0$:
  \begin{enumerate}[(i)]
    \item $\hphantom{.}$\vspace{-0.60cm}
      \begin{equation}
	\lie_{\partial_v}g_{\mu\nu}\ =\ 0 \ , \qquad 
	\lie_{\partial_v}{T_{\mu\nu}}\ =\ 0 \ , \label{liepartial} 
      \end{equation}
    \item
      The Cauchy data defined on $\ih$ and $\N_{v_0}$ satisfy:
      \begin{enumerate}[(a)]
	\item on $\ih$:  $\partial_v$ is an infinitesimal symmetry of $(q_{ab},D_a)$ and 
	  $\lie_{\partial_v}F_{\mu\nu}|_\ih\ =\ 0$     
	\item on $\N_{v_0}$: the conditions (\ref{eq:KVF-constr}) are satisfied,
	  provided (for the necessary condition) the null frame \ref{eq:frame-def} is 
	  constructed such that 
	  $\lie_\zeta e_1|_{\ih}\ = \ldots = \  \lie_\zeta e_4|_{\ih}\ =\ 0.$
      \end{enumerate}
  \end{enumerate}
\end{theorem}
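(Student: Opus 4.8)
The proof follows that of Theorem~\ref{thm:KVF-suffic} essentially line by line, with the invariant vector field $\zeta^\mu$ replaced everywhere by the coordinate field $(\partial_v)^\mu$ of the chosen Bondi-like system, and with Theorem~\ref{thm:Kill-hel} playing the role that Theorem~\ref{cor:kill-form} played there. I would first settle the equivalence of $(i)$ with the local existence of $K^\mu$. If $(i)$ holds, then $K^\mu:=(\partial_v)^\mu$ is a Killing field; it is tangent to $\ih=\{r=0\}$ because $(\partial_v)^\mu r_{,\mu}=0$, it restricts to $\ell^a$ on $\ih$ and is null there since $\sgr{\partial_v}=\const$, and $\lie_K T_{\mu\nu}=\lie_{\partial_v}T_{\mu\nu}=0$ by assumption. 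Conversely, let $K^\mu$ be a Killing field tangent to $\ih$ with $K|_{\ih}=(\partial_v)^a=\ell^a$ and $\lie_K T_{\mu\nu}=0$. Then $[\ell,K|_{\ih}]=0$ and the foliation $\{v=\const\}$ of $\ih$ is preserved by $\ell$, so the hypotheses of Theorem~\ref{thm:Kill-hel} are met and in the given Bondi-like coordinates $K^\mu=a(\partial_v)^\mu+b\,\Phi^A(x^B)(\partial_A)^\mu$ with $a,b$ constant; matching with $K|_{\ih}=(\partial_v)^\mu$ forces $a=1$ and $b\,\Phi^A\partial_A=0$, hence $K^\mu=(\partial_v)^\mu$ on all of $\M'$, which gives $\lie_{\partial_v}g_{\mu\nu}=\lie_K g_{\mu\nu}=0$ and $\lie_{\partial_v}T_{\mu\nu}=0$, that is $(i)$.

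For $(i)\Rightarrow(ii)$ I would argue as in Sections~\ref{sec:evac-kvf-exp}--\ref{sec:evac-kvf-exist}: under $(i)$, $\partial_v$ is a Killing field tangent to and null at $\ih$, so by Lemma~\ref{[K,zeta]} its restriction to $\ih$ is an infinitesimal symmetry of $(q_{ab},D_a)$, and by Corollary~\ref{cor:kvf-form} (via Lemma~\ref{KNP=0}, whose initial condition $\lie_{\partial_v}e_\mu|_{\ih}=0$ is the frame normalisation demanded in $(ii)(b)$) all frame components and all Newman-Penrose coefficients --- of the connection, of the Weyl tensor and of $F_{\mu\nu}$ --- are $v$-independent on $\M'$; in particular $\lie_{\partial_v}F_{\mu\nu}|_{\ih}=0$, giving $(ii)(a)$, and $\partial_v\Psi_4|_{\N_{v_0}}=\partial_v\Phi_2|_{\N_{v_0}}=0$, which upon rewriting $\partial_v$ through the Bianchi identity \eqref{eq:B-DPsi4} and the Maxwell equation \eqref{eq:D-Phi2} --- with $D=\ell^\mu\partial_\mu$ expanded via \eqref{eq:frame-def} --- become exactly the constraints \eqref{eq:KVF-constr}, that is $(ii)(b)$. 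For the converse $(ii)\Rightarrow(i)$, exactly as in Theorem~\ref{thm:KVF-suffic} it suffices to verify that $\partial_v$ satisfies the Racz conditions \eqref{Racz1}--\eqref{Racz2} on $\Sigma:=\ih\cup\N_{v_0}$ and then to invoke the Racz theorem (Appendix~\ref{app:Racz}): condition \eqref{Racz1}, namely $\lie_{\partial_v}g_{\mu\nu}=0$, $\nabla^{(n)}_{\alpha_1\cdots\alpha_n}\lie_{\partial_v}g_{\mu\nu}|_{\Sigma}=0$ for all $n$, and $\lie_{\partial_v}T_{\mu\nu}|_{\Sigma}=0$, is the content of Lemma~\ref{lem:kvf-zeta} (stated under precisely the present hypotheses $(ii)(a),(b)$ and not presupposing that $\partial_v$ be invariant), while condition \eqref{Racz2} follows from Lemma~\ref{lem:k'k}, since the solution $K'_\mu$ of $\nabla^\mu\nabla_\mu K'_\alpha+\Ricn{4}_\alpha{}^\mu K'_\mu=0$ with data $K'|_{\Sigma}=(\partial_v)_\mu$ agrees with $(\partial_v)_\mu$ to every order on $\Sigma$, so the left-hand side of \eqref{Racz2} evaluated on $\partial_v$ coincides with its (vanishing) value on $K'$ along $\Sigma$. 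Together with the first paragraph this closes the cycle $(ii)\Rightarrow(i)\Rightarrow$ existence.

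I expect the genuine difficulty to be confined to the two auxiliary lemmas (proved in Appendices~\ref{app:kvf-zeta} and~\ref{app:k'k}) rather than to the theorem itself: establishing $\nabla^{(n)}\lie_{\partial_v}g|_{\Sigma}=0$ for all $n$ (Lemma~\ref{lem:kvf-zeta}) requires propagating the $v$-independence of the horizon data through the whole hierarchy of transport ODEs --- along $\bn^\mu$ off $\ih$ and along $\ell^\mu$ off $\N_{v_0}$ --- and the conditions \eqref{eq:KVF-constr} are precisely what makes $\partial_v\Psi_4$ and $\partial_v\Phi_2$ vanish on $\N_{v_0}$ so that the recursion closes. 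The only additional subtlety at the level of the theorem --- already implicit in Theorem~\ref{thm:KVF-suffic} --- is that for a non-null Maxwell field the hypothesis $\lie_K T_{\mu\nu}=0$ pins down $\lie_K F_{\mu\nu}$ only up to a constant duality rotation, which must be fixed (absorbed into $F_{\mu\nu}$) before $(ii)(a)$ reads as stated.
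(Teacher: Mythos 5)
Your proposal is correct and follows essentially the same route as the paper: the paper proves Theorem~\ref{thm:KVF-suffic'} by pointing out that Theorem~\ref{thm:Kill-hel} replaces Theorem~\ref{cor:kill-form} in pinning the candidate Killing field to $\partial_v$, and that Lemmas~\ref{lem:kvf-zeta} and~\ref{lem:k'k} (stated without the invariant-genericity assumption) supply the Racz conditions exactly as in Theorem~\ref{thm:KVF-suffic}. Your closing remark about the duality-rotation ambiguity in passing from $\lie_K T_{\mu\nu}=0$ to $\lie_K F_{\mu\nu}=0$ is a fair observation; the paper handles it by explicitly assuming $\lie_K F_{\mu\nu}=0$ at the start of Section~\ref{sec:evac-kvf}.
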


\section{Axial and helical Killing fields in $4$D electrovac spacetime}
  \label{sec:evac-kvf-axihel}

The Killing fields null at the horizon are not the only possible type of spacetime 
symmetries. There are two more classes possible: axial KVF  and helical KVFs. 
In this section we formulate
the set of necessary and sufficient conditions for their existence analogous to 
Theorem~\ref{thm:KVF-suffic}. We still assume that the studied horizons are NEHs 
embedded in a $4$-dimensional electrovac spacetime.

\subsection{Axial KVF}

If the spacetime neighbourhood of a NEH $(\ih,q_{ab},D_c)$ admits a rotational 
Killing field $\Phi^\mu$ tangent to $\ih$, then  
one can choose at $\ih$ a null vector field $\ell^a$ such that 
\begin{equation}
	[\Phi,\ell]\ =\ 0 \ , \qquad \sgr{\ell}=\const\neq 0
\end{equation}
and a foliation of $\ih$ by spacelike slices, each  preserved by the symmetry generated 
by $\Phi^a|_{\ih}$ \cite{lp-symm}. In the case of an invariant-generic NEH $\ih$, 
the invariant vector $\ell^a$ and the invariant foliation have this property. 
Otherwise, we will be assuming that $\ell^a$ and the foliation are given. 
In the corresponding Bondi-like coordinates $(x^A,v,r)=(\theta,\phi,v,r)$, such 
that $(\theta,\phi)$ are the spherical coordinates defined on the spheres $v=$const, 
$r=$const, and such that 
\begin{equation}
	\Phi^a|_{\ih}\ =\ (\partial_\phi)^a
\end{equation} 
owing to Theorem~\ref{cor:kill-form} in the invariant-generic case, and 
Theorem~\ref{thm:Kill-hel} otherwise, in all the domain of the Bondi-like extension, 
it is true that
\begin{equation} 
	\Phi^\mu=(\partial_\phi)^\mu.
\end{equation}         
The Bondi-like coordinates are determined by the coordinates $(\theta,\phi,v)$ 
defined on $\ih$ by using only $\Phi|_{\ih}$. We also assume, that the KVF $\Phi$ 
is a symmetry of the Maxwell field. Due to Lemma \ref{KNP=0}, in the adapted null 
frame \ref{eq:frame-def} such that 
\begin{equation} 
	\lie_\Phi e^1|_{\ih}\ =\ \ldots\ = \lie_\Phi e^4|_{\ih}\ 
	=\ \lie_\Phi e_1|_{\ih}\ =\ \ldots\ = \lie_\Phi e_4|_{\ih}\ =\ 0
\end{equation}         
the Cauchy data defined on the surfaces $\ih\cup \N_{v_0}$ ($\N_{v_0}$ such that 
$v=v_0$) as specified in Corollary~\ref{cor:Cauchy-data} is invariant with respect 
to $\Phi=\partial_\phi$.

The opposite statement is obviously true. Suppose the Cauchy data defined on 
$\ih\cup \N_{v_0}$ has an infinitesimal axial symmetry $\partial_\phi$. Then, 
the vector field $\Phi^\mu$ defined  in the corresponding Bondi-like coordinates as
\begin{equation}
	\Phi^\mu\ =\ (\partial_\phi)^\mu
\end{equation}
is a Killing vector field.

\subsection{Helical KVF}

If the spacetime neighbourhood of an IH admits a helical Killing vector field $X$ 
(see Theorem~\ref{thm:Kill-hel} for the definition) a result analogous to Theorem~%
\ref{thm:KVF-suffic} and Theorem~\ref{thm:KVF-suffic'} can be established. In the 
current case, however, the structure of a spacetime symmetries is much richer. 
If present, the KVF $X$ induces (see Theorem~\ref{thm:hel-decomp}) at $\ih$ both,
null and axial symmetry. Then, if it exists, by Theorem~\ref{thm:Kill-hel} we 
construct on $\M'$ the Bondi-like coordinate system, using as the boundary condition 
at $\ih$ the assumption, that restriction to $T(\ih)$ of $X$ is a linear combination 
of a null $\partial_v$ and axial $\partial_{\varphi}$ symmetry generators.      

Knowing the expected form of KVF one can repeat the steps performed in the
proof of Theorem \ref{thm:KVF-suffic}, just inserting as a candidate for KVF the 
field $X^{\mu}:=a\zeta^{\mu}+b(\partial_{\varphi})^{\mu}$ (where $a,b$ are
constants) instead of $\zeta^{\mu}$. The frame coefficients have then to satisfy
an additional condition, namely that at the horizon they are invariant with respect 
to axial symmetry induced at it
\begin{equation}\label{eq:hk-frame-inv}
  \partial_{\varphi} e^{\mu}|_{\ih}\ =\ 0 \ .
\end{equation}
Upon that assumption the generalization of Theorem~\ref{thm:KVF-suffic}
to the case of helical KVF is almost straightforward. The only step
which require certain attention if the proof of \eqref{eq:kvf-nabla}
(part of the proof of Lemma~\ref{lem:kvf-zeta}) at $\ih$ as the evolution 
of higher order derivatives of $g_{\mu\nu}$ along the orbits of $X$ is a priori 
not known.
We know however that the components of $\ih$ internal geometry (so all
the elements of the set of (external) geometry data $\bar{\chi}$
except $\Psi_4,\Phi_2$) are preserved by the flow of $\partial_v$ and
$\partial_{\varphi}$ as well as the flow of $X$. Whence, applying the
method used in proof of Lemma~\ref{lem:kvf-zeta} we can show, that also
all the transversal derivatives (i.e. the derivatives over radial coordinate $r$) 
of all the elements of $\bar{\chi}$ are preserved by the flow of $\partial_v$, 
$\partial_{\varphi}$ and $X$.

Note that if the spacetime metric is analytic at $\ih\cup\N$ the invariance of all 
the transversal metric (and Maxwell field) derivatives with respect to $\partial_{\varphi}$ 
implies immediately, that the Bondi-like extensions $\zeta, \partial_{\varphi}$ of 
the null and axial symmetry induced at $\ih$ are Killing fields at the horizon
neighbourhood. This is true also in higher dimension and for any compact topology 
of the horizon base space \cite{hiw-rigid} It is a-priori not known whether this 
statement will remain true if we drop the analyticity assumption.

The necessary condition for $X$ to be KVF 
\begin{equation}
  X^{\mu}\partial_{\mu}\Psi_4|_{\N}
  \ =\ X^{\mu}\partial_{\mu}\Phi_2|_{\N}\ =\ 0
\end{equation} 
can be expressed in the form similar to \eqref{eq:KVF-constr}
\begin{subequations}\label{eq:KVF-constr-hel}\begin{align}
  \label{eq:KVF-Psi4-hel}
  \begin{split}
    (H\partial_r-\bar{X}\m-X\mb-C\partial_{\varphi}) \Psi_4\
    &=\ - (4\epsilon -\rho)\Psi_4 + \mb\Psi_3
      + (5\pi + 2a)\Psi_3 
      - 3\lambda\Psi_2
      - \kappa_0 ( \bar{\mu}\Phi_2\bar{\Phi}_0 \\
      &\ - (\pi+2a)\Phi_2\bar{\Phi}_1 ) 
      +\kappa_0 ( -2\lambda\Phi_1\bar{\Phi}_1
      +\partial_r\Phi_2\bar{\Phi}_0
      + \bar{\sigma}\Psi_2\bar{\Phi}_2 )
    \ ,
  \end{split} \\
  \label{eq:KVF-Phi2-hel}
  (H\partial_r - \bar{X}\m-X\mb-C\partial_{\varphi})\Phi_2\
    &=\ \mb\Phi_1 - \lambda\Phi_0 + 2\pi\Phi_1 + (\rho-2\epsilon)\Phi_2
    \ ,
\end{align}\end{subequations}
where $C$ is a constant of $\N$ such that $X=c_1(\partial_v +
C\partial_{\varphi})$ (with $c_1$ being a constant) at $\ih\cap\N$.
As a consequence one can formulate necessary and sufficient conditions for the 
existence of a helical HVF in the neighbourhood of the NEH in forms of the 
analogs of Theorems~\ref{thm:KVF-suffic} and~\ref{thm:KVF-suffic'}. The only 
difference with respect to those theorems is $(i)$ different differential condition 
for data at $\N_{v_0}$, namely \eqref{eq:KVF-constr} is replaced with 
\eqref{eq:KVF-Psi4-hel} and $(ii)$ additional condition at $\ih$: that the adapted 
null frame \eqref{eq:frame-def} is preserved by the flow of a cyclic symmetry of 
$\ih$ induced on it owing to Theorem~\ref{thm:hel-decomp}.

\section{Conclusions}\label{sec:concl}

In the article we explored the possibility of constructing 
a well defined and 
convenient to use description of a non-expanding horizon spacetime neighbourhood.  
This goal has been achieved by a construction of a preferred coordinate system, 
built with use of the geometric invariant of the horizon geometry: invariant null 
vector field and the foliation compatible with its flow. For the class of the horizons
named here  'generic-invariant' (and covering all horizon geometries except for special
non-generic cases) such structure is unique, whereas the small class of non-generic 
horizons may allow for several such structures on one horizon. This structure allowed to 
arrive to the following results true for the neighbourhood of the horizon of arbitrary 
dimension and arbitrary compact spatial slice (or equivalently base space) topology:
\begin{enumerate}
	\item The distinguished coordinate system defined at non-expanding horizon has been 
		extended to the horizon neighbourhood analogously to the construction of the Bondi
		coordinate system near the null SCRI: the horizon coordinates are transported along 
		the (defined uniquely for generic-invariant horizon) geodesics generated
		by the null field transversal to the slices of distinguished foliation of the horizon.
		These coordinates are then supplemented by the affine parameter along the above mentioned
		geodesics. Specifically, the coordinates are determined uniquely by the distinguished 
		null flow and foliation of the horizon via (\ref{eq:nM}, \ref{eq:r-M}, \ref{eq:v-M}, 
		\ref{eq:xA-M}).
		\label{it:concl-Bondi}
	\item The specific construction of the above coordinate system proves to be very convenient
		in case when the non-expanding horizon is a Killing horizon, that is there exists at 
		the horizon neighbourhood a Killing vector field tangent to the horizon and null at it. 
		The Killing vector field  takes in specified coordinate system a particular very 
		simple form given by \eqref{eq:k-l}. 
		
		That allows in particular to test immediately if the non-expanding 
		horizon is a Killing horizon as one needs only to verify whether the fields of the form 
		\eqref{eq:k-l} are Killing fields.
		\label{it:concl-Killing1}
	\item The quasi-local version of the Hawking rigidity theorem \cite{lp-symm}, see also 
		\cite{hiw-rigid} allows to generalize the above results to the Killing fields, which 
		are not necessarily null at the horizon. Then again, such fields take in Bondi-like 
		coordinate system a specific form given by \eqref{eq:Kill-chiral-form}. 
		\label{it:concl-Killing2}
\end{enumerate}

n the case of non-expanding horizons in four dimensional spacetime (but still for 
the arbitrary compact topology of the horizon base space) and arbitrary (up to energy 
condition \ref{c:energy}) matter content the known Newman-Penrose formalism
allowed to construct on the horizon neighbourhood an 
invariant null frame, compatible with the horizon invariant structure and the 
Bondi-like coordinate system. In case of a generic-invariant horizon such frame 
is again defined uniquely. An application of this frame to describe the spacetime 
metric near the horizon led to the following result:
\begin{enumerate}
	\setcounter{enumi}{4}
	\item The invariant Bondi-like coordinate system allowed to define in an 
		invariant way a radial expansion of the spacetime metric about the horizon. This 
		and representation of the geometric data in a distinguished null frame allowed 
		in turn to identify a free data needed to determine the expansion of the 
		spacetime metric at the horizon up to desired order. This data is specified
		by the Corollary~\ref{cor:data-gen}. It does not require to know the evolution 
		equations of the matter fields present at the horizon neighbourhood.
		
		However, if one considers specific matter content the matter field equations 
		may induce additional constraints on the above (otherwise free) data.
		\label{it:concl-expansion}
\end{enumerate}

The matter of studies was subsequently restricted to the horizons in $4$-dimensional 
spacetime admitting Maxwell field only. In this case the distinguished null frame 
introduced before allows for a convenient representation of the Maxwell field equations.
This in turn allowed to improve the result of point~\ref{it:concl-expansion} as well 
as to establish new ones. In particular:
\begin{enumerate}
	\setcounter{enumi}{5}
	\item The Maxwell field equations coupled to Einstein-Maxwell ones allowed to 
		reduce the set of data at the horizon required to determine the expansion of 
		the spacetime metric up to desired order. Such expansion along the whole horizon
		is now determined by the appropriate data (specified in Corollary~\ref{cor:data-evac})
		on a single spatial slice of the horizon.
		\label{it:concl-exp-evac}
	\item Known formulation of the characteristic initial value problem for 
		Einstein-Maxwell field equations (together with Maxwell evolution equations) 
		and known specification of the free data (so called Friedrich reduced data 
		on the boundary surfaces: in our case the horizon and the null surface transversal 
		to it) in terms of the components of geometry and matter field in the distinguished 
		null frame allowed in turn to determine the necessary and sufficient conditions
		for the existence on the non-expanding horizon neighbourhood of the Killing vector
		field tangent to and null at the horizon. this condition takes a form of 
		a differential constraint involving the (otherwise free) data on the null surface 
		transversal to the horizon. In general electro-vacuum case it is the set of 
		constraints \eqref{eq:KVF-constr}, whereas in the vacuum case it is a single 
		constraint \eqref{eq:KVF-Psi4} with $\Phi_0, \Phi_1, \Phi_2$ set to zero.

		These constraints allow in particular to probe the spacetime geometries determined 
		numerically near the horizon for stationarity. Also the found constraints allow 
		to construct in a straightforward way an invariant quantity which vanishes when 
		the a non-expanding horizon is a Killing horizon and constituting the measure 
		of departure from stationarity otherwise. It may then prove useful in description 
		of the spacetime near black hole horizon in its final stages of evolution (like 
		final stages of black hole merging process).
		\label{it:concl-KVF}
\end{enumerate}

The developments presented in this paper, although being of considerable potential 
use by themselves, are essentially an exploration of possibilities to build and use the
description of a black hole neighbourhood via natural expansion of the distinguished 
geometry structures of its horizon. Therefore, thy are meant to be mainly a methodology 
example rather than a complete studies. For that reason, for example only the Maxwell field 
as a matter content has been considered when developing the results listed in points 
\ref{it:concl-exp-evac} and \ref{it:concl-KVF} above. These results can be easily extended 
to other types of matter, provided the appropriate formulation of characteristic 
initial value problem for any such type of matter exists and the free initial data for 
such problem are identified. In particular, the theorems and construction of the whole 
article can be extended in a straightforward way to admit a nonvanishing cosmological 
constant, thus opening the applications in description of black holes in asymptotically 
anti-DeSitter spacetimes.

Similarly, the results presented in point \ref{it:concl-expansion} rely heavily 
on the Newman-Penrose complex tetrad formalism, tailored specifically to $4$-dimensional 
spacetimes. One can however introduce in arbitrary dimension a real orthonormal 
vielbain (see for example \cite{apv-ads}) playing the same role. Rewriting the Einstein 
equations in such vielbain, while more involved than in dimension $4$ does not pose 
any new qualitative challenges. As a consequence, at least the description of the radial
metric expansion at the horizon can be extended in a systematic way to higher dimension.

The result of point \ref{it:concl-Bondi} can be extended in two ways. On one hand, 
the invariant geometric structure of the horizon (although a bit modified) is still 
present on the dynamical horizon. Since to build the Bondi-like coordinate system on 
the horizon neighbourhood one only needs this structure and the congruence of the null 
geodesics transversal to the horizon, it is straightforward to introduce such description 
in the neighbourhood of a dynamical black hole. A similar to ours construction (based on
general foliations by marginally trapped tubes) has been for example proposed in
\cite{b-spc}.

On the other hand, the specified coordinate system construction does not employ the 
Einstein field equations outside of the energy condition \ref{c:energy}, which in 
turn can be formulated in terms of purely geometric quantities (instead of the matter
stress-energy tensor). As a consequence this construction can be extended also 
to modified theories of gravity, although for such extension certain care  is needed 
to probe how the distinguished geometry structure at the horizon changes.

\begin{acknowledgments}
	We would like to thank Piotr Chru\'sciel and Vincent Moncrief for enlightening 
	discussions and helpful comments. TP also thanks Rodrigo Olea-Aceituno for helpful 
	advises on the references.
	This work has been 
	supported in parts by the Chilean FONDECYT regular grant 1140335, the National 
	Center for Science (NCN) of Poland research grants 2012/05/E/ST2/03308, 
	2011/02/A/ST2/00300, 
	and the Spanish MICINN research grant FIS2011-30145-C03-02.
	TP also acknowledges the financial support of UNAB via internal project no DI-562-14/R.
\end{acknowledgments}

\appendix

\section{The Newman-Penrose formalism}
  \label{app:NP}
  
Here we briefly sumarize the Newman-Penrose formalism, as specified 
in \cite{exact}.

\subsection{The NP null frame}
\label{app:NP-frame}

The complex null vector frame $(e_1,\dots,e_4)=(m,\bar{m},n,\ell)$
(where $n,l$ are real vectors and $m$ is complex) of a four-dimensional
spacetime consists the
Newman-Penrose null tetrad if the following scalar products
\begin{subequations}\label{eq:NP-prod}\begin{align}
  g^{\mu\nu}m^{\mu}\bar{m}^{\nu} &= 1  &
  g^{\mu\nu}n^{\mu}\ell^{\nu} &= -1  \tag{\ref{eq:NP-prod}}
\end{align}\end{subequations}
are the only nonvanishing products of the frame components. The dual
frame corresponding to the tetrad will be denoted as
$(e^1,\dots,e^4)$. In terms of this coframe components the metric
tensor takes the form:
\begin{equation}\label{eq:NP-metric}
  g_{\mu\nu}\ =\ e^1{}_{\mu}e^2{}_{\nu} + e^2{}_{\mu}e^1{}_{\nu}
    - e^3{}_{\mu}e^4{}_{\nu} - e^4{}_{\mu}e^3{}_{\nu}
\end{equation}

The torsion-free spacetime connection is determined by the 1-forms defined
as follows
\begin{subequations}\label{eq:NP-con-def}\begin{align}
  \Gamma_{\alpha\beta}\ &=\ - \Gamma_{\beta\alpha} &
  \rd e^{\alpha} + \Gamma^{\alpha}{}_{\beta}\wedge e^{\beta}\ &=\ 0
    \tag{\ref{eq:NP-con-def}}
\end{align}\end{subequations}
which can be decomposed onto the following complex coefficients
\begin{subequations}\label{eq:NP-con}\begin{align}
  - \Gamma_{14} &= \sigma e^1 + \rho e^2 + \tau e^3 + \varkappa e^4 &
  - \fracs{1}{2}(\Gamma_{12} + \Gamma_{34})
    &= \beta e^1 + \alpha e^2 + \gamma e^3 + \epsilon e^4  \\
    \Gamma_{23} &= \mu e^1 + \lambda e^2 + \nu e^3 + \pi e^4 
\end{align}\end{subequations}
called the spin coefficients.

The Riemann tensor is given by the equation
\begin{equation}\label{eq:NP-Riem}
  \fracs{1}{2}
    \Riem{4}^{\alpha}{}_{\beta\gamma\delta} e^{\gamma}\wedge e^{\delta}
  = \rd\Gamma^{\alpha}{}_{\beta}
    + \Gamma^{\alpha}_{\gamma}\wedge\Gamma^{\gamma}{}_{\beta}
\end{equation}

The Ricci and Weyl tensor
\begin{subequations}\begin{align}
  \Ricn{4}_{\alpha\beta} &:= \Riem{4}^{\gamma}{}_{\alpha\gamma\beta}
    \hspace{9cm}
  \Ricn{4} := \Ricn{4}^{\gamma}{}_{\gamma} \\
	\Weyl{4}_{\alpha\beta\gamma\delta}
		&:= \Riem{4}_{\alpha\beta\gamma\delta}
				+\fracs{1}{6}\Ricn{4} ( g_{\alpha\gamma}g_{\beta\delta}
																-g_{\alpha\delta}g_{\beta\gamma} ) 
		-\fracs{1}{4} ( g_{\alpha\gamma}\Ricn{4}_{\beta\delta}
											-g_{\beta\gamma}\Ricn{4}_{\alpha\delta}
											+g_{\beta\delta}\Ricn{4}_{\alpha\gamma}
											-g_{\alpha\delta}\Ricn{4}_{\beta\gamma} )
\end{align}\end{subequations}
is described in the formalism by the complex coefficients defined as:
\begin{subequations}\label{eq:NP-WR}\begin{align}
  \Psi_0 &= -\Weyl{4}_{4141}
    & \Phi_{00} &= -\fracs{1}{2}\Ricn{4}_{44}
    & \Phi_{12} &= -\fracs{1}{2}\Ricn{4}_{31} \\
  \Psi_1 &= -\Weyl{4}_{4341}
    & \Phi_{01} &= -\fracs{1}{2}\Ricn{4}_{41}
    & \Phi_{20} &= -\fracs{1}{2}\Ricn{4}_{22} \\
  \Psi_2 &= -\Weyl{4}_{4123}
    & \Phi_{02} &= -\fracs{1}{2}\Ricn{4}_{11}
    & \Phi_{21} &= -\fracs{1}{2}\Ricn{4}_{32} \\
  \Psi_3 &= -\Weyl{4}_{4323}
    & \Phi_{10} &= -\fracs{1}{2}\Ricn{4}_{42}
    & \Phi_{22} &= -\fracs{1}{2}\Ricn{4}_{33} \\
  \Psi_4 &= -\Weyl{4}_{3232}
    & \Phi_{11} &= -\fracs{1}{4}(\Ricn{4}_{43}+\Ricn{4}_{12})
    & \fracs{1}{24}\Ricn{4} &= \fracs{1}{12}(\Ricn{4}_{43}-\Ricn{4}_{12})
\end{align}\end{subequations}

The equation \eqref{eq:NP-Riem} written in terms of the components
(\ref{eq:NP-con},\ref{eq:NP-WR}) takes the form
\begin{subequations}\label{eq:NP-eq}\begin{align}
  \label{eq:E-Psi1}
  \m\rho-\mb\sigma
    &= \rho(\bar{\alpha}+\beta)-\sigma(3\alpha-\bar{\beta})
     +\tau(\rho-\bar{\rho})+\varkappa(\mu-\bar{\mu})-\Psi_1+\Phi_{01} \\
  \label{eq:E-Psi2}
    \m\alpha-\mb\beta
    &= (\mu\rho-\lambda\sigma)+\alpha\bar{\alpha}+\beta\bar{\beta}
     -2\alpha\beta+\gamma(\rho-\bar{\rho}) 
    +\epsilon(\mu-\bar{\mu})-\Psi_2+\Phi_{11}+\fracs{1}{24}\Ricn{4} \\
  \label{eq:E-Psi3}
    \m\lambda-\mb\mu
      &=\nu(\rho-\bar{\rho})+\pi(\mu-\bar{\mu})
       +\mu(\alpha+\bar{\beta}) 
      +\lambda(\bar{\alpha}-3\beta)-\Psi_3+\Phi_{21} \\
  \label{eq:D-rho}
  D\rho-\mb\varkappa
    &= (\rho^2+\sigma\bar{\sigma})+\rho(\epsilon+\bar{\epsilon})
     -\bar{\varkappa}\tau-\varkappa(3\alpha+\bar{\beta}-\pi)+\Phi_{00} \\
  \label{eq:D-sigma}
  D\sigma-\m\varkappa
    &= \sigma(\rho+\bar{\rho}+3\epsilon-\bar{\epsilon})
     -\varkappa(\tau-\bar{\pi}+\bar{\alpha}+3\beta)+\Psi_0 \\
  \label{eq:D-alpha}
  D\alpha-\mb\epsilon
    &=\alpha(\rho+\bar{\epsilon}-2\epsilon)+\beta\bar{\sigma}
     -\bar{\beta}\epsilon-\varkappa\lambda-\bar{\varkappa}\gamma
     +\pi(\epsilon+\rho)+\Phi_{10} \\
  \label{eq:D-beta}
  D\beta-\m\epsilon
    &= \sigma(\alpha+\pi)+\beta(\bar{\rho}-\bar{\epsilon})
     -\varkappa(\mu+\gamma)
     -\epsilon(\bar{\alpha}-\bar{\pi})+\Psi_1 \\
  \label{eq:D-lambda}
  D\lambda-\mb\pi
    &=(\rho\lambda+\bar{\sigma}\mu)+\pi(\pi+\alpha-\bar{\beta})
     -\nu\bar{\varkappa}-\lambda(3\epsilon-\bar{\epsilon})+\Phi_{20} \\
  \label{eq:D-mu}
  D\mu-\m\pi
    &=(\bar{\rho}\mu+\sigma\lambda)
     +\pi(\bar{\pi}-\bar{\alpha}+\beta)
     -\mu(\epsilon+\bar{\epsilon})-\nu\varkappa+\Psi_2+\fracs{1}{12}\Ricn{4} \\
  \label{eq:n-epsilon}
    D\gamma-\n\epsilon
      &=\alpha(\tau+\bar{\pi})+\beta(\bar{\tau}+\pi)
       -\gamma(\epsilon+\bar{\epsilon})
       -\epsilon(\gamma+\bar{\gamma}) 
      +\tau\pi-\nu\varkappa+\Psi_2+\Phi_{11}-\fracs{1}{24}\Ricn{4} \\
  \label{eq:n-kappa}
  D\tau-\n\varkappa
    &=\rho(\tau+\bar{\pi})+\sigma(\bar{\tau}+\pi)
     +\tau(\epsilon-\bar{\epsilon})-\varkappa(3\gamma+\bar{\gamma})
     +\Psi_1+\Phi_{01} \\
  \label{eq:n-pi}
  D\nu-\n\pi
    &=\mu(\pi+\bar{\tau})+\lambda(\bar{\pi}+\tau)
     +\pi(\gamma-\bar{\gamma})
     -\nu(3\epsilon+\bar{\epsilon})+\Psi_3+\Phi_{21} \\
  \label{eq:n-lambda}
  \n\lambda-\mb\nu
    &=-\lambda(\mu+\bar{\mu}+3\gamma-\bar{\gamma})
     +\nu(3\alpha+\bar{\beta}+\pi-\bar{\tau})-\Psi_4 \\
  \label{eq:n-mu}
  \m\nu-\n\mu
    &=(\mu^2+\lambda\bar{\lambda})+\mu(\gamma+\bar{\gamma})
     -\bar{\nu}\pi
     +\nu(\tau-3\beta-\bar{\alpha})+\Phi_{22} \\
  \label{eq:n-beta}
  \m\gamma-\n\beta
    &= \gamma(\tau-\bar{\alpha}-\beta)+\mu\tau-\sigma\nu
     -\epsilon\bar{\nu}-\beta(\gamma-\bar{\gamma}-\mu)
     +\alpha\bar{\lambda}+\Phi_{12} \\
  \label{eq:n-sigma}
  \m\tau-\n\sigma
    &= (\mu\sigma+\bar{\lambda}\rho)+\tau(\tau+\beta-\bar{\alpha})
     -\sigma(3\gamma-\bar{\gamma})-\varkappa\bar{\nu}+\Phi_{02} \\
  \label{eq:n-rho}
  \n\rho-\mb\tau
    &= -(\rho\bar{\mu}+\sigma\lambda)
       +\tau(\bar{\beta}-\alpha-\bar{\tau})
       +\rho(\gamma+\bar{\gamma})+\nu\varkappa-\Psi_2-\fracs{1}{12}\Ricn{4} \\
  \label{eq:n-alpha}
  \n\alpha-\mb\gamma
    &= \nu(\rho+\epsilon)-\lambda(\tau+\beta)+\alpha(\bar{\gamma}
       -\bar{\mu}) +\gamma(\bar{\beta}-\bar{\tau})-\Psi_3
\end{align}\end{subequations}
where the differential operators $\m,D,\n$ correspond to the
null vectors:
\begin{subequations}\label{eq:NP-op}\begin{align}
  \m &:= m^{\mu}\partial_{\mu}  &
  \n &:= n^{\mu}\partial_{\mu}  &
  D   &:= \ell^{\mu}\partial_{\mu}  \tag{\ref{eq:NP-op}}
\end{align}\end{subequations}
The equalities \eqref{eq:NP-eq} are called the Newman-Penrose equations.

The commutators of the operators \eqref{eq:NP-op} can be expressed by the
spin coefficients;
\begin{subequations}\label{eq:NP-comm}\begin{align}
  [\n D - D \n]
    &= (\gamma+\bar{\gamma})D + (\epsilon+\bar{\epsilon})\n
      -(\tau+\bar{\pi})\mb - (\bar{\tau}+\pi)\m \\
  [\m D - D \m]
    &= (\bar{\alpha}+\beta-\bar{\pi})D + \varkappa\n
      -\sigma\mb - (\bar{\rho}+\epsilon-\bar{\epsilon})\m \\
  [\m\n - \n\m]
    &= -\bar{\nu}D + (\tau-\bar{\alpha}-\beta)\n
       +\bar{\lambda}\mb + (\mu-\gamma+\bar{\gamma})\m \\
  [\mb\m-\m\mb]
    &= (\bar{\mu}-\mu)D + (\bar{\rho}-\rho)\n
       -(\bar{\alpha}-\beta)\mb - (\bar{\beta}-\alpha)\m
\end{align}\end{subequations}

The Bianchi identity written in terms of the coefficients defined by
(\ref{eq:NP-con},\ref{eq:NP-WR}) consists the system of complex PDEs:
\begin{subequations}\begin{align}
  \begin{split}
    0 &= -\mb\Psi_0+D\Psi_1+(4\alpha-\pi)\Psi_0
         -2(2\rho+\epsilon)\Psi_1+3\varkappa\Psi_2 
      - D\Phi_{01}+\m\Phi_{00}\\
      &+ 2(\epsilon+\bar{\rho})\Phi_{01}
        +2\sigma\Phi_{10}-2\varkappa\Phi_{11} 
      - \bar{\varkappa}\Phi_{02}
        +(\bar{\pi}-2\bar{\alpha}-2\beta)\Phi_{00}
  \end{split} \label{eq:BI-DPsi1}\\
  \begin{split}
    0 &= \mb\Psi_1-D\Psi_2-\lambda\Psi_0+2(\pi-\alpha)\Psi_1
         +3\rho\Psi_2-2\varkappa\Psi_3 
      + \mb\Phi_{01}-\n\Phi_{00} \\
      &-2(\alpha+\bar{\tau})\Phi_{01}+2\rho\Phi_{11} 
      + \bar{\rho}\Phi_{02}-(\bar{\mu}
         -2\gamma-2\bar{\gamma})\Phi_{00}-2\tau\Phi_{10}-\fracs{1}{12}D\Ricn{4}
  \end{split} \label{eq:BI-DPsi2}\\
  \begin{split}
    0 &= -\mb\Psi_2+D\Psi_3+2\lambda\Psi_1-3\pi\Psi_2
         +2(\epsilon-\rho)\Psi_3+\varkappa\Psi_4 
      - D\Phi_{21}+\m\Phi_{20}  \\
      &+ 2(\bar{\rho}-\epsilon)\Phi_{21}
       -2\mu\Phi_{10}+2\pi\Phi_{11} 
      - \bar{\varkappa}\Phi_{22}-(2\bar{\alpha}-2\beta
         -\bar{\pi})\Phi_{20}-\fracs{1}{12}\mb\Ricn{4}
  \end{split} \label{eq:BI-DPsi3}\\
  \begin{split}
    0 &= \mb\Psi_3-D\Psi_4-3\lambda\Psi_2+2(2\pi+\alpha)\Psi_3
         -(4\epsilon-\rho)\Psi_4 - \n\Phi_{20}+\mb\Phi_{21} \\
      &+ 2(\alpha-\bar{\tau})\Phi_{21}+2\nu\Phi_{10} 
      + \bar{\sigma}\Phi_{22}-2\lambda\Phi_{11}
         -(\bar{\mu}+2\gamma-2\bar{\gamma})\Phi_{20}
  \end{split} \label{eq:B-DPsi4}\\
  \label{eq:nPsi0}
  \begin{split}
    0 &= -\n\Psi_0+\m\Psi_1+(4\gamma-\mu)\Psi_0-2(2\tau+\beta)\Psi_1
         +3\sigma\Psi_2 - D\Phi_{02}+\m\Phi_{01} \\ 
      &+ 2(\bar{\pi}-\beta)\Phi_{01}
         -2\varkappa\Phi_{12}-\bar{\lambda}\Phi_{00} 
      + 2\sigma\Phi_{11}
         +(\bar{\rho}+2\epsilon-2\bar{\epsilon})\Phi_{02}
  \end{split} \\
  \label{eq:nPsi1}
  \begin{split}
    0 &= -\n\Psi_1+\m\Psi_2+\nu\Psi_0+2(\gamma-\mu)\Psi_1-3\tau\Psi_2
         +2\sigma\Psi_3 + \n\Phi_{01}-\mb\Phi_{02} \\
      &+ 2(\bar{\mu}-\gamma)\Phi_{01}-2\rho\Phi_{12} 
      - \bar{\nu}\Phi_{00}+2\tau\Phi_{11}
         +(\bar{\tau}-2\bar{\beta}+2\alpha)\Phi_{02}+\fracs{1}{12}\m\Ricn{4}
  \end{split} \\
  \label{eq:nPsi2}
  \begin{split}
    0 &= -\n\Psi_2+\m\Psi_3+2\nu\Psi_1-3\mu\Psi_2+2(\beta-\tau)\Psi_3
         +\sigma\Psi_4 - D\Phi_{22}+\m\Phi_{21} \\
      &+ 2(\bar{\pi}+\beta)\Phi_{21}
         -2\mu\Phi_{11}-\bar{\lambda}\Phi_{20} 
      + 2\pi\Phi_{12}
         +(\bar{\rho}-2\epsilon-2\bar{\epsilon})\Phi_{22}
         -\fracs{1}{12}\n\Ricn{4}
  \end{split} \\
  \label{eq:nPsi3}
  \begin{split}
    0 &= -\n\Psi_3+\m\Psi_4+3\nu\Psi_2-2(\gamma+2\mu)\Psi_3
         -(\tau-4\beta)\Psi_4 + \n\Phi_{21}-\mb\Phi_{22} \\
      &+2(\bar{\mu}+\gamma)\Phi_{21}-2\nu\Phi_{11} 
      - \bar{\nu}\Phi_{20}+2\lambda\Phi_{12}
         +(\bar{\tau}-2\alpha-2\bar{\beta})\Phi_{22}
  \end{split} \\
  \begin{split}
    0 &= -D\Phi_{11} + \m\Phi_{10} + \mb\Phi_{01} - \n\Phi_{00}
         -\fracs{1}{8}D\Ricn{4} + (2\gamma-\mu+2\bar{\gamma}-\bar{\mu})\Phi_{00} \\
      &+ (\pi-2\alpha-2\bar{\tau})\Phi_{01}
         +(\bar{\pi}-2\bar{\alpha}-2\tau)\Phi_{10}
         +2(\rho+\bar{\rho})\Phi_{11} 
      + \bar{\sigma}\Phi_{02} + \sigma\Phi_{20}
         -\bar{\varkappa}\Phi_{12} - \varkappa\Phi_{21}
  \end{split} \\
  \begin{split}
    0 &= -D\Phi_{12} + \m\Phi_{11} + \mb\Phi_{02} - \n\Phi_{01}
         -\fracs{1}{8}\m\Ricn{4}
         +(-2\alpha+2\bar{\beta}+\pi-\bar{\tau})\Phi_{02} \\
      &+ (\bar{\rho}+2\rho-2\bar{\epsilon})\Phi_{12}
         +2(\bar{\pi}-\tau)\Phi_{11}
         +(2\gamma-2\bar{\mu}-\mu)\Phi_{01} 
      + \bar{\nu}\Phi_{00} - \bar{\lambda}\Phi_{21}-\varkappa\Phi_{22}
  \end{split}\\
  \label{eq:DPhi22}
  \begin{split}
     0 &= -D\Phi_{22} + \m\Phi_{21} + \mb\Phi_{12} - \n\Phi_{11}
          -\fracs{1}{8}\n\Ricn{4}
          +(\rho+\bar{\rho}-2\epsilon-2\bar{\epsilon})\Phi_{22} \\
       &+ (2\bar{\beta}+2\pi-\bar{\tau})\Phi_{12}
          +(2\beta+2\bar{\pi}-\tau)\Phi_{21} - 2(\mu+\bar{\mu})\Phi_{11} 
       + \nu\Phi_{01} + \bar{\nu}\Phi_{10} - \bar{\lambda}\Phi_{20}
          -\lambda\Phi_{02}
  \end{split}
\end{align}\end{subequations}

\subsection{The Einstein-Maxwell field equations}
\label{app:NP-EM}

Given an electromagnetic field 2-form $F_{\mu\nu}$ we define the following
complex coefficients
\begin{subequations}\label{eq:NP-F}\begin{align}
  \Phi_0 &:= F_{41} & \Phi_1 &:= \fracs{1}{2}(F_{43}+F_{21})
  & \Phi_2 := F_{23}  \tag{\ref{eq:NP-F}}
\end{align}\end{subequations}
which completely determine it.

The Maxwell field equations expressed in terms of the field
and spin coefficients take the form:
\begin{subequations}\label{eq:Maxwell-NP}\begin{align}
  \label{eq:D-Phi1}
  D\Phi_1 - \mb\Phi_0
    &= (\pi-2\alpha)\Phi_0 + 2\rho\Phi_1 -\varkappa\Phi_2 \\
  \label{eq:D-Phi2}
  D\Phi_2 - \mb\Phi_1
    &= -\lambda\Phi_0 + 2\pi\Phi_1 + (\rho-2\epsilon)\Phi_2 \\
  \label{eq:n-Phi0}
  \m\Phi_1 - \n\Phi_0
    &= (\mu-2\gamma)\Phi_0 + 2\tau\Phi_1 - \sigma\Phi_2 \\
  \label{eq:n-Phi1}
  \m\Phi_2 - \n\Phi_1
    &= -\nu\Phi_0 + 2\mu\Phi_1 + (\tau-2\beta)\Phi_2
\end{align}\end{subequations}

In an electrovac spacetime with vanishing cosmological constant the Ricci tensor is 
related with the field energy-momentum via the Einstein field equations of the 
following form:
\begin{equation}\label{eq:NP-F-T}
  \Phi_{\tilde{\alpha}\tilde{\beta}}
  = 16\pi G \Phi_{\tilde{\alpha}}\bar{\Phi}_{\tilde{\beta}}
  \qquad
  \tilde{\alpha},\tilde{\beta} \in \{ 0,1,2 \}
\end{equation}

\section{$4$-dimensional neighbourhood of the horizon}
\label{app:4d-neigh}

In this section we present the detailed proof of the Corollary
\ref{cor:data-gen} of section \ref{sec:frame-exp}.

\subsection{Proof of corollary \ref{cor:data-gen}}
  \label{app:frame-exp}

The proof is based on an explicit construction of the algorithm which allows to 
calculate the derivatives $\partial^ne^{\mu}$ of the frame components $m_A, Z_A, H$, 
provided the results for all the lower orders are given. This algorithm allows thus 
to establish the conclusion via induction.  

In general the higher radial derivatives of the frame components at the horizon can 
be obtained by differentiation (of appropriate order) over $r$ of the identity 
\eqref{eq:DPsi4} and transversal parts (i.e. the contraction of the considered 
equations with $n$) of (\ref{eq:NP-con-def}b, \ref{eq:NP-Riem}).
In particular the 0th order is given directly by \eqref{eq:fc-H}, whereas the first 
radial derivatives of ${e^\mu}_\nu$ are determined via \eqref{eq:N-fc-ih} by $(q,D)$. 
To demonstrate the method for $n>1$ we start with an explicit calculation of the $2$nd 
order before presenting the general derivation of $n+1$st order.

The presented method is applicable to any kind of matter field, however here we 
assume that for each order of the expansion the required Ricci tensor components 
and their radial derivatives are given on $\ih$. That assumption is true for example 
in the Maxwell and/or scalar /and/or dilaton case where the necessary Ricci tensor 
components are determined via the matter field equations by the respective data 
defined on the initial slice $\slc$. The explicit expansion in the Einstein-Maxwell
case was provided in section~\ref{sec:evac-exp}.

\subsubsection{The $2$nd order}

Assume now that the geometry $(q,D)$ and results of the first order evaluation are 
at our disposal. Then, acting with $\partial_r$ on \eqref{eq:N-fc-gen} one can derive 
$H_{,rr}, X_{,rr}, {Z}_A{}_{,rr},{m}_A{}_{,rr}$ in terms of the first radial 
derivatives of the connection coefficients, which in turn are given by the equations 
(\ref{eq:n-epsilon}, \ref{eq:n-mu}, \ref{eq:n-lambda}, \ref{eq:n-pi}).
The resulting formula for the second radial derivatives of the frame coefficients 
on $\M'$ reads
\begin{subequations}\label{eq:fc-r2}\begin{align}
  H_{,rr}\ &=\ \bar{\Psi}_2 + \Psi_2 + 2\Phi_{11}
    - \fracs{1}{12}\Ricn{4} + (\Psi_3+\Phi_{21})X
    + (\bar{\Psi_3}+\Phi_{12})\bar{X} \label{ddH}\\
  X_{,rr}\ &=\ -\bar{\Psi}_3 - \Phi_{12} - \Phi_{22}X
    - \Psi_4\bar{X} \label{ddX}\\
  Z_A{}_{,rr}\ &=\ (\Psi_3+\Phi_{21})m_A
    + (\bar{\Psi}_3+\Phi_{12})\bar{m}_A \label{ddZ}\\
  m_A{}_{,rr}\ &=\ - \Phi_{22}m_A - \bar{\Psi}_4\bar{m}_A \ . \label{ddm}
\end{align}\end{subequations}
The values of these derivatives at the horizon are given by substituting the frame
coefficients with their values on $\ih$. In particular:
\begin{subequations}\begin{align}
  H_{,rr}|_{\ih}\ &=\ \bar{\Psi}_2 + \Psi_2 + 2\Phi_{11}
    - \fracs{1}{12}\Ricn{4}\\
  X_{,rr}|_{\in}\ &=\ -\bar{\Psi}_3 - \Phi_{12},
\end{align}\end{subequations}
Note that the derivatives $H_{,rr}, X_{,rr}, Z_{A,rr}$ on $\ih$ are
determined directly by $(q,D)$ and the Ricci tensor. The last
derivative, ${m}_{A,rr}$, involves the solution $\Psi_4$ to the
equation 
\begin{equation}\label{eq:DPsi4-1}
  D\Psi_4|_{\ih}\ =\ -2\sgr{\bsl}\Psi_4 + \mb \Psi_3
    + (5\pi + 2a) \Psi_3 - 3\lambda \Psi_2 
   - \bar{\mu}\Phi_{20} + 2\alpha\Phi_{21} - 2\lambda\Phi_{11}
    - \Phi_{20}{}_{,r}
\end{equation}
(which is the restriction to $\ih$ of the equation \eqref{eq:DPsi4}).
The value of this solution is uniquely determined by the value of
$\Psi_4$ on chosen section and the horizon geometry.

To summarize, by direct inspection of the system of equations used here we see, 
that the data which is not determined, thus must be specified, consists of the 
following components: 
\begin{enumerate}[ (i)]
  \litem $\Phi_{21}, \Phi_{22}, \Ricn{4}, \Phi_{20,r}$ given on the
    entire $\ih$, and
  \litem $\Psi_4$ given on an initial slice $\slc$.
\end{enumerate}

\subsubsection{The $n+1$th order}

In this step we assume that at our disposal are the results of the derivation
up to $n$th order, that is:
\begin{enumerate}[ (i)]
  \litem the components of the frame and their radial derivatives up
    to the $n$th order,
  \litem the components of the connection and their radial derivatives
    up to the $n-1$st order,
  \litem the components of the Ricci tensor, and their derivatives up
    to the $n-2$nd order, as well as the following higher derivatives
    $\nr^{n-1} \Phi_{00}, \nr^{n-1} \Phi_{01}, \nr^{n-1} \Phi_{20},
    \nr^{n-1} (\Phi_{11} - \fracs{1}{8}\Ricn{4})$,
  \litem all the components of the Weyl tensor and their derivatives
    up to the $n-2$nd order.
\end{enumerate}

The following higher derivatives: $\nr^{n-1}\Psi_0, \nr^{n-1}\Psi_1,
\nr^{n-1}(\Psi_2 + \frac{1}{12}\Ricn{4}), \nr^{n-1}(\Psi_3-\Phi_{21})$
are then determined by the Bianchi identities (\ref{eq:nPsi0}-\ref{eq:nPsi3}).
Furthermore, the $n+1$th radial derivatives of the frame components are given by 
differentiating the equations \eqref{eq:fc-r2}, whereas the $n$th derivatives of 
the connection coefficients are given by differentiating the equations 
(\ref{eq:n-kappa}-\ref{eq:n-alpha}).
These data are also sufficient to derive the $n$th derivatives of
$\Psi_0,\ldots,\Psi_3, \Phi_{00}, \Phi_{10}, \Phi_{20}$ via the equations 
obtained by differentiating (\ref{eq:E-Psi1}-\ref{eq:E-Psi3},
\ref{eq:D-rho}-\ref{eq:D-mu})\footnote{provided the rest of the Ricci
  components appearing in the equations is given
} sufficiently many times.

The remaining Weyl tensor component $\nr^{n-1}\Psi_4$ is given by the equation 
derived by the differentiation of the Bianchi identity \eqref{eq:B-DPsi4}. The 
equation has the following form
\begin{equation}
  D\,\nr^{n-1}\Psi_4 {}\ =\ - (n+1)\sgr{\bsl} \nr^{n-1} \Psi_4 \
   +\ {\cal P}_n(e, \Gamma^{(n-1)},\Psi^{(n-1)}, \nr^n\Phi_{20},
                 \Phi^{(n-1)}, \nr^n \Ricn{4}) \ ,
\end{equation}
where $e$ represents the components of the frame, whereas $\Gamma^{(n-1)}$, $\Phi^{(n-1)}$ 
and $\Psi^{(n-1)}$ stands for the components and their derivatives up to the order 
of $n-1$ of, respectively, the connection, traceless part of the Ricci tensor, and
all of the Weyl tensor except $\Psi_4$.

In summary, given the results up to $n$th order we need to specify the
following data:
\begin{enumerate}[ (i)]
  \litem $\nr^{n-1}\Phi_{11}$, $\nr^{n-1}\Phi_{21}$, $\nr^{n-1}\Phi_{22}$, 
    $\nr^{n-1}\Ricn{4}$, $\nr^{n-1}\Phi_{02}$ given on the entire $\ih$, and
  \litem $\nr^{n-1}\Psi_4$ given on the initial slice $\slc$.
\end{enumerate}
to determine the $n+1$st order of the expansion.

\section{4-dimensional electrovac NEH}
  \label{app:evac-spc}

This appendix contains the derivation of technical results used in
section \ref{sec:evac-spc}: the proof of Corollary \ref{cor:data-evac}
and the detailed description of the derivation of Friedrich reduced
data at the transversal surface $\N$ used in section \ref{sec:ivp}.

\subsection{Proof of corollary \ref{cor:data-evac}}
  \label{app:evac-exp}

The structure of the proof is analogous to the one presented in
section \ref{app:frame-exp}, that is we again construct an algorithm
of derivation of $n+1st$ order expansion for given all lower orders
and use the mathematical induction. The only difference is that now
part of previously undetermined data can be now determined by Maxwell
fied equations. The modifications to the proof of section
\ref{app:frame-exp} go as follows:
\begin{itemize}
  \item 
    The Ricci tensor components (so the Maxwell field tensor) doesn't
    contribute to the $0$th and $1$st order of expansion, so for these
    orders we can directly apply analogous part of the proof in section
    \ref{app:frame-exp}. 
  \item 
    In the 2nd order the components $\Phi_{11},\Phi_{12},\Phi_{22}$ are
    determined by $\Phi_1,\Phi_2$ given on the distinguished initial slice
    $\slc$. The value of $\nr\Phi_{02}$ (needed to derive $\Psi_4$ via
    \eqref{eq:DPsi4-1}) at the horizon is determined by \eqref{eq:n-Phi0}
    \begin{equation}
      \nr\Phi_{20}|_{\ih}\ =\ -\ 8\pi G\Phi_2\mb\bar{\Phi}_1 \ .
    \end{equation}
  \item 
    Finally, given the frame and the Maxwell field expanded to an $n$th
    order and $\partial^{n+1}_r\Phi_2|_{\slc}$ (where $\slc$ is a slice from 
    the previous point) the derivative $\partial^{n+1}_r\Phi_2$ on $\ih$ is
    the solution to the equation obtained by action of $\partial^{n+1}_r$
    on \eqref{eq:D-Phi2}. This completes the set of the data needed for
    calculation of the $n+1$ order of the expansion.
\end{itemize}

\subsection{Characteristic IVP: data derivation on $\N$}
  \label{app:ivp}

Let $\N$ be a transversal null surface defined as in section
\ref{sec:Bondi} and intersecting a non-expanding horizon $\ih$ at the
slice $\slc$. The part of Friedrich reduced data in characteristic
initial value problem which corresponds to $\N$ consists of the
following components: Newman-Penrose frame and connection
coefficients, Weyl tensor components $\Psi_1,\cdots,\Psi_4$ and
Maxwell field tensor components $\Phi_1,\Phi_2$. We show here that 
once we specify $\Psi_4$ and $\Phi_2$ on $\N$ all the remaining data 
are determined by the data at $\slc$. 

Indeed, provided the coefficients $(\Psi_4,\Phi_2)$ are known,
the evolution equations (\ref{eq:n-mA-gen}, \ref{eq:n-mZ-gen}), the
Newman-Penrose equations (combined with the appropriate Einstein field
eqs) (\ref{eq:n-mu}, \ref{eq:n-lambda}, \ref{eq:n-pi}, \ref{eq:n-alpha},
\ref{eq:n-beta}), the Maxwell equation \eqref{eq:n-Phi1} and the
Bianchi identity \eqref{eq:nPsi3} form on $\N$ the following ODE system
\begin{subequations}\label{eq:ODE1}\begin{align}
  \nr m^A\ &=\ \bar{\lambda}\bar{m}^A + \mu m^A \\
  \nr (m^A Z_A)\ &=\ \pi + \mu(m^A Z_A) + \bar{\lambda}(\bar{m}^AZ_A)\\
  \label{eq:ODE-n-mu}
  \nr\mu \ &= \
    (\mu^2+\lambda\bar{\lambda}) + \kappa_0\Phi_2\bar{\Phi}_2 \\
  \label{eq:ODE-n-lambda}
  \nr\lambda\ &=\
    2\mu\lambda + \Psi_4 \\
  \label{eq:ODE-n-pi}
  \nr\pi \ &=\
   \pi\mu + \bar{\pi}\lambda + (\Psi_3+\kappa_0\Phi_2\bar{\Phi}_1) \\
  \label{eq:ODE-n-varpi}
  \nr\varpi\ &=\
    \mu\varpi - \lambda\bar{\varpi} + (\Psi_3-\kappa_0\Phi_2\bar{\Phi}_1) \\
  \label{eq:ODE-n-Phi1}
  -\nr\Phi_1\ &=\ \m\Phi_2
    - 2\mu\Phi_1 + (\bar{\pi}-\bar{\varpi})\Phi_2 \\
  \begin{split}\label{eq:ODE-n-Psi3}
    -\nr(\Psi_3-\kappa_0\Phi_2\bar{\Phi}_1)\
      &=\ \m\Psi_4 - \kappa_0\mb\Phi_{2}\bar{\Phi}_{2}
         - 4\mu(\Psi_3-\kappa_0\Phi_2\bar{\Phi}_1)  \\
      &\ + 2(\bar{\pi}-\bar{\varpi})\Psi_4
         - 2\kappa_0\left(\pi\Phi_{22} + \mu\Phi_{21}
                          - \lambda\Phi_{12}\right)
  \end{split}
\end{align}\end{subequations}
for the coefficients $(m^A,m^AZ_A,\mu,\lambda,\pi,a,\phi_1,\Psi_3)$.\footnote{Note, 
  that the functions $(m^A,m^AZ_A)$ appear in the system as coefficients of $\m,\mb$.
} This system has a unique solution for given initial data on $\slc$ (which in turn 
is already determined by the horizon geometry, see previous section).

Known solution of \eqref{eq:ODE1} can be next applied to the system
formed by (\ref{eq:n-rho}, \ref{eq:n-sigma}, \ref{eq:n-Phi0},
\ref{eq:nPsi2}), (where the Bianchi identity \eqref{eq:DPhi22} was
used to determine the value of $D\Phi_{22}$ in \eqref{eq:nPsi2})
\begin{subequations}\label{eq:ODE2}\begin{align}
  \label{eq:ODE-n-rho}
  \nr\rho \ &=\
    (\rho\mu+\sigma\lambda) + (\Psi_2-\fracs{1}{12}\Lambda)\\
  \label{eq:ODE-n-sigma}
  \nr\sigma\ &=\
    (\mu\sigma+\bar{\lambda}\rho) + \Phi_{02}\\
  \label{eq:ODE-n-Phi0}
  \nr\Phi_0\ &=\ \m\Phi_1 - \mu\Phi_0 + \sigma\Phi_2 \\
  \begin{split}\label{eq:ODE-n-Psi2}
    -\nr(\Psi_2+\kappa_0\Phi_1\bar{\Phi}_1)\
      &=\ \m\Psi_3 - \kappa_0\mb\Phi_1\bar{\Phi}_2
         + (\bar{\pi}-\bar{\varpi}) + \sigma\Psi_4
         - 3\mu(\Psi_2+\kappa_0\Phi_1\bar{\Phi}_1) \\
      &\ - \kappa_0\left( \rho\Phi_2\bar{\Phi}_2
           + (\pi-\varpi)\Phi_1\bar{\Phi}_2 - 5\mu\Phi_1\bar{\Phi}_1
           - \lambda\Phi_0\bar{\Phi}_2 \right)
  \end{split}
\end{align}\end{subequations}
which is then the system of ODEs for $(\rho,\sigma,\Phi_0,\Psi_2)$. The solution 
to it (also unique for a given initial data on $\slc$) determines furthermore the 
value of $\epsilon$ via \eqref{eq:n-epsilon}
\begin{equation}\label{eq:ODE3}
  \nr\epsilon \ = \
    \pi\bar{\pi} + \fracs{1}{2}(\bar{\pi}\varpi-\pi\bar{\varpi})
    + (\Psi_2 + \kappa_0\Phi_1\bar{\Phi}_1 + \fracs{1}{12}\Lambda) \
\end{equation}
so does determine the pair $(X,H)$ via (\ref{eq:n-X-gen},\ref{eq:n-H-gen}). 

The last remaining part of the reduced data: $\Psi_1$ is determined via 
eq.~\eqref{eq:nPsi1} 
\begin{equation}\label{eq:ODE-n-Psi1}
  \partial_r \Psi_1\ =\ \delta\Psi_2 - 2\mu\Psi_1 + 2\sigma\Psi_3 
  +\kappa_0(\partial_r\bar{\Phi}_0\Phi_1-\mb\bar{\Phi}_0\Phi_2)
  +2\kappa_0(\bar{\mu}\bar{\Phi}_0\Phi_1-\rho\bar{\Phi}_1\Phi_2)  \ ,
\end{equation}
where the radial derivatives of $\Phi_0, \Phi_1$ are determined via 
eqs.~(\ref{eq:ODE-n-Phi1},~\ref{eq:ODE-n-Phi0}).

The Weyl tensor $\Psi_0$ component is not a part of Friedrich reduced data, however 
we also describe its evolution since it (as well as the evolution of $D\Phi_0$) is 
needed in section \ref{sec:evac-kvf}.
The analysis of it requires a little more effort than the other components, as we 
do not have direct analog of ``radial evolution'' equations~\eqref{eq:ODE1} for 
this component. We overcome this problem applying the Bianchi identity \eqref{eq:nPsi0} 
which implies, that the transversal derivative of $\Psi_0$ is equal to:
\begin{equation}\label{eq:n-Psi0-N}\begin{split}
  -\nr\Psi_0 - \m\Psi_1
    + \kappa_0(D\Phi_{0}\bar{\Phi}_2-\m\Phi_{0}\bar{\Phi}_1)\
   &=\ - \mu\ \Psi_0 - (\bar{\pi}-\bar{\varpi})\ \Psi_1 + 3\sigma\ \Psi_2 
    + \kappa_0 (2(\epsilon-\bar{\epsilon})+\bar{\rho})\Phi_0\bar{\Phi}_2 \\
   &\hphantom{\ =} + \kappa_0( (\bar{\pi}-2\bar{a})\Phi_0\bar{\Phi}_1 )
    + 2\sigma\Phi_1\bar{\Phi}_1 - 2\varkappa\Phi_1\bar{\Phi}_2
    - \bar{\lambda}\Phi_0\bar{\Phi}_0 ) \ ,
\end{split}\end{equation}
where the component $D\Phi_2$ is determined by the Maxwell equation
\eqref{eq:D-Phi2}.

To compute the value of $D\Phi_0$ one can apply the equations
involving the derivatives of the connection components along
$\ell$. Acting by operator $D$ on the equation \eqref{eq:n-Phi0} and
substituting the values $D\Phi_2,D\Phi_1,D\mu,D\sigma,D\pi$ via the
equations (\ref{eq:D-Phi2}, \ref{eq:D-Phi1}, \ref{eq:D-mu},
\eqref{eq:D-sigma}) and the combination of (\ref{eq:D-alpha},
\ref{eq:D-beta}) (to extract $D\pi$) one gets the following equation
\begin{equation}\label{eq:nD-Phi0}
  -\nr D\Phi_0\ =\ -\mu D\Phi_0 + \Phi_2\Psi_0 + \mathcal{F} \ ,
\end{equation}
where $\mathcal{F}$ is a functional of the Friedrich reduced data (frame,
connection, field and Riemann except $\Psi_0$) and their
derivatives along the directions tangent to $\N$ up to $2$nd
order. Both the equations (\ref{eq:n-Psi0-N}, \ref{eq:nD-Phi0}) form
the system of ODEs which (similarly to the systems considered above) has a
unique solution for given initial data $(\Psi_0, D\Phi_0)|_{\slc}$.\footnote{As
  $\ih$ is a NEH both the components vanish on it.
}

\section{$4$-dimensional electrovac Killing horizon}\label{app:4d-kill}

Here we present the original form of Racz theorem of \cite{Racz1,*Racz2} and 
the proofs of Lemma \ref{lem:k'k} and \ref{lem:kvf-zeta} necessary 
to prove, that the necessary condition for the vector field $\zeta$ to be a Killing 
field (null at the horizon) are also the sufficient one.

\subsection{Racz theorem}\label{app:Racz}

Consider the field $K'_{\mu}$ defined a the solution to the following initial value problem
\begin{equation}\label{eq:KVF-cand}
  \nabla^{\mu}\nabla_{\mu}K'_{\alpha}
    + \Ricn{4}_{\alpha}{}^{\mu}K'_{\mu} = 0 \qquad
  K'_{\mu}|_{\ih\cup\N} = K'_{0\mu} \ .
\end{equation}

\begin{theorem}\cite{Racz1,*Racz2}\label{thm:Racz} 
  Suppose $(M,g_{\mu\nu})$ is a spacetime equipped with a metric tensor 
  and admitting matter fields represented by the set of tensor fields 
  $\mathcal{T}_{(I)\alpha_1,\cdots, \alpha_k}$ satisfying the a quasi-linear 
  hyperbolic system
  \begin{equation}\label{eq:Racz-field}
    \nabla^{\mu}\nabla_{\mu}\mathcal{T}_{(I)}\
    =\ \mathcal{F}_{(I)}(\mathcal{T}_{(J)},\nabla_{\nu}\mathcal{T}_{(J)},
	g_{\alpha\beta})
  \end{equation}
  and such that the energy-momentum tensor is a smooth function of the
  fields, their covariant derivatives and the metric, thus
  \begin{equation}\label{eq:Racz-energy}
    \Ricn{4}_{\mu\nu}\
    =\ \Ricn{4}_{\mu\nu}(\mathcal{T}_{(J)},\nabla_{\nu}\mathcal{T}_{(J)},
	g_{\alpha\beta}) \ .
  \end{equation}
  Then on the domain of dependence of some initial hypersurface $\Sigma$ (within 
  an appropriate initial value problem) there exists a non-trivial Killing vector 
  field $K'{}^{\mu}$ (such that $\lie_{K'}\mathcal{T}_{(I)}=0$) if and only if 
  there exists a non-trivial initial data set $K'{}^{\mu}|_{\Sigma}$ for
  (\ref{eq:KVF-cand}a) which satisfies
  \begin{equation}\label{eq:KVF-condit}
    0\ =\ \lie_{K'}g_{\mu\nu}|_{\Sigma}\ =\
    \nabla_{\alpha}\lie_{K'}g_{\mu\nu}|_{\Sigma}\ =\
    \lie_{K'}\mathcal{T}_{(I)}|_{\Sigma}  \ .
  \end{equation}
\end{theorem}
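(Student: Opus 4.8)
This is the theorem of Racz \cite{Racz1,*Racz2}, so it may simply be invoked; I indicate the line of argument one would follow to reprove it. The ``only if'' direction is immediate: a Killing field $K'^\mu$ with $\lie_{K'}\mathcal{T}_{(I)}=0$ has $\lie_{K'}g_{\mu\nu}\equiv0$, so (\ref{eq:KVF-condit}) holds on any $\Sigma$, and it satisfies the standard identity $\nabla^{\gamma}\nabla_{\gamma}K'_{\mu}=-\Ricn{4}_{\mu}{}^{\nu}K'_{\nu}$, which is precisely (\ref{eq:KVF-cand}a); hence $K'|_{\Sigma}$ is an admissible, non-trivial data set. All the content is in the converse.

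For the converse the plan is the following. First I would solve the wave problem (\ref{eq:KVF-cand}) for $K'^\mu$ on the domain of dependence of $\Sigma$ and set $h_{\mu\nu}:=\lie_{K'}g_{\mu\nu}=\nabla_{\mu}K'_{\nu}+\nabla_{\nu}K'_{\mu}$, $h:=g^{\mu\nu}h_{\mu\nu}$; the goal is then to prove $h_{\mu\nu}\equiv0$ --- so that $K'$ is Killing --- and $\lie_{K'}\mathcal{T}_{(I)}\equiv0$, by a uniqueness argument. The key observation is that the equation (\ref{eq:KVF-cand}a) obeyed by $K'$ forces $h$ into de Donder gauge relative to itself: commuting covariant derivatives and using (\ref{eq:KVF-cand}a) gives $\nabla^{\nu}h_{\mu\nu}=\tfrac12\nabla_{\mu}h$. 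Thanks to this, the general linearised-Ricci identity collapses to $\lie_{K'}\Ricn{4}_{\mu\nu}=-\tfrac12\nabla^{\gamma}\nabla_{\gamma}h_{\mu\nu}+(\text{curvature})\!\cdot\! h$. Next I would differentiate the hypothesis (\ref{eq:Racz-energy}): by the chain rule, together with $\lie_{K'}\nabla_{\alpha}\mathcal{T}_{(J)}=\nabla_{\alpha}\lie_{K'}\mathcal{T}_{(J)}+(\lie_{K'}\Gamma)\!\cdot\!\mathcal{T}_{(J)}$ and $\lie_{K'}\Gamma^{\sigma}_{\alpha\beta}=\tfrac12 g^{\sigma\rho}(\nabla_{\alpha}h_{\beta\rho}+\nabla_{\beta}h_{\alpha\rho}-\nabla_{\rho}h_{\alpha\beta})$, the source $\lie_{K'}\Ricn{4}_{\mu\nu}$ --- and hence $\nabla^{\gamma}\nabla_{\gamma}h_{\mu\nu}$ --- is linear in $(h,\nabla h,\lie_{K'}\mathcal{T}_{(J)},\nabla\lie_{K'}\mathcal{T}_{(J)})$. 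Applying $\lie_{K'}$ to the matter equation (\ref{eq:Racz-field}) and commuting $\lie_{K'}$ through $\nabla^{\gamma}\nabla_{\gamma}$ --- the commutator again producing only $(h,\nabla h)$ terms, with $\mathcal{T}_{(J)}$ and $\nabla\mathcal{T}_{(J)}$ as coefficients --- shows in the same way that $\nabla^{\gamma}\nabla_{\gamma}(\lie_{K'}\mathcal{T}_{(I)})$ is linear in the same set. Thus $W:=(h_{\mu\nu},\lie_{K'}\mathcal{T}_{(I)})$ solves a closed, homogeneous, linear system of wave equations $\nabla^{\gamma}\nabla_{\gamma}W=\mathbf{A}\,W+\mathbf{B}\!\cdot\!\nabla W$ with smooth coefficients.

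To close the argument I would invoke uniqueness for this linear system: condition (\ref{eq:KVF-condit}) furnishes exactly the vanishing initial data for $W$ required by the relevant uniqueness theorem --- the standard Cauchy uniqueness theorem when $\Sigma$ is spacelike (where $\nabla_{\alpha}\lie_{K'}g_{\mu\nu}|_{\Sigma}=0$ is needed to annihilate the transverse first derivatives of $h_{\mu\nu}$), and the Goursat/characteristic uniqueness of \cite{Rendall} in the case relevant to this paper, namely $\Sigma=\ih\cup\N$ a pair of transversal null surfaces. It then follows that $W\equiv0$ on the domain of dependence, hence $\lie_{K'}g_{\mu\nu}\equiv0$ --- $K'$ is Killing --- and $\lie_{K'}\mathcal{T}_{(I)}\equiv0$; non-triviality of $K'$ is inherited from that of the data. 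In the concrete setting of this paper, the verification that the candidate $\zeta^\mu$ satisfies (\ref{eq:KVF-condit}) is exactly the content of Lemmas~\ref{lem:kvf-zeta} and~\ref{lem:k'k}. I expect the main obstacle to be the bookkeeping of the previous paragraph: one must check that, thanks to the self-de-Donder identity $\nabla^{\nu}h_{\mu\nu}=\tfrac12\nabla_{\mu}h$, no second derivatives of $h_{\mu\nu}$ or of $\lie_{K'}\mathcal{T}_{(I)}$ leak into the right-hand sides, so that the system genuinely has the stated form and the uniqueness theorems apply, and that (\ref{eq:KVF-condit}) indeed supplies vanishing data for the whole of $W$ rather than for its two blocks in isolation.
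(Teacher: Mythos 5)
The paper does not prove this theorem at all: it is quoted verbatim from \cite{Racz1,*Racz2} in Appendix~\ref{app:Racz} and simply invoked, so your decision to cite it is exactly what the paper does. Your accompanying sketch of the converse direction --- deriving the self--de Donder identity $\nabla^{\nu}h_{\mu\nu}=\tfrac12\nabla_{\mu}h$ from (\ref{eq:KVF-cand}a), closing a homogeneous linear hyperbolic system for $(h_{\mu\nu},\lie_{K'}\mathcal{T}_{(I)})$, and concluding by Cauchy/characteristic uniqueness with the vanishing data (\ref{eq:KVF-condit}) --- is a correct outline of the standard Moncrief--R\'acz propagation argument used in the cited references, and correctly identifies Lemmas~\ref{lem:kvf-zeta} and~\ref{lem:k'k} as the place where the paper actually does work, namely verifying the hypotheses for the candidate $\zeta^\mu$.
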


\subsection{Proof of Lemma \ref{lem:kvf-zeta}}\label{app:kvf-zeta}

Let us start with $0$th order first.
In the chosen null frame the symmetric tensor $A_{\mu\nu}$ such that
\begin{equation}
  A_{\mu\nu} := \zeta_{\mu;\nu}+\zeta_{\nu;\mu} \ .
\end{equation}
can be expressed at $\ih\cup\N$ directly by the frame coefficients
$(H,X)$ and their first derivatives. Due to \eqref{eq:N-fc-ih} the
components $A_{33}, A_{34}, A_{31}$ vanish identically. The rest of
components is equal to
\begin{subequations}\label{eq:A}\begin{align}
  A_{44} &= -DH + (\epsilon+\bar{\epsilon})H
            +\bar{\varkappa}X + \varkappa\bar{X} \\
  A_{14} &= DX-\m H + 2\pi H - \varkappa
            +(\bar{\rho}-\epsilon+\bar{\epsilon})X + \sigma\bar{X} \\
  A_{11} &= \m X + 2\bar{a}X + \bar{\lambda}H - \bar{\sigma} \\
  A_{12} &= \m\bar{X} + \mb X - 2\bar{a}\bar{X} -2aX
            + 2\mu H - (\rho-\bar{\rho})
\end{align}\end{subequations}
respectively.

At the horizon all the components of $A_{\mu\nu}$ are identically zero, as $H=X=0$ 
there. Furthermore, the constraint \eqref{eq:KVF-Phi2} implies, that $D\Phi_2=0$ 
at $\ih\cap\N$, thus (due to \eqref{DPhi2-nei}) $\Phi_2$ is Lie-dragged by $\zeta$ 
at the entire horizon. Since the frame components are preserved by the flow of $\bsl$ 
at $\ih$ the field $\bsl^{\mu}=\zeta^{\mu}|_{\ih}$ at the horizon is a symmetry of
$F_{\mu\nu}$.

An analysis of the behavior of $\zeta$ at the transversal surface $\N$ requires 
a little bit more effort. To proceed, let us re-express the components of $A_{\mu\nu}$ listed
in \eqref{eq:A} in terms of the commutators $[\delta,\partial_v]$.
Decomposing $\zeta$ in the null frame and applying the equations \eqref{eq:NP-comm} 
one can express the commutator $[\m,\partial_v]$ in terms of the connection coefficients 
and the derivatives of $(X,H)$. On the other hand, the same commutator is determined 
by the derivatives of the coefficients $(m^A,Z_A)$ with respect to $v$
\begin{equation}\label{eq:A-comm}\begin{split}
  [\m,\partial_v] \
    &= \ \left( (\m\bar{X}) - X\varpi + H\mu - \bar{\rho}
           - (\epsilon-\bar{\epsilon}) \right)\m   
    +  \left( (\m X) + X\bar{\varpi}
           + \bar{\lambda}H - \sigma \right)\mb  
    +  \left( (\m H) + X(\rho-\bar{\rho})
           - \bar{\pi}H + \varkappa \right)\n \\
    &= \ (m^AZ_A)_{,v}\partial_{r} - m^A{}_{,v}\partial_A
\end{split}\end{equation}
Comparing the expressions above with \eqref{eq:A} one realizes, that $A_{\mu\nu}$ 
can be written down in terms of the components of $[\m,\partial_v]$
\begin{subequations}\begin{align}
  A_{44} \ &=\ -\partial_vH + X\overline{[\m,\partial_v]_{\n}}
               + \bar{X}[\m,\partial_v]_{\n}    \\
  A_{14} \ &=\ -X[\m,\partial_v]_{\mb}
               - \bar{X}\overline{[\m,\partial_v]_{\m}}
               - \partial_vX - [\m,\partial_v]_{\n}  \\
  A_{11} \ &= \ [\m,\partial_v]_{\mb}                \\
  A_{21} \ &= \ [\m,\partial_v]_{\m} + \overline{[\m,\partial_v]_{\m}}
\end{align}\end{subequations}
where
\begin{equation}
  [\m,\partial_v] =: [\m,\partial_v]_{\m}\m + [\m,\partial_v]_{\mb}\mb
                   + [\m,\partial_v]_{\n}\n \ .
\end{equation}
Due to the second equality in \eqref{eq:A-comm} tensor $A_{\mu\nu}$ vanishes at $\N$ 
if and only if the derivatives over $v$ of the frame coefficients vanish there.

To verify whether the latter holds let us consider on $\N$ the set $\dot{\chi}_{(I)}$ 
formed by the derivatives over $v$ of the following data: the frame coefficients,
the connection coefficients $\{\mu, \lambda, \pi, a, \rho, \sigma, \epsilon \}$, 
the Maxwell field and Weyl tensor components except $\Psi_0$. At the intersection 
$\ih\cap\N$ all these data vanish. On the other hand, one can build a system of 
the ``evolution equations'' for $\dot{\chi}_{(I)}$ allowing to evolve the initial values
at $\ih\cap\N$ along null geodesics spanning $\N$. Such system can be obtained 
by action of the operator $\partial_v$ on the system (\ref{eq:ODE1}-\ref{eq:ODE3}, 
\ref{eq:n-H-gen}). 
Similarly to the original system (\ref{eq:ODE1}-\ref{eq:ODE3}, \ref{eq:n-H-gen}) it
constitutes a hierarchy of quasilinear ODEs polynomial in the data it involves. 
As the operator $\partial_v$ commutes with $\partial_r,\partial_A$ the resulting
system for $\dot{\chi}_{(I)}$ inherits the properties of the original one
(\ref{eq:ODE1}-\ref{eq:ODE3}, \ref{eq:n-H-gen}). In particular, for known values of the
frame, connection, Maxwell field and Weyl tensor coefficients and known
$\dot{\Psi}_4,\dot{\Phi}_2$ the new system forms the hierarchy of ODEs analogous to 
the hierarchy represented by (\ref{eq:ODE1}-\ref{eq:ODE3}, \ref{eq:n-H-gen}).
In consequence to solve our system one can apply the same algorithm as the one applied 
to (\ref{eq:ODE1}-\ref{eq:ODE3}, \ref{eq:n-H-gen}). If the geometry data is known 
at $\N$ then the system describing the evolution of $\dot{\chi}_{(I)}$ has a unique 
solution for given $(\dot{\Psi}_4,\dot{\Phi}_2)$. In particular, in the case when 
$\dot{\Psi}_4 = \dot{\Phi}_2 = 0$ at $\N$, the equations are linear and homogeneous in
$\dot{\chi}_{(I)}$, thus $\dot{\chi}_{(I)}=0$ is a unique solution. It means that
$\dot{H}=\dot{X}=\dot{m}^A=\dot{Z}_A=0$ and $\dot{\Phi}_0 = \dot{\Phi}_1=\dot{\Phi}_2=0$, 
which implies vanishing of $A_{\mu\nu}$ and $\lie_{\zeta}F_{\mu\nu}$ at $\N$.

To show \eqref{eq:kvf-nabla} for arbitrary $n$, it is sufficient to show vanishing 
of the derivatives in directions transversal to $\ih$ and $\N$ respectively. This 
requirement is equivalent to the following condition
\begin{subequations}\label{eq:kvf-c-m}\begin{align}
  \nr^n \lie_{\zeta}g_{\mu\nu} |_{\ih}\ &=\ 0 \ , &
  \partial_v^n \lie_{\zeta}g_{\mu\nu} |_{\N}\ &=\ 0 \ .
  \tag{\ref{eq:kvf-c-m}}
\end{align}\end{subequations}
As \eqref{eq:kvf-nabla} holds at $\Sigma$ for $n=0$ the above
condition is automatically satisfied at the intersection $\ih\cap\N$.
At $\ih$ the satisfaction of the condition \eqref{eq:kvf-c-m} can be shown by direct 
inspection of the equations analogous to the ones used in the development of the 
metric expansion in section \ref{sec:evac-exp} (and earlier in section \ref{sec:frame-exp}). 
Indeed the only coefficients in the set $\bar{\chi}$ of geometry data\footnote{The set
  consists of all the components of frame, connection, Weyl tensor and
  Maxwell field.}
which are not automatically constant along the horizon null geodesics
are $\Psi_4, \Phi_2$. Their derivatives over $v$ are constant
(exponential)\footnote{I.e. $\Psi_4, \Phi_2$ are exactly of the form
  $\Psi_4 = e^{-2\sgr{\bsl}v}\hat{\Psi}_4$, $\Phi_2 =
  e^{-\sgr{\bsl}v}\hat{\Phi}_2$ where $\hat{\Psi}_4$, $\hat{\Phi}_2$
  are constant along the generators of $\ih$.}
along null geodesics for $\sgr{\bsl}\neq 0$
($\sgr{\bsl}=0$) respectively due to (\ref{eq:DPsi4-1}, \ref{DPhi2-nei}) (as
all the data on the right-hand side except $\Psi_4,\Phi_2$ is constant
along integral curves of $\bsl$). Whence, they vanish as \eqref{eq:KVF-constr} 
hold in particular at $\ih\cap\N$.

Provided all the derivatives $\partial_v\nr^k\bar{\chi}_{(\mathcal{I})}$ up to $n$th 
order vanish, the derivatives of the next order of all the components of 
$\bar{\chi}_{(\mathcal{I})}$ except $\Psi_4,\Phi_2$ also vanish, as they are 
determined by $\partial^k_v\bar{\chi}_{(\mathcal{I})}$ (where $k\in\{0,\cdots,n\}$) 
via the respective derivatives (namely $\partial_v\nr^n$) of equations
(\ref{eq:ODE1}-\ref{eq:nD-Phi0}). The similar derivatives of
(\ref{eq:DPsi4-1}, \ref{DPhi2-nei}) imply then the constancy
(exponentiality)\footnote{In the same strict sense as for $\Psi_4,\Phi_2$}
of $\partial_v\nr^{n+1}\Psi_4, \partial_v\nr^{n+1}\Phi_2$. They satisfy
then $\partial_v\nr^{n+1}\Psi_4|_{\ih} = \partial_v\nr^{n+1}\Phi_2|_{\ih}
= 0$ due to satisfaction of these conditions at $\ih\cap\N$. Finally
by induction $\partial_v\nr^n\bar{\chi}_{(\mathcal{I})}=0$ for arbitrary $n$.

At the horizon the $n$th order transversal derivatives of $A_{\mu\nu}$ can be 
expressed as a functionals homogeneous in the elements of $\dot{\chi}$ and its 
transversal derivatives up to $n$th order. It was shown above that these components 
vanish at $\ih$. Whence
\begin{equation}
  \partial_v\nr^nA_{\mu\nu}|_{\ih}\ = \ 0 \ .
\end{equation}
The condition \eqref{eq:kvf-c-m} holds then at $\ih$, as it is satisfied at $\ih\cap\N$.

To obtain an analogous result at $\N$ we can apply the method used previously to 
verify the condition $A_{\mu\nu}|_{\N} = 0$. Now instead of initial value problem 
for $\dot{\chi}_{(\mathcal{I})}$ we need to consider IVP for 
$\partial^n_v\chi_{(\mathcal{I})}$ (where $\chi := \bar{\chi}\setminus\{\Psi_0\}$) 
analogous to it. This IVP has however the same properties as the one for 
$\dot{\chi}_{(\mathcal{I})}$. This statement completes the proof.

\subsection{Proof of Lemma \ref{lem:k'k}}\label{app:k'k}

The equation \eqref{eq:k'k} is already satisfied for $n=0$
by \eqref{eq:KVF-cand-lem}. Suppose now it is satisfied up to $n$th
order. Lemma \ref{lem:kvf-zeta} implies then, that
\begin{equation}
  \nabla^{(n)}_{\alpha_1,\cdots,\alpha_n}
  (  \nabla^{\mu}\nabla_{\mu}\zeta_{\nu}
   + \Ricn{4}_{\nu}{}^{\mu}\zeta_{\mu} )\ =\ 0 \
\end{equation}
at $\Sigma$. In consequence, due to \eqref{eq:KVF-cand} and the inductive assumption 
the following holds
\begin{equation}
  \nabla^{(n)}_{\alpha_1,\cdots,\alpha_n}
  \nabla^{\mu}\nabla_{\mu}(K'{}_{\nu}-\zeta_{\nu})|_{\Sigma}\ =\ 0 \ .
\end{equation}
Let us consider above equation at $\ih$ first. Its contraction (in all the $\alpha_i$) 
with $\bn^{\mu}$ produces the following condition
\begin{equation}
  (\lie_{\bn})^n \lie_{\bsl} \lie_{\bn} (K'{}_{\mu}-\zeta_{\mu})
  + (\lie_{\bn})^n \lie_{\bn} \lie_{\bsl} (K'{}_{\mu}-\zeta_{\mu})
  \ =\ \mathcal{F} (\lie_{\bn})^n \lie_{\bn} (K'{}_{\mu}-\zeta_{\mu}) \ ,
\end{equation}
where we decomposed $g^{\mu\nu}$ in $\nabla^{\mu}\nabla_{\mu}
= g^{\mu\nu}\nabla_{\mu}\nabla_{\nu}$ via \eqref{eq:NP-metric}
and applied the inductive assumption to express covariant derivatives
as Lie ones. $\mathcal{F}$ is a functional of metric derivatives up to
$n+2$nd order.

Due to Jacobi identity and $\lie_{\ell}\bn=0$ this formula can be
re-expressed in the following way
\begin{equation}
  \lie_{\bsl} (\lie_{\bn})^{n+1} (K'{}_{\mu}-\zeta_{\mu})\
  =\ \fracs{1}{2} \mathcal{F} (\lie_{\bn})^{n+1}
     (K'{}_{\mu}-\zeta_{\mu}) \ .
\end{equation}
It constitutes a linear homogeneous ODE for $(\lie_{\bn})^{n+1}
K'_{\mu}$. As at $\ih\cap\N$ $(\lie_{\bn})^{n+1}
(K'{}_{\mu}-\zeta_{\mu}) = 0$ its only solution at $\ih$ is
\begin{equation}
  (\lie_{\bn})^{n+1} (K'{}_{\mu}-\zeta_{\mu}) |_{\ih}\ =\ 0 \ .
\end{equation}
The similar initial value problem can be formulated for
$(\lie_{\ell})^{n+1}(K'{}_{\mu}-\zeta_{\mu})$ at $\N$, whence
\begin{equation}
  (\lie_{\ell})^{n+1} (K'{}_{\mu}-\zeta_{\mu}) |_{\N}\ =\ 0 \ .
\end{equation}
As a consequence, provided \eqref{eq:k'k} is satisfied for $k\in\{0,\cdots,n\}$,
it holds also for $n+1$. Whence, by induction \eqref{eq:k'k} holds for arbitrary 
$n\in\mathbb{N}$.

\bibliography{LP-neigh}{}
\bibliographystyle{apsrev4-1}

\end{document}